\definecolor{red}{rgb}{0.7,0.15,0.15}
\definecolor{green}{rgb}{0,0.5,0}
\definecolor{blue}{rgb}{0,0,0.7}
\makeatletter \@addtoreset{equation}{section}
\newtheorem{theorem}{Theorem}[section]
\newtheorem{assumption}[theorem]{Assumption}
\newtheorem{example}[theorem]{Example}
\newtheorem{lemma}[theorem]{Lemma}
\newtheorem{proposition}[theorem]{Proposition}
\newtheorem{definition}[theorem]{Definition}
\newtheorem{remark}[theorem]{Remark}
\newcommand{\uset}[3][0ex]{%
  \mathrel{\mathop{#3}\limits_{
    \vbox to#1{\kern-2\ex@
    \hbox{$\scriptstyle#2$}\vss}}}}
\def \E{\mathbb{E}}
\def \F{\mathbb{F}}
\def \N{\mathbb{N}}
\def \P{\mathbb{P}}
\def \R{\mathbb{R}}
\def\Ac{\mathcal{A}}
\def\Bc{\mathcal{B}}
\def\Fc{\mathcal{F}}
\def\Sc{\mathcal{S}}
\def\Tc{\mathcal{T}}
\def\Zc{\mathcal{Z}}
\def\Cbf{\mathbf{C}}
\def\Cfrak{\mathfrak{C}}
\def\be{\begin{eqnarray}}
\def\ee{\end{eqnarray}}
\def\b*{\begin{eqnarray*}}
\def\e*{\end{eqnarray*}}
\def\1{{\bf 1}}
\def\eps{\varepsilon}
\newcommand{\tnorm}[1]{\left\vert\kern-0.25ex\left\vert\kern-0.25ex\left\vert #1 
    \right\vert\kern-0.25ex\right\vert\kern-0.25ex\right\vert}
\begin{document}

\title{Is there a Golden Parachute\\ in Sannikov's principal--agent problem?\footnote{We are grateful to Yuliy Sannikov for his insightful comments on the first version of this paper. This work benefited from support of the ANR project PACMAN ANR--16--CE05--0027.}}

\author{Dylan Possama\"{\i}\footnote{ETH Z\"urich, department of Mathematics, Switzerland, dylan.possamai@math.ethz.ch} \and Nizar Touzi\footnote{CMAP, \'Ecole Polytechnique, 91128 Palaiseau Cedex, France, nizar.touzi@polytechnique.edu. This author is also grateful for the financial support from the Chaires FiME--FDD and Financial Risks of the Louis Bachelier Institute.}}

\date{\today}

\maketitle

\abstract{
This paper provides a complete review of the continuous-time optimal contracting problem introduced by \citeauthor*{sannikov2008continuous} \cite{sannikov2008continuous}, in the extended context allowing for possibly different discount rates for both parties. The agent's problem is to seek for optimal effort, given the compensation scheme proposed by the principal over a random horizon. Then, given the optimal agent's response, the principal determines the best compensation scheme in terms of running payment, retirement, and lump-sum payment at retirement.

A Golden Parachute is a situation where the agent ceases any effort at some positive stopping time, and receives a payment afterwards, possibly under the form of a lump sum payment, or of a continuous stream of payments. We show that a Golden Parachute only exists in certain specific circumstances. This is in contrast with the results claimed by \citeauthor*{sannikov2008continuous} \cite{sannikov2008continuous}, where the only requirement is a positive agent's marginal cost of effort at zero. Namely, we show that there is no Golden Parachute if this parameter is too large. Similarly, in the context of a concave marginal utility, there is no Golden Parachute if the agent's utility function has a too negative curvature at zero.

In the general case, we prove that an agent with positive reservation utility is either never retired by the principal, or retired above some given threshold (as in \citeauthor*{sannikov2008continuous}'s solution). We show that different discount factors induce a \emph{face-lifted utility function}, which allows to reduce the analysis to a setting similar to the equal--discount rates one. Finally, we also confirm that an agent with small reservation utility does have an informational rent, meaning that the principal optimally offers him a contract with strictly higher utility than his participation value.
\medskip

	\noindent {\bf Key words:} continuous-time principal--agent, optimal control and stopping, face-lifting.
}

\section{Introduction}

Principal--agent problems naturally stem from questions of optimal contracting between two parties, a principal (`she') and an agent (`he'), when the agent's effort cannot be observed or contracted upon. Mathematically, they are formulated as a Stackelberg non--zero sum game, and can also be identified to bi-level optimisation problems in the operations research literature. The number of articles related to this topic is staggering, mainly due to the wide spectrum of concrete problems where this theory is able to provide relevant results, for instance for moral hazard problem in microeconomics with applications to corporate governance, portfolio management, and many other areas of economics and finance. 

\medskip
    The first, and seminal, paper on principal--agent problems in continuous-time is by \citeauthor*{holmstrom1987aggregation} \cite{holmstrom1987aggregation}, who show that the optimal contract is linear in the output process, in a finite horizon setting with CARA utility functions for both parties, and when the agent's effort impacts solely the drift of the output process. This paper is the first to highlight that optimal contracting problems tend to be easier to address in continuous-time, an observation which has been confirmed by the large continuous-time literature in this area. \citeauthor*{holmstrom1987aggregation}'s work was extended by \citeauthor*{schattler1993first} \cite{schattler1993first}, \citeauthor*{sung1995linearity} \cite{sung1995linearity,sung1997corporate}, \citeauthor*{muller1998first} \cite{muller1998first,muller2000asymptotic}, and \citeauthor*{hellwig2002discrete} \cite{hellwig2002discrete,hellwig2007role}. While the aforementioned papers use continuous-time extensions of the celebrated first-order approach from the contract theory literature in static cases, see for instance \citeauthor*{rogerson1985first} \cite{rogerson1985first}, the papers by \citeauthor*{williams2008dynamic} \cite{williams2008dynamic,williams2011persistent,williams2015solvable} and \citeauthor*{cvitanic2009optimal} \cite{cvitanic2006optimal,cvitanic2008principal,cvitanic2009optimal} use the stochastic maximum principle and forward--backward stochastic differential equations to characterise the optimal compensation for more general utility functions, see also the excellent monograph by \citeauthor*{cvitanic2012contract} \cite{cvitanic2012contract}.\footnote{Other early continuous-time contract theory models were proposed by \citeauthor*{adrian2009disagreement} \cite{adrian2009disagreement}, \citeauthor*{biais2007dynamic} \cite{biais2007dynamic}, \citeauthor*{biais2010large} \cite{biais2010large}, \citeauthor*{biais2013dynamic} \cite{biais2013dynamic}, \citeauthor*{capponi2015dynamic} \cite{capponi2015dynamic}, \citeauthor*{demarzo2006optimal} \cite{demarzo2006optimal}, \citeauthor*{demarzo2012dynamic} \cite{demarzo2012dynamic}, \citeauthor*{fong2009evaluating} \cite{fong2009evaluating}, \citeauthor*{he2009optimal} \cite{he2009optimal}, \citeauthor*{hoffmann2010reward} \cite{hoffmann2010reward}, \citeauthor*{ju2012optimal} \cite{ju2012optimal}, \citeauthor*{keiber2003overconfidence} \cite{keiber2003overconfidence}, \citeauthor*{leung2014continuous} \cite{leung2014continuous}, \citeauthor*{mirrlees2013strategies} \cite{mirrlees2013strategies}, \citeauthor*{myerson2008leadership} \cite{myerson2008leadership}, \citeauthor*{ou2003optimal} \cite{ou2003optimal}, \citeauthor*{pages2012bank} \cite{pages2012bank}, \citeauthor*{pages2014mathematical} \cite{pages2014mathematical}, \citeauthor*{piskorski2010optimal} \cite{piskorski2010optimal}, \citeauthor*{piskorski2016optimal} \cite{piskorski2016optimal}, \citeauthor*{sannikov2007agency} \cite{sannikov2007agency}, \citeauthor*{schroder2010continuous} \cite{schroder2010continuous}, \citeauthor*{van2010dynamic} \cite{van2010dynamic}, \citeauthor*{westerfield2006optimal} \cite{westerfield2006optimal}, \citeauthor*{zhang2009dynamic} \cite{zhang2009dynamic}, \citeauthor*{zhou2006principal} \cite{zhou2006principal}, or \citeauthor*{zhu2011sticky} \cite{zhu2011sticky}.}

\medskip
The seminal work of \citeauthor*{sannikov2008continuous} \cite{sannikov2008continuous}, see also \cite{sannikov2013contracts}, represents a genuine breakthrough in this vast literature from various perspectives. First, from the methodological viewpoint, \citeauthor*{sannikov2008continuous} introduced the idea to focus on the dynamic continuation value of the agent as a state variable for the principal's problem. Although this idea was already acknowledged throughout the discrete-time literature on this problem, an illuminating example being \citeauthor*{spear1987repeated} \cite{spear1987repeated}, its systematic implementation in continuous-time offers an elegant solution approach by means of a representation result of the dynamic value function of the agent. Second, the infinite horizon setting considered by \citeauthor*{sannikov2008continuous} revealed remarkable economic implications. Indeed, the main conclusions are that the principal optimally retires the agent, offering him a Golden Parachute, that is to say a lifetime constant continuous stream of consumption, when his continuation utility reaches a sufficiently high level, and that an agent with small reservation utility possesses an informational rent, in the sense that he is offered a contract with strictly higher value. 

\medskip
    The main objective of our paper is twofold. First, we revisit \citeauthor*{sannikov2008continuous}'s seminal work, though putting a stronger weight on technical rigour, which is unfortunately lacking in some key parts of \cite{sannikov2008continuous}. We would like to emphasise that this should not be seen in any case as a reason to underestimate the importance of this paper, given the groundbreaking novelties recalled above. In contrast, our first aim is to try and contribute even more to the success of \cite{sannikov2008continuous} by making it more accessible to a wider community of mathematicians and economists, whose overall understanding of the model may be hindered by the technical gaps in \cite{sannikov2008continuous}. Notice that we are not the first to try and obtain rigorously the results claimed in \cite{sannikov2008continuous}. For instance, \citeauthor*{strulovici2015smoothness} \cite[Section 4.3]{strulovici2015smoothness} offer a more rigorous take on the existence of optimal contracts in the model. However, the authors take for granted the fact that \cite{sannikov2008continuous} proves that the HJB equation for the principal's problem has a smooth solution, while we will argue that the proof has important gaps. Similarly, the unpublished PhD thesis of \citeauthor*{choi2014sannikov} \cite{choi2014sannikov} aims at putting the problem on rigorous foundations. Nonetheless, existence of optimal contracts is not addressed there, and the results rely on the assumption that it is never optimal to retire the agent temporarily, while our approach actually proves that this is the case. We also would like to refer to the recent work of \citeauthor*{decamps2019two} \cite{decamps2019two}, where the authors study a related, but different, contracting problem, and where again the heart of the analysis is technical clarity: this should be an additional illustration that proving rigorously results in this literature is a challenging task. 
    
    \medskip
    Our second goal is to prove that our analysis extends beyond the case where the principal and the agent have the same discount rates. It is an important feature, as most models\footnote{Exceptions are the recent work by \citeauthor*{hajjej2019optimal} \cite{hajjej2019optimal}, where the agent is risk-averse and more impatient than the principal. However, they do not obtain clear results saying that the hypotheses of their verification result \cite[Theorem 4.3]{hajjej2019optimal} can be verified in practice, as well as the work of \citeauthor*{lin2020random} \cite{lin2020random}, but there the emphasis is more on obtaining general methods to attack infinite horizon moral hazard problems.} in the discrete- or continuous-time literature either allow for risk-averse agents who are as patient as the principal, as in \citeauthor*{sannikov2008continuous} \cite{sannikov2008continuous}, \citeauthor*{fong2009evaluating} \cite{fong2009evaluating}, \citeauthor*{myerson2008leadership} \cite{myerson2008leadership}, and \citeauthor*{hajjej2017optimal} \cite{hajjej2017optimal}, or for more impatient, but risk-neutral agents, as in \citeauthor*{demarzo2006optimal} \cite{demarzo2006optimal}, \citeauthor*{biais2007dynamic} \cite{biais2007dynamic}, \citeauthor*{biais2010large} \cite{biais2010large}, \citeauthor*{biais2013dynamic} \cite{biais2013dynamic}, \citeauthor*{he2009optimal} \cite{he2009optimal}, \citeauthor*{piskorski2010optimal} \cite{piskorski2010optimal}, \citeauthor*{piskorski2016optimal} \cite{piskorski2016optimal}, \citeauthor*{demarzo2012dynamic} \cite{demarzo2012dynamic}, \citeauthor*{pages2014mathematical} \cite{pages2014mathematical}, or \citeauthor*{williams2015solvable} \cite{williams2015solvable}. Even more surprisingly, our analysis can also accommodate the case where the principal is actually strictly more impatient than the agent.
%
    As far as we know, our paper is the first offering such a comprehensive analysis. 
    
\medskip
Our main findings are the following. First, in contrast with the overall message from \cite{sannikov2008continuous}, we show that a Golden Parachute may fail to exist in some specific situations. It never happens if the agent's marginal cost of effort at zero is zero, or is sufficiently large. And it never happens if the agent's marginal utility is also concave, and his utility function has sufficiently large negative curvature at zero, with a level depending on the marginal cost of effort at zero. We also highlight the fact that there are three regimes, depending on whether the agent is strictly more impatient than the principal or not:
\begin{itemize}
\item[$(i)$] when the principal is much more impatient than the agent (the actual bound depends on the level of risk-aversion of the agent), the problem degenerates, optimal contracts cease to exist, and the principal can achieve her first-best value with appropriately defined sequences of incentive-compatible contracts;

\item[$(ii)$] when the principal is still more impatient than the agent, but above the aforementioned threshold, and the agent's marginal cost of effort at zero is positive, a Golden Parachute is likely to exist, in the sense that we know that for a large enough continuation utility of the agent, the value of the principal coincides with the value she would obtain by offering a Golden Parachute. Then, the only way a Golden Parachute could fail to exist is when these two functions are equal everywhere, so that the contract does not even start. We also provide sufficient conditions, combining the curvature of the agent's utility function at zero and his marginal cost of effort at zero, for an informational rent to exist;

\item[$(iii)$] when the agent is strictly more impatient than the principal, we do not know whether the two functions mentioned in $(ii)$ coincide or not, and some numerical evidence seems to indicate that they could fail to do so in general. Besides, we prove that in this case, the value of the principal is non-increasing with respect to the continuation utility of the agent, meaning that an informational rent cannot exist.
\end{itemize}
We emphasise that our rigorous presentation involves advanced tools from stochastic control theory and partial differential equations. In particular, the justification of the solution claimed by \citeauthor*{sannikov2008continuous} in \cite{sannikov2008continuous} requires an in-depth analysis of the first-best solution, combined with the theory of viscosity solutions, and it is unclear to us how the proof could be significantly simplified. 

\medskip
Finally, from the methodological and theoretical point of view, we have highlighted a novel phenomenon in (properly renormalised) moral hazard problems with risk-aversion and different discount rates, where the principal's problem has an optimal stopping component, in the sense that she can terminate the contract. Indeed, we prove that the problem could actually be treated as in the case with similar discount rates, provided that the lump sum payment in the principal's problem be evaluated through an appropriate face-lifted version of the agent's inverse utility function. The last function is obtained as the value function of the best termination time of the contract when the agent ceases any effort. We believe that this finding in the context of the present contracting problem applies to a wider class of moral hazard problems with early retirement possibilities.\footnote{The 'face-lifting' phenomenon corresponds to the so-called boundary layer effect in singular optimal control problems, and appeared naturally in various pricing problems in finance, either with hedging constraints or market frictions, see for instance \citeauthor*{broadie1998optimal} \cite{broadie1998optimal}, \citeauthor*{bouchard2000explicit} \cite{bouchard2000explicit}, \citeauthor*{chassagneux2015terminal} \cite{chassagneux2015terminal}, \citeauthor*{guasoni2008consistent} \cite{guasoni2008consistent}, \citeauthor*{soner2002dynamic} \cite{soner2000superreplication,soner2002dynamic,soner2003problem,soner2007hedging}, \citeauthor*{cheridito2005multi} \cite{cheridito2005multi}, and \citeauthor*{schmock2002valuation} \cite{schmock2002valuation}, or for utility maximisation problems, see \citeauthor*{larsen2016facelifting} \cite{larsen2016facelifting}. However all these references consider either `pure' optimal control or stochastic target problems, while in our context, the face-lifting phenomenon occurs because of an optimal stopping problem, and is therefore of a different nature.}

\medskip
The paper is organised as follows. \Cref{sec:formulation} provides a rigorous formulation of the continuous-time contracting problem, with a clear description of the set of contracts, and introduces the face-lifted utility $\overline F$. Our main results are given in \Cref{sect:mainresults}. As such, \Cref{sec:complete} gives all our results on the first- and second-best problems, while \Cref{sec:nogp} provides some conditions under which no Golden Parachute can exist. Next, \Cref{sec:num} presents our numerical illustrations, and \Cref{sec:gap} discusses the gaps in \cite{sannikov2008continuous}. The subsequent sections are dedicated to the proofs of our main results: \Cref{sec:solvfb} concentrates on the first-best problem, while \Cref{sect:reduction} uses the result of \citeauthor*{lin2020random} \cite{lin2020random}, itself an extension of earlier results by \citeauthor*{cvitanic2017moral} \cite{cvitanic2017moral,cvitanic2018dynamic}, which justify rigorously \citeauthor*{sannikov2008continuous}'s \cite{sannikov2008continuous} remarkable reduction of the principal's Stackelberg game problem into a standard control-and-stopping problem. Such a reduction opens the door for the use of standard tools of stochastic control theory. In particular, we treat the case of a very impatient principal in \Cref{sec:SBproof}, which can be addressed directly by exhibiting a sequence of contracts inducing a degenerate situation where both parties achieve as large a payment as possible. The alternative case of reasonably impatient principal is analysed by means of the corresponding dynamic programming equation introduced in \Cref{sec:DPE}, where we also provide a verification result following the standard theory. In \Cref{sect:solution}, we provide a rigorous analysis of the dynamic programming equation, and we isolate a set of conditions which guarantee that the solution is of the form claimed in \cite{sannikov2008continuous}. 

\section{Sannikov's contracting problem}
\label{sec:formulation}

\subsection{Output process, agent's effort, and contract}

This section reports our understanding of the continuous time contracting model in \citeauthor*{sannikov2008continuous} \cite{sannikov2008continuous}.  Let $(\Omega,\Fc,\P^0)$ be a probability space carrying a one-dimensional $\P^0$--Brownian motion $W^0$. For fixed parameters $\sigma>0$ and $X_0\in\R$, the \emph{output process} is defined by
 \[
 X_t:=
 X_0+\sigma W^0_t,\; t\geq 0.
 \]
We denote by $\F$ the $\P^0$-augmentation of the natural filtration of $X$ (or equivalently, of $W^0$), which is known to satisfy the usual conditions. We next introduce distributions $\P^\alpha$ of the output process under effort $\alpha$, so as to induce the dynamics $\mathrm{d}X_t=\alpha_t\mathrm{d}t+\sigma \mathrm{d}W^\alpha_t$, for some $\P^\alpha$--Brownian motion $W^\alpha$. This is naturally accomplished by means of the following argument based on the Girsanov transformation. Let $\Ac$ be the collection of all $\F$-predictable processes $\alpha$ with values in a compact subset $A$ of $[0,\infty)$, containing $0$. For all $\alpha\in\Ac$, we may introduce an equivalent probability measure $\P^\alpha$ so that the process $W^\alpha:=W^0-\int_0^\cdot \frac{\alpha_s}{\sigma}\mathrm{d}s$ is a $\P^\alpha$--Brownian motion, and the process $X$ can be written in terms of $W^\alpha$ as
\[
 X_t=
 X_0+\int_0^t\alpha_s\mathrm{d}s+\sigma W^\alpha_t,\; t\ge 0.
 \]
Any $\alpha\in\Ac$ is called an \emph{effort} process, and is interpreted as an action exerted in order to affect the distribution of the output process from $\P^0$ to $\P^\alpha$. A \emph{contract} is a triple $\Cbf:=(\tau,\pi,\xi)$, where $\tau\in\Tc$, the set of all $\F$--stopping times, $\xi$ is a non-negative $\Fc_\tau$-measurable random variable, and $\pi\in\Pi$, the set of $\F$-predictable non-negative processes. Here, $\tau$ represents a retirement time, $\pi$ is a process of continuous payment rate until retirement, and $\xi$ is a lump-sum payment at retirement, which may be interpreted as a Golden Parachute in the terminology of \citeauthor*{sannikov2008continuous} \cite{sannikov2008continuous}, see \Cref{def:GP} below.

\medskip
We shall introduce later in \Cref{sect:contract} the collection $\mathfrak{C}^0$ of admissible contracts, by imposing some integrability requirements. These contracts allow to formulate the contracting problem which sets the terms of the delegation by the principal (she) of the output process to the agent (he). Namely, the principal seeks to design the optimal contract so as to guarantee that the agent best serves her objectives, while optimising his own interest.

\subsection{The agent's problem}

The agent preferences are defined by 
\begin{itemize}
\item a utility function $u:[0,\infty)\longrightarrow [0,\infty)$ which is increasing, strictly concave, twice continuously differentiable on $(0,\infty)$, satisfies $u(0)=0$ together with the (one-sided) Inada condition $\lim_{x\to \infty}u^\prime(x)=0$ and the growth condition
\begin{equation}\label{cond:u}
 c_0(-1+\pi^{\frac1\gamma}\big) \le u(\pi) \le c_1\big(1+\pi^{\frac1\gamma}\big),
 \;\pi\ge 0,
 \;\mbox{for some}\;
  (c_0,c_1)\in(0,\infty)^2,\;
  \text{\rm and some}\;\gamma>1,
  \end{equation}
which implies that $u(\infty)=\infty$, and $u^{-1}(y)\leq C\big(1+y^{\gamma}\big)$, for any $(y,\pi)\in[0,\infty)$, and for some $C>0$;\footnote{It should be clear that our specification for $u$ is tailored to a power utility function of the form $u(\pi):=\pi^{1/\gamma}/\gamma$, $\pi\geq 0$, for some $\gamma>1$, though we strictly speaking only require such a behaviour at $\infty$.}
\item a cost function $h:[0,\infty)\longrightarrow [0,\infty)$, increasing, strictly convex, continuously differentiable, with $h(0)=0$;
\item a fixed discount rate $r>0$.
\end{itemize}
Given a contract $\Cbf:=(\tau,\pi,\xi)\in\Cfrak^0$ and $\alpha\in \Ac$, the utility obtained by the agent is defined by the problem
 \begin{equation}\label{Agent0}
 V^{\rm A}(\Cbf)
 :=
 \sup_{\alpha\in\Ac} J^{\rm A}(\Cbf,\alpha),
 \;\mbox{where}\;
 J^{\rm A}(\Cbf,\alpha)
 :=
 \E^{\P^\alpha}\bigg[\mathrm{e}^{-r\tau}u(\xi)
                                 +\int_0^\tau r\mathrm{e}^{-rs}\big(u(\pi_s)-h(\alpha_s)\big)\mathrm{d}s
                       \bigg].
 \end{equation}
As $u\ge 0$, and $A$ is bounded, notice that $J^{\rm A}(\Cbf,\alpha)\in\R\cup\{\infty\}$ is well-defined. Moreover, as the agent is allowed to choose zero effort, inducing $J^{\rm A}(\Cbf,0)\ge 0$, it follows that $V^{\rm A}(\Cbf)\ge 0$ for any proposed contract $\Cbf\in\Cfrak^0$. We denote by 
 \[
 \Ac^\star(\Cbf):=\big\{\alpha\in\Ac: V^{\rm A}(\Cbf)=J^{\rm A}(\Cbf,\alpha)\big\},
 \] 
the (possibly empty) set of all optimal responses of the agent. The agent only accepts contracts which provide him with a utility above a fixed threshold $u(R)$, where $R\geq 0$, called participation level. Thus he considers contracts in the subset
 \[
 \Cfrak^0_R
 :=
 \big\{ \Cbf\in\Cfrak^0: V^{\rm A}(\Cbf)\ge u(R) \big\}.
 \]
Observe that the final lump-sum utility for the agent can be written as $u(\xi)=\int_\tau^\infty r\mathrm{e}^{-rt}u(\xi)\mathrm{d}t$, so that it can be equivalently implemented by the payment of the lifetime consumption $\xi$ after retirement at time $\tau$. We shall comment further on this normalisation in \Cref{rem:normalisation} below.

\subsection{The principal's problem}

The principal is risk-neutral with the objective of maximising her overall revenue induced by the agent's effort
 \[
 J^{\rm P}(\Cbf,\alpha)
 :=
 \E^{\P^\alpha}\bigg[-\mathrm{e}^{-\rho\tau}\xi
                                +\int_0^\tau \rho\mathrm{e}^{-\rho s}(\mathrm{d}X_s-\pi_s\mathrm{d}s)
                       \bigg],
 \]
where we extend \cite{sannikov2008continuous}, allowing the principal to have a possibly different discount rate $\rho>0$ from that of the agent. Observe that for any $\alpha\in\Ac$, $\E^{\P^\alpha}\big[\int_0^\tau\mathrm{e}^{-2\rho s}\mathrm{d}s\big]\le\int_0^\infty \mathrm{e}^{-2\rho s}\mathrm{d}s=\frac1{2\rho}<\infty$. Then, by standard It\^o integration theory, we have $\E^{\P^\alpha}\big[\int_0^\tau\mathrm{e}^{-\rho s}\sigma\mathrm{d}W^\alpha_s\big]=0$ for all stopping time $\tau\in\Tc$, implying that 
\[
J^{\rm P}(\Cbf,\alpha)=\E^{\P^\alpha}\bigg[-\mathrm{e}^{-\rho\tau}\xi+\int_0^\tau \rho\mathrm{e}^{-\rho s}(\alpha_s-\pi_s)\mathrm{d}s\bigg],
\]
which is well-defined in $\{-\infty\}\cup\R$, due to the boundedness of $A$ and the non-negativity of $\xi$ and $\pi$. We also notice again that the lump-sum payment $\xi$ at time $\tau$ can be rewritten as $\xi=\int_\tau^\infty \rho\mathrm{e}^{-\rho t}\xi \mathrm{d}t$, and so it can be implemented by the lifetime payment at rate $\xi$ after $\tau$, in agreement with the corresponding interpretation in the agent's problem.

\medskip
The principal's problem is defined as follows: anticipating the agent's optimal response, she chooses the contract which best serves her objective under the participation constraint
 \begin{equation}\label{Principal0}
 V^{\rm P}
 :=
 \sup_{\Cbf\in\Cfrak^0_{\text{\fontsize{4}{4}\selectfont $R$}}}
 \sup_{\alpha\in\Ac^\star(\Cbf)}
 J^{\rm P}(\Cbf,\alpha).
 \end{equation}
 
\subsection{Reformulation and face-lifted utility}
\label{sect:reformulation}

We next re-write our contracting problem equivalently by using the opposite of the inverse of the agent's utility
 \[
 F:=-u^{-1}\mathbf{1}_{[0,u(\infty))}-\infty\mathbf{1}_{\R_{\text{\fontsize{4}{4}\selectfont $+$}}\setminus[0,u(\infty))}.
 \]
Then, denoting $\zeta:=u(\xi)$ and $\eta:=u(\pi)$, the criterion of the agent becomes
 \begin{equation}\label{Agent}
 J^{\rm A}(\mathbf{C},\alpha)
 =
 \E^{\P^\alpha} \bigg[ \mathrm{e}^{-r\tau}\zeta + \int_0^\tau r\mathrm{e}^{-rt}\big(\eta_t-h(\alpha_t)\big)\mathrm{d}t
                        \bigg],\; (\mathbf{C},\alpha)\in\mathfrak{C}^0\times\Ac,
 \end{equation}
where we abuse notations and indifferently identify as a contract the triplet $(\tau,\pi,\xi)$, or the triplet $(\tau, \eta,\zeta)$. We will use this identification implicitly throughout the paper. As for the principal, we have
 \[
 J^{\rm P}(\Cbf,\alpha)
 =
 \E^{\P^\alpha}\bigg[\mathrm{e}^{-\rho\tau}F(\zeta)
                                +\int_0^\tau \rho\mathrm{e}^{-\rho t}\big(\alpha_t+F(\eta_t)\big)\mathrm{d}t\bigg],\; (\mathbf{C},\alpha)\in\mathfrak{C}^0_R\times\Ac.
 \]
Here, the (negative) reward of the principal by stopping at $\tau$ is $F(\zeta)$. Our first result shows that in general, the principal may be able to improve her reward by not ending the contract at some time $\tau$ with a lump-sum payment to the agent, but by instead discouraging the agent from exerting any efforts (which can be understood as an alternative way of ending the contract), and offering him a continuous consumption. The improved (or face-lifted, hereafter) reward is naturally defined by means of the following deterministic control problem
 \begin{equation}\label{Fbar}
 \overline{F}(y_0)
 :=
 \sup_{p\in\Bc_{\text{\fontsize{4}{4}\selectfont $\R_+$}}}  \sup_{ T\in[0, T_{\text{\fontsize{4}{4}\selectfont $0$}}^{\text{\fontsize{4}{4}\selectfont $y_0,p$}}]}\bigg\{\mathrm{e}^{-\rho T}F\big(y^{y_{\text{\fontsize{4}{4}\selectfont $0$}},p}(T)\big)+\int_0^{T} \rho \mathrm{e}^{-\rho t} F\big(p(t)\big)\mathrm{d}t\bigg\},\; y_0\geq 0,
 \end{equation}
 where $\Bc_{\R_{\text{\fontsize{4}{4}\selectfont $+$}}}$ is the set of Borel-measurable maps from $\R_+$ to $\R_+$, and for all $(y_0,p)\in\R_+\times\Bc_{\R_{\text{\fontsize{4}{4}\selectfont $+$}}}$
 \[
 T_0^{y_{\text{\fontsize{4}{4}\selectfont $0$}},p}:=\inf\big\{t\ge 0:y^{y_{\text{\fontsize{4}{4}\selectfont $0$}},p}(t)\le 0\big\}
 \in [0,\infty],
 \]
and the state process $y^{y_{\text{\fontsize{4}{4}\selectfont $0$}},p}$ is defined by the controlled first-order ODE
 \begin{equation}\label{eq:defydet}
 y^{y_{\text{\fontsize{4}{4}\selectfont $0$}},p}(0)=y_0,\; \dot{y}^{y_{\text{\fontsize{4}{4}\selectfont $0$}},p}(t)=r\big(y^{y_{\text{\fontsize{4}{4}\selectfont $0$}},p}(t)-p(t)\big),\; t> 0.
 \end{equation}
To better understand the expression \eqref{Fbar} for the improved payment, notice that for any $p\in\Bc_{\R_{\text{\fontsize{4}{4}\selectfont $+$}}}$, direct integration of this linear ODE leads to
 \[
 y_0
 =
 \mathrm{e}^{-rT}y^{y_{\text{\fontsize{4}{4}\selectfont $0$}},p}(T)+\int_0^{T}\mathrm{e}^{-rt}p(t)\mathrm{d}t,
 \;
 \mbox{for all}
 \;
 y_0\ge 0,\;\mbox{and}\;T\le T_0^{y_{\text{\fontsize{4}{4}\selectfont $0$}},p},
 \]
meaning that for a given state of the world $\omega$, the agent is indifferent between a lump-sum payment $\xi(\omega)$ at some retirement time $\tau(\omega)$, and a continuous payment $p(t)$ on the time interval $[\tau(\omega),\tau(\omega)+ T]$, with zero effort on this time interval, and a retirement deferred to $\tau(\omega)+ T$, where the lump-sum payment is now $\xi^\prime(\omega):=u^{-1}\big(y^{\zeta(\omega),p}(T)\big)$. The restriction $T\le T_0^{y_{\text{\fontsize{4}{4}\selectfont $0$}},p}$ on such deferral policies is induced by the fact that the agent is protected by limited liability, and therefore can only receive non-negative payments. The idea is that while the agent is indifferent between these two alternatives, the discrepancy between the discount rates may be such that the principal can actually benefit from postponing retirement.

\medskip
An immediate consequence is that the value function of the principal can be expressed in its relaxed formulation as
 \begin{equation}\label{Principalbar}
 \overline{V}^{\rm P}
 :=
 \sup_{\Cbf\in\Cfrak_{\text{\fontsize{4}{4}\selectfont $R$}}}
 \sup_{\alpha\in\Ac^\star(\Cbf)}
 \bar{J}^{\rm P}(\Cbf,\alpha),
 \;\mbox{\rm where}\;
 \bar{J}^{\rm P}(\Cbf,\alpha)
: =
 \E^{\P^\alpha}\bigg[\mathrm{e}^{-\rho\tau}\overline{F}(\zeta)
                                +\int_0^\tau \rho\mathrm{e}^{-\rho t}\big(\alpha_t+F(\eta_t)\big)\mathrm{d}t\bigg],
 \end{equation}
where $\Cfrak_R:=\{\mathbf{C}\in\Cfrak: V^{\rm A}(\mathbf{C})\ge u(R)\}$, for some subset $\Cfrak\subset\Cfrak^0$ defined in \Cref{sect:contract} below. The following result states the equivalence of our original contracting problem\footnote{As observed by Yuliy Sannikov in private communication, the principal problem may be directly defined by the relaxed formulation \eqref{Principalbar}.} with $\overline{V}^{\rm P}$, and characterises the face-lifted reward $\overline{F}$ in closed form in terms of the concave conjugate functions
 \[
 F^\star(p)
 :=
 \inf_{y\ge 0} \big\{ yp-F(y) \big\},
 \;\mbox{and}\;
 \overline{F}^\star(p)
 :=
 \inf_{y\ge 0} \big\{ yp-\overline{F}(y) \big\},
 \;p\in\R.
 \]
Notice that $F^\star=0$ on $\R_+$, and $F^\star<0$ on $(-\infty,0)$. Moreover, our conditions \eqref{cond:u} on the agent's utility function is immediately converted for $F^\star$ into
 \begin{equation}\label{cond:Fstar}
 -c_0^\star\big(1+|p|^{\gamma^\star}\big)
 \le
 F^\star(p)
 \le
 c_1^\star\big(1-|p|^{\gamma^\star}\big),
 \;p\le 0,
 \;\mbox{with}\;\frac1\gamma-\frac1{\gamma^\star}=1,
 \;\mbox{for some}\;
(c_0^\star,c_1^\star)\in(0,\infty)^2.
 \end{equation} 

\begin{proposition}\label{prop:Fbar}
We have $V^{\rm P}=\overline{V}^{\rm P}$, and denoting $ \delta:=\frac{r}{\rho}$, the face-lifted reward function satisfies

\medskip
$(i)$ if $\delta\gamma\le 1$, we have $\overline F=0;$ 

\medskip
$(ii)$ if $\delta\gamma>1$, we have $\overline F=F$, if $\delta=1$, and otherwise, under the additional condition that $\underset{y\to\infty}{\lim}\1_{\{\delta>1\}}\frac{F^{\text{\fontsize{4}{4}\selectfont $\prime$}}(y)}{yF^{\text{\fontsize{4}{4}\selectfont $\prime\prime$}}(y)}$ exists, 
\begin{align}\label{Fbarstar}
\overline{F}=\big(\overline{F}^\star\big)^{\star},
\;\mbox{\rm where}\;
 \overline{F}^\star(p)
 &
 =\int_0^{+\infty}\rho\mathrm{e}^{-\rho t} F^\star\big(\delta\mathrm{e}^{(\rho-r)t}p\big)\mathrm{d}t,\; p\le 0.
 \end{align}
In particular $\overline{F}$ is decreasing, strictly concave, $\delta\overline{F}^\prime(0)= F^\prime(0)\mathbf{1}_{\{\delta\ge1\}}$, and $\overline{F}^\star$ satisfies similar bounds to \eqref{cond:Fstar}, with appropriate positive constants $\bar c_0^\star$, and $\bar c_1^\star$, which translate into bounds on $\overline{F}$ similar to those in \eqref{cond:u}, with appropriate positive constants $\bar c_0$, and $\bar c_1$. Moreover, the supremum over $T$ in \eqref{Fbar} is attained at $T_0^{y_{\text{\fontsize{4}{4}\selectfont $0$}},p}$, meaning that
 \begin{equation}\label{Fbarbis}
 \overline{F}(y_0)
 =
 \sup_{p\in\Bc_{\R_{\text{\fontsize{4}{4}\selectfont $+$}}}} \bigg\{\int_0^{T_{\text{\fontsize{4}{4}\selectfont $0$}}^{y_{\text{\fontsize{4}{4}\selectfont $0$}},p}} \rho \mathrm{e}^{-\rho t} F(p(t))\mathrm{d}t\bigg\},\; y_0\geq 0.
 \end{equation}
\end{proposition}
\noindent The equality $V^{\rm P}=\overline{V}^{\rm P}$ in \Cref{prop:Fbar} is a direct consequence of our definition of admissible contracts in \Cref{sect:contract} below. The remaining claims are proved in \Cref{sec:appendix}, and provide the following significant results. In the case $\rho=r$ considered by \citeauthor*{sannikov2008continuous} \cite{sannikov2008continuous}, the principal never gains by postponing retirement and allowing the agent to produce zero effort for a while. On the other hand, when $\rho\neq r$, and $\rho$ is not too large, it is always optimal to postpone retirement and $\overline F$ is a non-trivial majorant of $F$. Finally, when the principal becomes a lot more impatient than the agent, we actually have $\overline F=0$, meaning that she can bring back the cost of permanently retiring the agent to $0$.

\begin{example}\label{ex:power}
Let $u(\pi):=\pi^{1/\gamma}$, and $\rho\neq r$ with $\rho<\gamma r$, then $F(y)=-y^\gamma$, and we compute directly 
\[
F^\star(p)=-(\gamma-1)\bigg(\frac{|p|}{\gamma}\bigg)^{\gamma/(\gamma-1)},
\; 
\overline{F}^\star(p)=-\frac{\rho(\gamma-1)^2}{r\gamma-\rho}\bigg(\frac{r |p|}{\rho\gamma}\bigg)^{\frac\gamma{\gamma-1}},\; p\leq 0,
\]
and it follows from {\rm \Cref{prop:Fbar}} that
\[
\overline{F}(y)=-\bigg(\frac{r\gamma-\rho}{\rho(\gamma-1)}\bigg)^{\gamma-1}\bigg(\frac{\rho y}{r}\bigg)^\gamma,\; y\geq 0.
\]
\end{example}

\begin{remark}\label{rem:normalisation}
The normalisation of the running rewards of the principal and the agent by their corresponding discount rates in {\rm \Cref{Agent0}} and {\rm \Cref{Principal0}}, is not fundamental, \emph{per se}. However, the face-lifted principal's benefit function plays a crucial role to relate equivalent formulations of the problem. Consider for instance the agent's criterion 
\[
\overline{J}_0^{\rm A}(\mathbf{C},\alpha):=\E^{\P^\alpha}\bigg[\mathrm{e}^{-r\tau}u(\xi)+\int_0^\tau \mathrm{e}^{-rs}\big(u(\pi_s)-h(\alpha_s)\big)\mathrm{d}s \bigg],
\]
which differs from $J^{\rm A}$ in \eqref{Agent0} by the form of discount factor $\mathrm{e}^{-rt}$ instead of $r\mathrm{e}^{-rt}$. Similarly, change the principal's criterion to 
\[
\overline{J}^{\rm P}_0(\Cbf,\alpha):=\E^{\P^\alpha}\bigg[-\mathrm{e}^{-\rho\tau}\xi+\int_0^\tau \mathrm{e}^{-\rho t}\big(\alpha_t-\pi_t)\big)\mathrm{d}t\bigg].
\]
Then, following the same argument, the corresponding face-lifted utility function is
 \[
 \overline{F}_0(y_0)
 :=
 \sup_{p\in\Bc_{\text{\fontsize{4}{4}\selectfont $\R_+$}}}  \sup_{ T\in[0, T_{\text{\fontsize{4}{4}\selectfont $0$}}^{y_{\text{\fontsize{4}{4}\selectfont $0$}},p}]}\bigg\{\mathrm{e}^{-\rho T}F\big(y^{y_{\text{\fontsize{4}{4}\selectfont $0$}},p}(T)\big)+\int_0^{T} \mathrm{e}^{-\rho t} F\big(p(t)\big)\mathrm{d}t\bigg\},\; y_0\geq 0,
 \]
with controlled state satisfying for any $p\in\Bc_{\R_{\text{\fontsize{4}{4}\selectfont $+$}}}$, $ y^{y_{\text{\fontsize{4}{4}\selectfont $0$}},p}(0)=y_0$, and $\dot y^{y_{\text{\fontsize{4}{4}\selectfont $0$}},p}(t)=ry^{y_{\text{\fontsize{4}{4}\selectfont $0$}},p}(t)-p(t)$, $t>0$. The corresponding Hamilton--Jacobi equation is 
\[
\min\Big\{\overline{F}_0-F,\;\rho\overline{F}_0-ry\overline{F}_0^\prime+F^\star(\overline{F}_0^\prime)\Big\}=0.
\]
In particular, in the case $\rho=r$ of equal discount rates, we see immediately that $\overline{F}_0(y):=r^{-1}F(ry)$, $y\ge 0$, is a solution of this equation. Consequently the decision of retiring the agent should be discussed by comparing the principal's value function to $\overline{F}_0$ instead of $F$ in this case, see {\rm\Cref{def:GP}} below. In this sense, the setting of {\rm \cite{sannikov2008continuous}} is the only parametrisation of the problem with $\rho=r$ for which the face-lifted retirement reward function $\overline F$ coincides with $F$.
\end{remark}

\subsection{Admissible contracts and Golden Parachute}
\label{sect:contract}

For technical reasons, we introduce further integrability conditions which guarantee that both criteria of the agent and the principal are finite, and more importantly, allow to apply the reduction result of \citeauthor*{lin2020random} \cite{lin2020random}. We denote by $\Cfrak$ the collection of all contracts $\Cbf:=(\tau,\pi,\xi)$, satisfying in addition the following integrability condition
 \begin{equation}\label{integrability}
 \lim_{n\to\infty}\,\sup_{\alpha\in\Ac} 
 \P^\alpha[\tau\ge n]=0,\; \mbox{and}\;
 \sup_{\alpha\in\Ac} 
 \E^{\P^\alpha}\bigg[\big(\mathrm{e}^{-r^{\text{\fontsize{4}{4}\selectfont $\prime$}}\tau}|\xi|\big)^\gamma
                                +\int_0^\tau \big(\mathrm{e}^{-r^{\text{\fontsize{4}{4}\selectfont $\prime$}} s}|\pi_s|\big)^\gamma\mathrm{d}s
                        \bigg] 
 < \infty,
 \end{equation}
for some $r^\prime\in(0,r\wedge\frac{\rho}{\gamma})$. In order to guarantee that the equality $V^{\rm P}=\overline{V}^{\rm P}$ of \Cref{prop:Fbar} holds, we define the set $\mathfrak{C}^0$ as the collection of all triples $(\tau^0,\pi^0,\xi^0)$ such that
 \[
 \tau^0=\tau+T,
 \;\pi^0=\pi\mathbf{1}_{[0,\tau)}+p\mathbf{1}_{[\tau,\tau^{\text{\fontsize{4}{4}\selectfont $0$}})},\;
 \mbox{and}\;
 u(\xi^0)=y^{u(\xi),p}(T),
  \]
for some $(\tau,\pi,\xi)\in\Cfrak$, and $\Fc_\tau$-measurable $p$ with values in $\Bc_{\R_{\text{\fontsize{4}{4}\selectfont $+$}}}$, and $T$ with values in $\big[0,T_0^{u(\xi),p}\big]$.

\medskip 
We now introduce the notion of Golden Parachute which has two different meanings in our relaxed formulation \eqref{Principalbar}:
\begin{itemize}
\item[$(i)$] in \citeauthor*{sannikov2008continuous}'s formulation, the retirement time $\tau$ is not explicitly involved in the model formulation. Instead, a Golden Parachute is defined as a stopping time $\tau$ such that the agent exerts no effort while receiving a constant consumption on $[\tau,\infty)$; 

\item[$(ii)$] our definition of contracts includes a retirement time $\tau$, and we may naturally define a situation of Golden Parachute by $\tau>0$ and $\xi>0$, $\P^0$--a.s. 
\end{itemize}

\begin{definition}\label{def:GP}
We say that the contracting model exhibits a \emph{Golden Parachute}, if there exists an optimal contract $(\tau^\star,\pi^\star,\xi^\star)\in\mathfrak C_R$ for the relaxed formulation of the principal's problem \eqref{Principalbar} such that $\tau^\star>0$, and $\P^0[\xi^\star>0]>0$.
\end{definition}

\noindent In other words, a Golden Parachute corresponds to a situation where there is a high-retirement point for the agent, with either lump-sum payment at retirement or continuous payment after retirement, where retirement means that the agent ceases to exert any effort forever. 

\subsection{The first-best contracting problem}

\label{sec:firstbest}

We conclude this section with the formulation of the first-best version of the contracting problem
\[
V^{\rm P, FB}:=\sup \Big\{ J^{\rm P}(\mathbf{C},\alpha): 
                                              \mathbf{C}\in\mathfrak{C}^{\rm FB},
                                              \;\alpha\in\Ac,
                                              \;\mbox{and}\;
                                              J^{\rm A}(\mathbf{C},\alpha)\ge u(R)
                                     \Big\},
\]
where $\mathfrak{C}^{\rm FB}$ consists of all contracts $(\tau,\pi,\xi)$ where $\tau\in\Tc$ is a stopping time with values in $[0,\infty]$, and $(\pi,\xi)$ satisfy the integrability condition of \eqref{integrability}. By definition of $J^{\rm P}$ and $V^{\rm P}$, we observe that we have 
\begin{equation}\label{SBleFB}
V^{\rm P} \le V^{\rm P, FB} \le \overline{a}.
\end{equation}
The corresponding notion of Golden Parachute is naturally defined in the context of the present first-best contracting problem similar the second-best setting.

\section{Main results}
\label{sect:mainresults}

\subsection{Complete solution of the contracting problems}\label{sec:complete}

We first provide a complete solution of the first-best contracting problem which has a different nature depending on  the relative value of the agent's and the principal's discount rates $\delta=\frac{r}{\rho}$. Namely, the principal's first-best contracting problem is degenerate in the case $\delta\gamma\leq 1$, in the sense that the principal achieves the maximal value $\overline{a}$ by means of a sequence of admissible contracts which offer no intermediate payments, incite the agent to exert maximal effort at all times, and offer him a large lump-sum payment at large retirement time. In the alternative case $\delta\gamma>1$, the problem does not degenerate any longer, and we shall express its solution in terms of the convex dual of the function $G:=-h^{-1}$
\[
G^\star(p):=\sup_{b\in h(A)}\big\{pb-G(b)\big\}=\sup_{a\in A}\big\{a+ph(a)\big\}, 
\;\mbox{and}\;
\overline G^\star(p):=\int_0^{+\infty}\rho\mathrm{e}^{-\rho t} G^\star\big(\delta\mathrm{e}^{(\rho-r)t}p\big)\mathrm{d}t,\; p\leq 0.
\]
Notice that the expression of $\overline G^\star$ in terms of $G^\star$ is exactly the same of that of $\overline F^\star$ in terms of $F^\star$ in \Cref{prop:Fbar}.$(ii)$.

\begin{theorem}\label{th:firstbest}{\rm (First-best contracting)}

\medskip
\noindent 
{$(i)$} If $\delta\gamma\leq 1$, then, $V^{\rm P, FB}=\bar a$, and there is no admissible contract which achieves this value.

\medskip
\noindent $(ii)$ If $\delta\gamma>1$, then 
	\begin{enumerate}[leftmargin=4em]
	\item[{$(ii$-$1)$}]
	$V^{\rm P, FB}
	=
	(\overline F^\star-\overline G^\star)^\star(u(R))
	:=
	\inf_{\lambda\le 0}\big\{\lambda u(R)-(\overline F^\star-\overline G^\star)(\lambda)\big\}$, with infimum achieved at the unique solution $\lambda^\star\ge 0$ of the first order condition $u(R)=(\overline F^\star-\overline G^\star)^\prime(\lambda^\star)$.
	\item[{ $(ii$-$2)$}] $V^{\rm P, FB}|_{R=0}=\bar a$, and  $\lambda^\star>0$ if and only if $R>0$. In this case the first-best optimal contract sets the agent to his participation constraint, and is given by 
	\[
	\tau^\star=\infty,
	\;\mbox{\rm and}\;\pi_t^\star=-F^\prime\bigg(\frac{\mathrm{e}^{(r-\rho)t}}{\delta\lambda^\star}\bigg),\; 	a^\star_t\in \hat A\bigg(\frac{\mathrm{e}^{(r-\rho)t}}{\delta\lambda^\star}\bigg),\; t\geq 0.
	\]

	\item[{ $(ii$-$3)$}] Denoting $V^{\rm P, FB}=v^{\rm FB}\big(u(R)\big)$, we have
	\[
	0\leq (v^{\rm FB}-\overline F)(y)\leq \overline G^\star\circ\overline F^\prime(y),\; y\geq 	0,
	\;\mbox{\rm so that}\;
	\lim_{y\to+\infty}(v^{\rm FB}-\overline F)(y)=0.
	\]
	In particular

\medskip
	$\bullet$ if $\delta\in(\frac1\gamma, 1]$ and $\beta>0$, we have 
	$
	v^{\rm FB}(y) =\overline F(y), 
	\;\forall y\geq (\overline F^\prime)^{-1}\big(-1/(\delta\beta)\big),$
	and therefore a \emph{Golden Parachute} exists for the first-best contracting problem$;$
	
	\medskip
	$\bullet$ otherwise if $\delta>1$ or $\beta=0$, we have $v^{\rm FB} >\overline F$ on $\R_+$ and there is no \emph{first-best Golden Parachute}.
	\end{enumerate}
\end{theorem}

\begin{remark}\label{rem:FB}
$(i)$ {\rm\Cref{th:firstbest}}.$(ii$-$3)$ states that when $\beta>0$ and $\delta\in(1/\gamma, 1]$, the first-best value function coincides with $\overline F$ after the finite value $(\overline F^\prime)^{-1}\big(-1/(\delta\beta)\big)$. As the value function at the origin is $v^{\rm FB}(0)=\bar a>\bar F(0)=0$, this shows that a \emph{first-best Golden Parachute} exists in this case. 

\medskip
\noindent $(ii)$ By {\rm\Cref{SBleFB}}, we also have in the case $\beta>0$ and $\delta\in (1/\gamma,1]$ that $V^{\rm P}=\overline F$ whenever $u(R)\ge(\overline F^\prime)^{-1}\big(-1/(\delta\beta)\big)$. However, we shall see that $V^{\rm P}|_{R=0}=0$. Therefore, unlike the first-best situation in $(i)$, we cannot discard the possibility that the second-best contracting principal's value coincides everywhere with $\overline F$, see {\rm\Cref{sec:nogp}}.
\end{remark}

\noindent We next focus on the second-best contracting problem. Recall from \Cref{prop:Fbar} that $\overline{F}=0$ when $\delta\gamma\leq 1$. Our first result shows that, similar to the first-best situation of \Cref{th:firstbest}, the solution of the contracting problem is degenerate in this case. We prove this by exhibiting a sequence of admissible contracts which induces a utility as large as we want for the agent, and reaches the highest possible level for the principal, namely $\bar a$. Roughly speaking, these contracts make small intermediate payments, enforce the highest possible effort for the agent at all times, and promise to pay him an extremely high value after an extremely long time. By exploiting the large discrepancy between the discount rates of the agent and the principal, we show that the continuation utilities of both parties reach their maximum.

\medskip
The case $\delta\gamma>1$ is more interesting. Similar to {\rm\citeauthor*{sannikov2008continuous}} \cite{sannikov2008continuous}, we denote by $V(y)$ the maximal value obtained by the principal when the utility of the agent is held at the level $y\geq 0$. We provide a characterisation of this function and of the solution of the contracting problem by means of the second-order differential equation
 \begin{equation}\label{DPE:main}
 v(0)=0,
 \;\mbox{and}\;
 v-\delta y v^\prime + F^\star(\delta v^\prime) - \mathfrak{I}(v^\prime,v^{\prime\prime})^+=0,
 \;\mbox{on}\;[0,\infty),
 \end{equation}
where the second-order differential operator is given by
 \begin{equation}\label{I0}
 \mathfrak{I}(v^\prime,v^{\prime\prime})
 :=
 \sup_{z\in\R,\; \hat a\in\partial h^{\text{\fontsize{4}{4}\selectfont $\star$}}(z)}
\big\{\hat a+h(\hat a)\delta v^\prime+\eta z^2\delta v^{\prime\prime}
\big\},
\;\mbox{for all}\;C^2\;\mbox{function}\;v,
\;\mbox{where}\;
\delta:=\frac{r}{\rho},\; \eta:=\frac12r\sigma^2,
 \end{equation}
 and $\partial h^\star$ denotes the subgradient of the convex dual $h^\star$ of $h$:
 \begin{equation}\label{hstar}
 \partial h^\star(z):=\{a\in A:h^\star(z)=za-h(a)\},
 ~\mbox{with}~
 h^\star(z)
 :=
 \sup_{a\in A} \{za-h(a)\},\;
 z\in\R.
 \end{equation}
The following assumption is needed for our main results concerning the second-best contracting problem.
\begin{assumption}\label{hyp:existence}
$F^\prime$ is concave, and
\begin{itemize}
\item if $\delta\in(1/\gamma, 1)$, then $A=[0,\bar a],$ $\beta:=h'(0)>0$, and $h\in C^4$ satisfies
\[
\min\big\{(h^{\prime\prime})^2+h^\prime h^{(3)},2(h^{\prime\prime})^2-h^\prime h^{(3)},2(h^{\prime\prime})^3-h^\prime h^{\prime\prime}h^{(3)}-(h^\prime)^2h^{(4)}\big\}\geq 0;
\]
\item if $\delta>1$, then $\lim_{y\to\infty}F^\prime(y)/\big(yF^{\prime\prime}(y)\big)$ exists.
\end{itemize}
\end{assumption}

\noindent Notice that $\overline{F}^\prime$ inherits the concavity of $F^\prime$. To see this, recall from standard convex duality that the concavity of $F^\prime$ is equivalent to that of $(F^*)^\prime$. Then, it follows from the expression of $\overline{F}^*$ in \eqref{Fbarstar} that $(\overline{F}^*)^\prime$ is concave, which then implies that $\overline{F}^\prime$ is concave. We shall provide more comments on these assumptions in \Cref{rem:assumptions} below. 

\begin{theorem}\label{thm:existence}{\rm (Second-best contracting)} Let {\rm \Cref{hyp:existence}} hold.
\begin{enumerate}
\item[$(i)$] Let $\delta\gamma\le 1$. Then $\overline V^{\rm P}=\bar a$, and there is no optimal contract achieving this value.
\item[$(ii)$] Let $\delta\gamma>1$, and define the stopping region $\Sc:=\{v=\overline F\}$. Then, under {\rm\Cref{hyp:existence}}, we have   
\begin{enumerate}
\item[{$(ii$-$1)$}]
the principal's value function $V$ is the unique viscosity solution of \eqref{DPE:main}, in the class of functions such that $0\le (v-\overline F)(y)\le C$, $y\ge 0$, for some $C>0.$
\item[{$(ii$-$2)$}] $V\in C^1([0,+\infty))$, is strictly concave, ultimately decreasing, and $C^2$ on $[0,+\infty)$, except at at most one point. Moreover $V^\prime(0)=V^\prime(0)\1_{\{\delta\le 1\}}\geq 0$, and whenever $F^\prime(0)=0$ and $\mathfrak{I}(0,\overline F^{\prime\prime}(0))>0$, we have $V^\prime(0)>0$.
\item[{ $(ii$-$3)$}] if $\beta>0$ and $\delta\leq 1$, then there is always some $y_{\rm gp}\in\big[0,\big(\overline F^\prime\big)^{-1}(-\frac{1}{\delta\beta})\big]$ such that $[y_{\rm gp},+\infty)\subset \Sc.$
\item[{ $(ii$-$4)$}] more generally, $\Sc=\{0\}\cup[y_{\rm gp},\infty)$ for some $y_{\rm gp}\in[0,\infty];$
\item[{$(ii$-$5)$}] if $\Sc=\{0\}\cup[y_{\rm gp},\infty)$ for some $y_{\rm gp}<\infty$, then
\[
\overline{V}^{\rm P}=\sup_{y\geq u(R)}V(y),
\]
and the supremum is attained at some $\hat y\geq u(R)$. Let $\hat z:[0,\infty)\longrightarrow \R$ be a $($measurable$)$ maximiser of $\mathfrak{I}(V^\prime,V^{\prime\prime})$, and $\hat\pi:[0,\infty)\longrightarrow \R$ a $($measurable$)$ minimiser of $F^\star(\delta V^\prime)$, then there exists a unique weak solution to the {\rm SDE} corresponding to $\widehat Y:=Y^{\hat y, \hat z(\hat Y),\hat \pi (\hat Y)}$ $($see {\rm\Cref{u(xi)}} for the definition of this process$)$. In particular, defining
\[
\hat\tau:=\inf\big\{t\geq 0:\widehat Y_t\not\in(0,y_{\rm gp})\big\},
\]
the contract $\big(\hat\tau, \hat \pi(\widehat Y),u^{-1}(\widehat Y_{\hat\tau})\big)$ is an optimal contract for the relaxed principal problem \eqref{Principalbar}.
\end{enumerate}
\end{enumerate}
\end{theorem}

\medskip
\begin{remark}\label{rem:2ndbest}
$(i)$ {\rm \Cref{thm:existence}}.$(ii$-$2)$ states that $V$ is non-increasing when $\delta>1$ so that, unlike the findings of {\rm\citeauthor*{sannikov2008continuous}} when $\delta=1$, no informational rent exists in this case, \emph{i.e.} the agent's value function is never above the requested participation value $u(R)$. An informational rent can only exist when $\delta\in(1/\gamma,1]$, and it does exist when $F^\prime(0)=0$ and $\overline{F}^{\prime\prime}(0)$ is sufficiently small.

\medskip
\noindent $(ii)$ In the case $\delta>1$, {\rm \citeauthor*{sannikov2008continuous} \cite[pp. 959]{sannikov2008continuous}} mentions that if {\rm\textquote[{}][,]{the agent had a higher discount rate than the principal, then with time the principal's benefit from output outweighs the cost of the agent's effort}} and that {\rm \textquote[{}][.]{it is sensible to avoid permanent retirement by allowing the agent to suspend effort temporarily}}. Our result shows that, under the assumption that $\overline F^\prime$ is concave, this statement is not correct.

\medskip
\noindent $(iii)$ As explained in {\rm \Cref{rem:FB}}.$(ii)$, {\rm \Cref{thm:existence}}.$(ii$-$3)$ does not directly imply that when $\delta\in(1/\gamma,1]$ and $\beta>0$, a \emph{Golden Parachute} automatically exists, since even if $y_{\rm gp}$ is then always finite, it could still be that $y_{\rm gp}=0$. If indeed, $y_{\rm gp}>0$, then the claim made by {\rm\citeauthor*{sannikov2008continuous}} in the case $\delta=1$ that a \emph{Golden Parachute} always exists is correct, and can be extended to $\delta\in(1/\gamma,1]$ and $\beta>0$. However, in the case $\delta>1$, it is unclear in general whether $y_{\rm gp}$ is finite or not, and we do not have sufficient conditions under which it is $($though we present conditions under which $y_{\rm gp}=0$ in {\rm \Cref{sec:nogp}}$)$.

\medskip
\noindent $(iv)$ The case $\Sc=\{0\}$ is not covered by {\rm \Cref{thm:existence}.$(ii$-$3)$--$(ii$-$5)$}, due to the fact that in this case, the optimal retirement time $\tau^\star$ may be infinite with positive probability, and therefore cannot satisfy the integrability requirement in {\rm \Cref{integrability}}. This is however not a critical issue. Indeed, these integrability conditions on admissible stopping times are taken from the general result in {\rm \cite{lin2020random}}. But a detailed reading of their arguments shows that they only require it in order to be able to treat moral hazard problems where the agent is allowed to control the volatility of the output process, for which they need a theory for second-order {\rm backward SDEs} with random horizon, which is obtained in {\rm \citeauthor*{lin2020second} \cite{lin2020second}}, but does not allow for infinite horizon. In our problem of interest, the agent only controls the drift of $X$, meaning that the classical theory of {\rm backward SDEs} is sufficient, and these objects are known to be well-posed even with infinite horizon, see for instance {\rm \citeauthor*{papapantoleon2016existence} \cite{papapantoleon2016existence}}. With these results in hand, we can straightforwardly extend the general reduction result of {\rm \Cref{sect:reduction}} to include possibly infinite retirement times, and then obtain a verification result general enough to cover these situations. As this is not central to our message, we refrain to go to this level of generality.
\end{remark}

\begin{remark}\label{rem:assumptions}
$(i)$ The concavity of $F^\prime$ means that the risk aversion of the agent is large enough $($in the power utility setting of {\rm\Cref{ex:power}}, it is equivalent to $\gamma\geq 2)$. Mathematically, this assumption allows to prove that the value function of the principal inherits this property, which in turn is a sufficient condition to ensure that whenever it touches $\overline F$, it coincides with it forever, see {\rm\Cref{lem:existence}.$(iv)$}. Economically, this means that the principal never retires temporarily the agent. 

\medskip The \emph{raison d'\^etre} of this assumption is that it is crucial in our proof of regularity for the value function of the principal. We emphasise that our results could be generalised to accommodate the possibility for the principal to temporarily retire the agent when his continuation utility belongs to finitely many intervals, since our main issue is that without it, we cannot in general rule out the possibility that $v$ and $\overline F$ could have uncountably many contact points on some compact interval, and the regularity becomes then a very unclear issue, at least to us. We have however refrained from doing this, since we do not have a satisfactory sufficient condition under which we could prove that there are strictly more than one, but finitely many contact points.

\medskip
\noindent $(ii)$ The case $\delta\in(1/\gamma,1)$ requires additional conditions in order to ensure that the value function of the principal has a concave first-order derivative, see \emph{Step $3$} in the proof of {\rm\Cref{lemma:regdelta1}} $(1/\gamma<\delta<1)$. The condition $\beta>0$ allows to prove using {\rm\Cref{th:firstbest}}.$(ii$-$3)$ that $v=\overline F$ outside a compact set, which in turn allows us to prove the required technical comparisons theorems in {\rm \Cref{sec:comp}} on such a compact set. We believe that the argument can be extended to the general case $\beta=0$, but we refrained from doing so for simplicity.
\end{remark}
\begin{example}
The condition on the derivatives of $h$ in {\rm\Cref{hyp:existence}} is satisfied if and only
\begin{itemize}
\item $\beta(q-2)=0$ and $q\geq 3/2$, for $\displaystyle h(a):=\frac{a^q}q+\beta a$, $a\in A$;
\item $\alpha^2\geq \beta$, for $ \displaystyle h(a):=\frac{a^3}3+\alpha\frac{a^2}2+\beta a,\; a\in A$.
\end{itemize}
\end{example}

\subsection{Some cases of non-existence of a Golden Parachute}\label{sec:nogp}

This section specialises the discussion to the economic question of whether a Golden Parachute is optimal in the second-best optimal contracting problem. In particular, our results here contrast with the main findings of \citeauthor*{sannikov2008continuous} \cite{sannikov2008continuous} that a \emph{Golden Parachute} always exists whenever $\beta:=h^\prime(0)>0$ and $\delta=1$. We exploit here the observation that $\mathbf{L}\overline{F}=-\mathfrak{I}(\overline{F}^\prime,\overline{F}^{\prime\prime})^+$, so that $\mathbf{L}\overline{F}=0$ if and only if $\mathfrak{I}(\overline{F}^\prime,\overline{F}^{\prime\prime})\le 0$. As both $\overline{F}$ and $\overline{F}^\prime$ are concave in our context, it follows that the map $\mathfrak{I}(\overline{F}^\prime,\overline{F}^{\prime\prime})$ is non-increasing on $\R_+$. Consequently the existence of a Golden Parachute means that $\overline F(y)$ solves \Cref{DPE:main} for sufficiently large $y$, but fails to solve it on a right-neighbourhood of $0$. Equivalently, there is No Golden Parachute (NGP) in either one of the following cases
\begin{enumerate}[label=$({\rm NGP}\arabic*)$,leftmargin=5em]
\item $\mathfrak{I}(\overline{F}^\prime,\overline{F}^{\prime\prime})>0$, on $[0,\infty)$, so that in this case $V>\overline F$ on $(0,\infty);$\label{ngp1}
\item $\mathfrak{I}(\overline{F}^\prime,\overline{F}^{\prime\prime})(0)\le 0$, so that in this case $V=\overline F$ on $\R_+$.\label{ngp2}
\end{enumerate}

\noindent We observe that \ref{ngp2} may hold in the context of \Cref{thm:existence} where it is stated that $V=\overline{F}$ for values larger than $(\overline F^\prime)^{-1}(-\frac{1}{\delta\beta})$, so that there is no Golden Parachute despite the fact that $V$ coincides with $\overline{F}$, see \Cref{ex:sann} below.
The following result provides a sufficient conditions for \ref{ngp1} to hold.
\begin{proposition}\label{prop:NGP}
Let $F^\prime$ be concave, $h^\prime(0)=0$, and $A\supset[0,a_o]$, for some $a_o\in(0,\bar a]$. Then {\rm \ref{ngp1}} holds.
\end{proposition}

\begin{proof}
Since $A$ contains an interval, $h$ is strictly convex, $\overline{F}$ is concave, and $h^\prime(0)=0$, we have that
\begin{align*}
\mathfrak{I}\big(\overline{F}^\prime(y),\overline{F}^{\prime\prime}(y)\big)^+
 &=\sup_{z\ge0,\; \hat a\in\hat A(z)} \big\{\hat a+h(\hat a)\delta\overline{F}^\prime(y)+\eta z^2 \delta\overline{F}^{\prime\prime}(y)\big\}\\
 &\geq\sup_{a\in [0, a_o]}\big\{a+h(a)\delta\overline{F}^\prime(y)+\eta \big(h^\prime(a)\big)^2 \delta\overline{F}^{\prime\prime}(y)\big\},\; y>0.
\end{align*}
Now notice that since $h^\prime(0)=0$, the derivative at $a=0$ of the map inside the supremum above is equal to $1>0$. Therefore, this map is increasing on a right-neighbourhood of $0$, and thus for any $y>0$, we have $\mathfrak{I}\big(\overline{F}^\prime(y),\overline{F}^{\prime\prime}(y)\big)^+>0$.

\end{proof}
\begin{remark}
Assume for simplicity that $F^\prime(0)=0$, then $\overline{F}^\prime(0)=0$ by {\rm\Cref{prop:Fbar}}. Then 
\[
\sup_{z\ge h^{\text{\fontsize{4}{4}\selectfont $\prime$}}(0)} \big\{z^{-2}\max\hat A(z)\big\}\le \frac{\bar a}{{h^\prime(0)^2}},\; \mbox{with}\; \bar a:=\max A,
\] 
so that, as $\overline{F}^{\prime}$ is concave, the existence of a {\rm Golden Parachute} implies that $h^\prime(0)^2<\bar a/\big(-\eta \overline{F}^{\prime\prime}(0)\big)$. In other words, {\rm \Cref{prop:NGP}} states that {\rm \ref{ngp1}} holds for sufficiently large $h^\prime(0) $.
\end{remark}

\begin{example}\label{ex:sann}
{\rm \citeauthor*{sannikov2008continuous} \cite[Figure 1]{sannikov2008continuous}} considers the situation $\delta=1$, $($so that $\overline{F}=F)$, and
 \[
 F(y)=-y^2,\; y\ge 0,
 \; h(a):=\frac12 ha^2 +\beta a,
\;
 a\in A=\R_+,
 \; \mbox{\rm for some positive constants $h$ and $\beta$}.
 \]
Notice that, given {\rm\citeauthor*{sannikov2008continuous}}'s conclusion that a {\rm Golden Parachute} exists, the unboundedness of $A$ is not problematic, as the optimal effort remains bounded, so that the problem is unchanged by restricting to the corresponding compact subset of $A$. Under the present specification, we have \[
F^{\prime\prime}(y)=-2, \; \text{\rm and}\; \sup_{z\ge\beta} \big\{z^{-2}\max\hat A(z)\big\}=\sup_{a\ge 0}\bigg\{ \frac{a}{(a+\beta)^2h}\bigg\}=\frac1{4h\beta}.
\] 
Then, since $F^\prime(0)=0$ in this case, \ref{ngp2} holds if and only if $8\beta\eta h \ge 1.$
\end{example}

\subsection{Numerical illustration}\label{sec:num}

We next provide some numerical results with the cost of effort function from \Cref{ex:sann}, and utility function $u(\pi):=\pi^\gamma,$ $\gamma>1$. We of course choose the model parameters so that \ref{ngp2} is not satisfied, since in this case the solution is $\overline F$ everywhere. \Cref{fig:sann} takes the parameters in \cite{sannikov2008continuous} (with $\gamma=2$, $\eta=0.05$, $h=0.5$, $\beta=0.4$, and $\delta=1$), and shows the archetypical case where a Golden Parachute exists. 
%
\begin{figure}[!ht]
\begin{center}
    \includegraphics[width=0.5\textwidth,height=6cm]{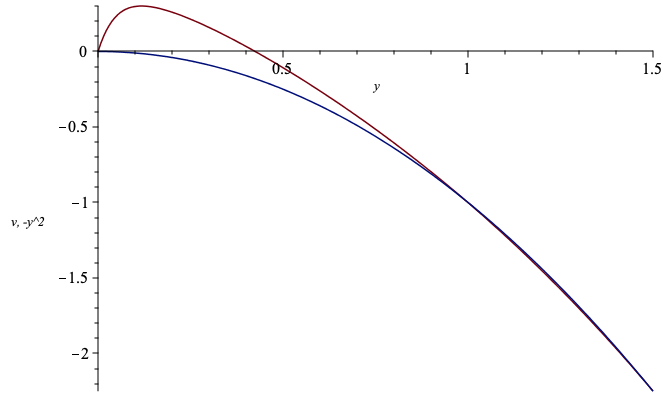}
    \caption{$v$ (red), $F$ (blue)}
    \label{fig:sann}
\end{center}
  \end{figure}
  
The next two sets of figures show what happens when $\delta\neq 1$. More precisely, \Cref{fig:GP3/4} (with $\gamma=3/2$, $\eta=h=1$, $\beta=0.01$, and $\delta=3/4$) shows a case where $v$ becomes equal to $\overline F$ after a while and a Golden parachute does exist, while, at least numerically, \Cref{fig:noGPd} (with $\gamma=3$, $\eta=h=1$, $\beta=0.01$, and $\delta=2$), seems to show that $v$ remains always above $\overline F$, and that no Golden Parachute exists.

\begin{figure}[!ht]
\begin{center}
\begin{subfigure}[b]{0.41\textwidth}
 \includegraphics[width= \textwidth,height=6cm]{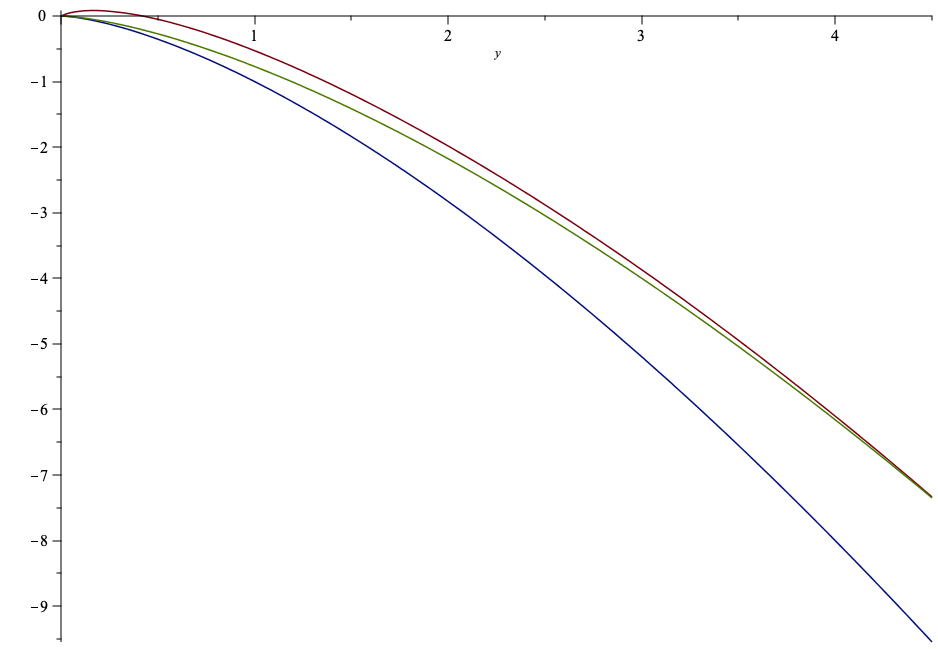}
    \caption{$v$ (red), $\overline F$ (green), $F$ (blue)}
    \label{fig:GP3/4}
\end{subfigure}
  \begin{subfigure}[b]{0.46\textwidth}
 \includegraphics[width= \textwidth,height=6cm]{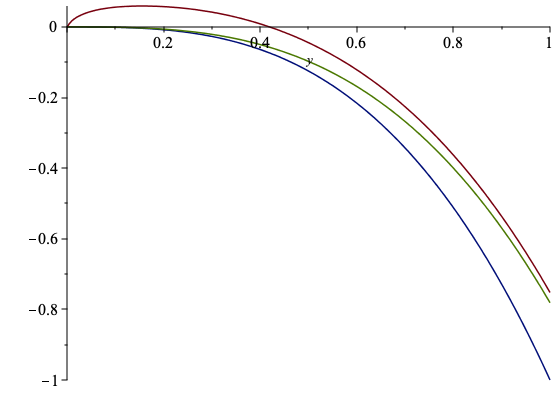}
    \caption{$v$ (red), $\overline F$ (green), $F$ (blue)}
    \label{fig:noGPd}
      \end{subfigure}
      \end{center}
\end{figure}

\subsection{Sannikov's solution}\label{sec:gap}

In this subsection, we specialise the discussion to the case $\delta=1$ as in \cite{sannikov2008continuous}. Notice that the {\rm HJB} equation considered by {\rm \citeauthor*{sannikov2008continuous}} in {\rm \cite[Equation $(5)$]{sannikov2008continuous} } is the same as our {\rm \Cref{DPE:main}} when restricted to the continuation region
 \[
 v- y v^\prime + F^\star(v^\prime) - \mathfrak{I}(v,v^\prime)^+=0,
 \;y\in[0,y_{\rm gp}],\;
 v(0)=F(0), 
 \;v(y_{\rm gp})=F(y_{\rm gp})\;\mbox{and}\;v'(y_{\rm gp})=F^\prime(y_{\rm gp}),
 \]
which corresponds to the natural guess that the stopping region $\mathcal{S}=\{v=F\}$ is of the form $\{0\}\cup [y_{\rm gp},\infty)$, with some free boundary point $y_{\rm gp}<\infty$ to be determined so as to guarantee that the smooth-fit condition $v^\prime(y_{\rm gp})=F^\prime(y_{\rm gp})$ holds. Such a guess is more naturally justified by the optimal stopping component of the principal's problem in our formulation. We shall also see that it is necessary in order to apply the verification argument of \Cref{prop:verif} below (which in fact requires $C^1$ regularity and $C^2$ except at a zero measure set of points).

\medskip
A few pages later, namely in {\rm \cite[Equation $(6)$]{sannikov2008continuous}}, the author rewrites this {\rm ODE} with $\mathfrak{I}$ instead of $\mathfrak{I}^+$
 \begin{equation}\label{Sannikov0}
 v-y v^\prime + F^\star(v^\prime) - \mathfrak{I}(v,v^\prime)=0,
 \;y\in[0,y_{\rm gp}],\;
 v(0)=F(0), 
 \;v(y_{\rm gp})=F(y_{\rm gp})\;\mbox{and}\;v^\prime(y_{\rm gp})=F^\prime(y_{\rm gp}).
 \end{equation}
This is motivated by the natural guess that the principal is expected to induce a positive effort for the agent on the continuation region. More importantly, direct manipulations allow to reformulate the last equation equivalently as
 \begin{equation}\label{I0:inverse}
 v^{\prime\prime}
 =
 \inf_{z\ge h^{\text{\fontsize{4}{4}\selectfont $\prime$}}(0),\;\hat{a}\in\hat{A}(z)}\bigg\{
 \frac{v-yv^\prime+F^\star(v^\prime)-\hat{a}-h(\hat{a})v^\prime}
        {\eta z^2}\bigg\},
 \end{equation}
thus reducing the equation to an explicit non-linear second-order ODE under the additional restriction of a positive marginal cost of effort, that is to say when $h^\prime(0)>0$.
 
 \medskip
Next, assuming that $y_{\rm gp}<\infty$, the potential explosion of the solution due to the super-linear feature of $F^\star$ is bypassed, as the concavity of $v$ implies that $v^\prime$ is bounded in $[v^\prime(y_{\rm gp}),v^\prime(0)]$.  \Cref{thm:existence}.$(ii$-$3)$ shows that this claim is true, modulo the fact that it may happen that $y_{\rm gp}=0$. Then, it follows from the standard Cauchy--Lipschitz theorem that the last ODE, with initial data $v(0)=0$ and $v^\prime(0)=b$, has a unique classical solution for any choice of $b$, say $v_b$. Next, \citeauthor*{sannikov2008continuous} argues that it is possible to choose $b$ so that this solution $v_b$ indeed solves \Cref{Sannikov0}. \citeauthor*{sannikov2008continuous}'s proof of this claim is not rigorous, and we show in the subsequent analysis that we can circumvent this difficulty by a different approach, under the extra assumption that $F^\prime$ is concave. Notice also that our main results given in {\rm \Cref{sec:complete}} and {\rm \Cref{sec:nogp}}, show that
\begin{itemize}
\item for $\beta=0$, there is no $y_{\rm gp}\geq 0$ such that the solution of the dynamic programming equation \eqref{DPE:main} agrees with $F$ on $[y_{\rm gp},\infty)$, see \ref{ngp1} of \Cref{prop:NGP};
\item for $\beta>0$ sufficiently small, we prove under additional conditions that the solution of \eqref{DPE:main} may exhibit the behaviour claimed by \citeauthor*{sannikov2008continuous}. In fact, we shall prove that the stopping region $\mathcal{S}$ is either reduced to $\{0\}$, or is of the form $\{0\}\cup[y_{\rm gp},\infty)$ for some $y_{\rm gp}\ge 0$. This requires some involved technical arguments which are displayed in \Cref{sect:solution} below;
\item when the curvature at zero $u^{\prime\prime}(0)$ of the agent's utility is sufficiently large negative, the stopping region $\mathcal{S}$ is always reduced to $\{0\}$ for whatever value of $\beta>0$. See \ref{ngp2} of \Cref{prop:NGP}.
\end{itemize}

\section{Solving the first-best contracting problem}\label{sec:solvfb}

\begin{proof}[Proof of Theorem \ref{th:firstbest}.$(i)$] We first solve the first best contracting problem when $\delta\gamma>1$. Notice that in view of \eqref{SBleFB}, the result follows directly from \Cref{thm:existence}.$(i)$. We however provide a simpler proof when $\delta\gamma<1$.First, the limited liability constraint on the payments made to the agent, and the fact that $A$ is bounded by $\bar a$ imply immediately that for any $(\mathbf{C},\alpha)\in\mathfrak{C}^{\rm FB}\times\Ac$, we have $
J^{\rm P}(\mathbf{C},\alpha)\leq \bar a.$ Moreover, the only way this can be an equality is to choose $\alpha=\bar a$, and $\mathbf C:=(\tau,\pi,\xi)$ such that $\pi=0$ and $F(\xi)\mathrm{e}^{-\rho\tau}=0$, which means that either $\tau=\infty$, or $\xi=0$. As none of these contracts can satisfy the participation constraint of the agent, there is no admissible contract which achieves this bound $\overline{a}$.

\medskip
For any $\eps>0$, let us consider the following contract: $\tau^\eps:=-\log(\eps)/\eps$, $\pi^\eps:=0$, $\xi^\eps:=\eps^{-1}\mathrm{e}^{\gamma r\tau^{\text{\fontsize{4}{4}\selectfont $\eps$}}}$, with the level of effort $\alpha^\eps:=\bar a$. Since these contracts are defined by deterministic components, they automatically satisfy the integrability condition of \eqref{integrability}. Notice also that when $\eps$ goes to $0$, both $\tau^\eps$ and $\xi^\eps$ converge to $\infty$. Therefore, we can choose $\eps$ small enough and find a constant $C>0$, independent of $\eps$, such that $u(\xi^\eps)\geq C(\xi^\eps)^{1/\gamma}$. The utility received by the agent is then
\[
\mathrm{e}^{-r\tau^{\text{\fontsize{4}{4}\selectfont $\eps$}}}u\big(\xi^\eps\big)-h(\bar a)\big(1-\mathrm{e}^{-r\tau^{\text{\fontsize{4}{4}\selectfont $\eps$}}}\big)\geq \frac C{\eps^\frac1\gamma}-h(\bar a)\underset{\eps\to 0}{\longrightarrow}\infty,
\]
so that the agent's participation constraint is satisfied for $\eps$ small enough. The principal's utility is
\[
-\mathrm{e}^{-\rho\tau^{\text{\fontsize{4}{4}\selectfont $\eps$}}}\xi^\eps+\bar a\big(1-\mathrm{e}^{-\rho\tau^{\text{\fontsize{4}{4}\selectfont $\eps$}}}\big)=\mathrm{e}^{\rho(1-\delta \gamma)\frac{\log(\eps)}\eps}\eps^{ -1}+\bar a\big(1-\mathrm{e}^{-\rho\tau^{\text{\fontsize{4}{4}\selectfont $\eps$}}}\big)\underset{\eps\to 0}{\longrightarrow}\bar a,
\]
since $\delta\gamma< 1$, which ends the proof in this case.
\end{proof}

\begin{proof}[Proof of Theorem \ref{th:firstbest}.$(ii)$] 
We start with $(ii$-$1)$--$(ii$-$2)$. By the standard Karush--Kuhn--Tucker method, we rewrite the first-best problem as
 \begin{align*}
V^{\rm P,FB}
&=
\inf_{\lambda\le 0}
 \bigg\{
\lambda u(R)
 +\sup_{(\tau,\alpha)\in\Tc\times \Ac}
 \E^{\P^\alpha}\!\bigg[
   -\mathrm{e}^{-\rho\tau}F^\star\big(\lambda\mathrm{e}^{(\rho-r)\tau}\big)
   +\!\int_0^\tau \rho\mathrm{e}^{-\rho t} \big(G^\star-F^\star\big)\big(\delta\lambda\mathrm{e}^{(\rho-r)t}\big)
                                  \mathrm{d}t
 \bigg]
 \bigg\}
 \\
 &=
 \inf_{\lambda\le 0}
 \Big\{\lambda u(R)+\sup_{T\ge 0} f_T(\lambda)\Big\},
 \;\text{\rm with}\;
f_T(\lambda):= -\mathrm{e}^{-\rho T}F^\star\big(\lambda\mathrm{e}^{(\rho-r)T}\big)
   +\int_0^T \rho\mathrm{e}^{-\rho t} \big(G^\star-F^\star\big)\big(\delta\lambda\mathrm{e}^{(\rho-r)t}\big)
                                  \mathrm{d}t.
\end{align*}
As $G^\star\geq 0$, and $F^\star$ is concave, we have 
\begin{align*}
\frac{\partial f_T(\lambda)}{\partial T}
&\ge\rho\mathrm{e}^{-\rho T}
       \big(F^\star\big(\lambda\mathrm{e}^{(\rho-r)T}\big)
              -F^\star\big(\lambda\delta\mathrm{e}^{(\rho-r)T}\big)
              -\lambda(1-\delta)\mathrm{e}^{(\rho-r)T}
                                           (F^\star)^\prime\big(\lambda\mathrm{e}^{(\rho-r)T}\big)
       \big)
\ge 0
,\; T\geq 0.
\end{align*}
Then, $\sup_{T\ge 0} f_T(\lambda)=f_\infty(\lambda)
 :=
 \int_0^\infty \rho\mathrm{e}^{-\rho t} \big(G^\star-F^\star\big)\big(\delta\lambda\mathrm{e}^{(\rho-r)t}\big)\mathrm{d}t
 =
 (\overline G^\star-\overline F^\star)(\lambda)$, and therefore 
 \[
 V^{\rm P,FB}
 =
 \inf_{\lambda\le 0}\big\{\lambda u(R)-(\overline F^\star-\overline G^\star)(\lambda)\big\}
 =
 (\overline F^\star-\overline G^\star)^\star\big(u(R)\big).
 \]
\medskip
In particular, $V^{\rm P,FB}|_{R=0}=\inf_{\lambda\le 0}(\overline G^\star-\overline F^\star)(\lambda)$. Notice that $\overline G^\star-\overline F^\star$ is convex and $(\overline G^\star-\overline F^\star)^\prime(0)=(G^\star-F^\star)^\prime(0)=G^\star(0)=\bar a>0$, where we used the fact that $(F^\star)^\prime(0)=(F^\prime)^{-1}(0)=0$, and the observation that $G^\star(p)=\bar a+ph(\bar a)$ for $p\ge 0$ small enough. Then $V^{\rm P,FB}|_{R=0}=(\overline G^\star-\overline F^\star)(0)=(G^\star-F^\star)(0)=G^\star(0)=\bar a$. The remaining claims follow from the fact that $\lim_{\lambda\to\infty}(\overline G^\star-\overline F^\star)(\lambda)=\infty$, as $G^\star$ has linear growth ($A$ is compact), and $F^\star$ grows as $(-p)^{\gamma/(\gamma-1)}$ at $-\infty$. 

\medskip
For $(ii$-$3)$, by the non-negativity of $\overline G^\star$ and convex duality between $\overline F$ and $\overline F^\star$, we have for any $y\geq 0$
\[
0\leq v^{\rm FB}-\overline F(y)=\inf_{p\leq 0}\big\{yp-\overline{F}^\star(p)+\overline G^\star(p)\big\}-\inf_{p\leq 0}\big\{yp-\overline{F}^\star(p)\big\}\leq \overline G^\star\big(\overline F^\prime(y)\big),
\]
which proves the first part of the desired results. To prove the remaining claims, we consider several cases.

\medskip
- For $\delta\le 1$ and $\beta>0$. Notice that $G^\star(p)=0$ for $p\leq -\beta^{-1}$, and $G^\star(p)=\bar a + ph(\bar a)$ for $p\geq -\big(h^\prime(\bar a)\big)^{-1}.$ In particular, when $p\leq -(\beta\delta)^{-1}$ it follows from the fact that $\delta\le 1$ that $\overline G^\star(p)=0$. As $\overline F$ is strictly concave and satisfies similar growth conditions at $+\infty$ as $F$, it cannot behave as an affine function for $y$ large, and thus $\lim_{y\to+\infty}\overline F^\prime(y)=-\infty$. Hence, for $y\geq \big(\overline F^\prime\big)^{-1}\big(-(\delta\beta)^{-1}\big)$, we have $\overline G^\star\big(\overline F^\prime(y)\big)=0$, and $v^{\rm FB}$ and $\overline F$ coincide as required.

\medskip
- In the case $\delta>1$ and $\beta>0$, we compute directly for $p$ sufficiently negative
\[
\overline G^\star(p)=(\delta|p|)^{-\frac1{\delta-1}}\int_{-\beta^{-1}}^{-(h^\prime(\bar a))^{-1}}|x|^{-1-\frac{1}{1-\delta}}G^\star(x)\frac{\mathrm{d}x}{1-\delta}+\bar a\big(-\delta h^\prime(\bar a)p\big)^{-\frac1{\delta-1}}+ph(\bar a)\big(-\delta h^\prime(\bar a)p\big)^{-\frac\delta{\delta-1}},
\]
which goes to $0$ as $p$ goes to $-\infty$ as claimed. 

\medskip
- For $\delta>1$ and $\beta=0$, the same computations as above lead to
\[
\overline G^\star(p)=(\delta|p|)^{-\frac1{\delta-1}}\int_{\delta p}^{-(h^\prime(\bar a))^{-1}}|x|^{-1-\frac{1}{1-\delta}}G^\star(x)\frac{\mathrm{d}x}{1-\delta}+\bar a\big(-\delta h^\prime(\bar a)p\big)^{-\frac1{\delta-1}}+ph(\bar a)\big(-\delta h^\prime(\bar a)p\big)^{-\frac\delta{\delta-1}}.
\]
The only unclear term is the first one. As $\lim_{-\infty}G^\star=0$, we may find for all $\eps>0$ a constant $M_\eps>0$ such that for any $G^\star\leq \eps$ on $(-\infty,-M_\eps]$. We then have for $p\le -M_\eps$
\begin{align*}
0&\leq (\delta|p|)^{-\frac1{\delta-1}}\int_{\delta p}^{-(h^\prime(\bar a))^{-1}}|x|^{-1-\frac{1}{1-\delta}}G^\star(x)\mathrm{d}x
\\
&\leq (\delta|p|)^{-\frac1{\delta-1}}\int_{-M_\eps}^{-(h^\prime(\bar a))^{-1}}|x|^{-1-\frac{1}{1-\delta}}G^\star(x)\mathrm{d}x
+\eps(\delta|p|)^{-\frac1{\delta-1}}\int_{\delta p}^{-M_\eps}|x|^{-1-\frac{1}{1-\delta}}\mathrm{d}x
\\
&=(\delta|p|)^{-\frac1{\delta-1}}\int_{-M_\eps}^{-(h^\prime(\bar a))^{-1}}|x|^{-1-\frac{1}{1-\delta}}G^\star(x)\mathrm{d}x
+\eps(\delta-1)-\eps(\delta|p|)^{-\frac1{\delta-1}}(\delta-1)M_\eps^{\frac1{\delta-1}}
\underset{p\searrow-\infty}{\longrightarrow}\eps(\delta-1).
\end{align*}
Then $\lim_{p\to-\infty}\overline{G}^\star(p)=0$ by arbitrariness of $\eps$. 

\medskip
- It remains to study the case $\delta\le 1$ and $\beta=0$. The case $\delta=1$ is obvious as $\overline{G}^\star=G^\star$. For $\delta<1$, we can again fix some $\eps>0$, and some $N_\eps>0$ such that 
\[
0\leq\overline G^\star(p)\leq\frac{\eps(\delta|p|)^{\frac1{1-\delta}}}{1-\delta}\int_{-\infty}^p|x|^{-1-\frac1{1-\delta}}\mathrm{d}x=\eps\delta^{\frac1{1-\delta}},
\;\mbox{for all}\;p\le -N_\eps.
\]
This implies again have $\lim_{p\to-\infty}\overline{G}^\star(p)=0$ by the arbitrariness of $\eps>0$.
\end{proof}

\section{Reduction to a mixed control-and-stopping problem}
\label{sect:reduction}

In order to solve the second-best contracting problem, we use the general approach of \citeauthor*{lin2020random} \cite{lin2020random}\footnote{See \Cref{foot:LRTY}. The methodology developed in \cite{lin2020random} extends the finite maturity setting of \citeauthor*{cvitanic2018dynamic} \cite{cvitanic2018dynamic} and is largely inspired by the method developed in \citeauthor*{sannikov2008continuous} \cite{sannikov2008continuous}.} which justifies the remarkable solution approach introduced by \citeauthor{sannikov2008continuous} \cite{sannikov2008continuous}, reducing the Stackelberg game problem of the principal \eqref{Principalbar} to a standard stochastic control one. 

\medskip
To do this, observe that the Hamiltonian of the agent's problem is given by the convex conjugate function $h^\star$ introduced in \eqref{hstar}, and that the corresponding subgradient contains all possible optimal agent responses
 \[
 \hat{A}(z):=
 \partial h^\star(z)
 =
 \{a\in A:h^\star(z)=za-h(a)\}.
 \]
As $A$ is closed and $h$ is strictly convex\footnote{If $A$ is an interval, then the strict convexity of $h$ guarantees that $\hat A(z)$ is a singleton. However, for a general closed subset $A$, the maximiser may not be unique.}, notice that 
 \begin{equation}\label{hatA}
 \hat{A}(z)\neq\emptyset,
 \;\mbox{whenever}\;
 h^\star(z)<\infty,
 \;\mbox{and}\;
 \hat A(z)=\{0\},
 \;\mbox{for}\;z< h^\prime(0),
 \end{equation}
because $a\longmapsto za-h(a)$ is decreasing whenever $z<h^\prime(0)$. We also abuse notations slightly, and for any $\F$-predictable, real-valued process $Z$ and any $\alpha\in\Ac$, we write $\alpha\in\hat A(Z)$ whenever $\alpha_t\in\hat A(Z_t)$, $\mathrm{d}t\otimes\mathrm{d}\P^0$--a.e. Then, the lump-sum payment $\xi=u^{-1}(\zeta)$ promised by the principal at $\tau$ takes the form
 \be\label{u(xi)}
 \zeta
 = 
 Y^{Y_{\text{\fontsize{4}{4}\selectfont $0$}},Z,\pi}_\tau
 =
 Y_0+ r\int_0^\tau Z_t \mathrm{d}X_t+\int_0^\tau\big(Y^{Y_{\text{\fontsize{4}{4}\selectfont $0$}},Z,\pi}_t-h^\star(Z_t)-\eta_t\big)\mathrm{d}t,
 \ee
where $Y^{Y_{\text{\fontsize{4}{4}\selectfont $0$}},Z,\pi}$ represents the continuation utility of the agent given a continuous consumption stream $\pi=u^{-1}(\eta)$ and $Z$ satisfies the integrability condition
 \begin{equation}\label{integ:YZ}
 \sup_{\alpha\in\Ac} 
 \E^{\P^\alpha}\bigg[ \sup_{0\leq t\le\tau}\; \big(\mathrm{e}^{-r^{\text{\fontsize{4}{4}\selectfont $\prime$}}t}\big|Y^{Y_{\text{\fontsize{4}{4}\selectfont $0$}},Z,\pi}_t\big|\big)^p\bigg]
 <
 \infty,
 \;\mbox{and}\;
 \sup_{\alpha\in\Ac} 
 \E^{\P^\alpha}\bigg[ \bigg(\int_0^\tau (\mathrm{e}^{-r^{\text{\fontsize{4}{4}\selectfont $\prime$}} t}|Z_t|)^2\mathrm{d}t\bigg)^{\frac{p}{2}}\bigg]
 <
 \infty.
 \end{equation}

\begin{remark}\label{rem:y=0}
As observed by {\rm \citeauthor*{sannikov2008continuous} \cite{sannikov2008continuous}}, notice that the non-negativity condition on $u$ and $h$ implies that the so-called limited liability condition $Y^{Y_{\text{\fontsize{4}{4}\selectfont $0$}},Z,\pi}\ge 0$ is satisfied. Indeed, as the dynamics of the process $Y^{Y_{\text{\fontsize{4}{4}\selectfont $0$}},Z,\pi}$ are given by $\mathrm{d}Y^{Y_{\text{\fontsize{4}{4}\selectfont $0$}},Z,\pi}_t=r\big(Y^{Y_{\text{\fontsize{4}{4}\selectfont $0$}},Z,\pi}_t+h^\star(Z_t)-\eta_t\big)\mathrm{d}t+\sigma rZ_t \mathrm{d}W^0_t$, under the agent's optimal response, we see that $0$ is an absorption point for the continuation utility with optimal effort $0$. 
\end{remark}
\noindent By the main reduction result of \cite{lin2020random},\footnote{\label{foot:LRTY} By the growth condition \eqref{cond:u} on $u$, the integrability condition \eqref{integrability} implies that 
\[
\sup_{\alpha\in\Ac}\E^{\P^\alpha}\bigg[\big(\mathrm{e}^{-r^{\text{\fontsize{4}{4}\selectfont $\prime$}}\tau}u(\xi)\big)^\gamma+\int_0^\tau \big(\mathrm{e}^{-r^{\text{\fontsize{4}{4}\selectfont $\prime$}} s}u(\pi_s)\big)^\gamma\mathrm{d}s\bigg]<\infty,
\] 
which is precisely the integrability condition required by \citeauthor*{lin2020random} \cite{lin2020random}.}  we may rewrite the principal's problem \eqref{Principalbar} as
 \be\label{Principal-Reduced}
 V^{\rm P}
 =
 \sup_{Y_{\text{\fontsize{4}{4}\selectfont $0$}}\ge u(R)}
 V(Y_0),
 \;\mbox{where}\;
 V(Y_0)
 :=
\sup_{(\tau,Z,\pi)\in\Zc(Y_{\text{\fontsize{4}{4}\selectfont $0$}}),\; \hat a\in\hat A(Z)}
 J(\tau,\pi,Z,\hat a),
 \ee
and 
 \begin{equation}\label{JP-Reduced}
 J(\tau,\pi,Z,\hat a)
 :=
 \E^{\P^{\hat a}}\bigg[ \mathrm{e}^{-\rho\tau}\overline{F}\big(Y^{Y_{\text{\fontsize{4}{4}\selectfont $0$}},Z,\pi}_\tau\big)
                                   +\int_0^\tau\rho\mathrm{e}^{-\rho t}\big(\hat a_t+F(\eta_t)\big)\mathrm{d}t
                         \bigg].
 \end{equation}
Here $\Zc(Y_0)$ is the collection of all triples $(\tau, Z,\pi)$ such that $\tau$, $\hat a$ and $\xi=-\overline{F}(Y^{Y_{\text{\fontsize{4}{4}\selectfont $0$}},Z,\pi}_\tau)$ satisfy the integrability conditions \eqref{integrability}, for some $\hat a\in\hat A(Z)$, and therefore also \eqref{integ:YZ}, together with the limited liability condition $Y^{Y_{\text{\fontsize{4}{4}\selectfont $0$}},Z,\pi}\ge 0$ of \Cref{rem:y=0}.

\medskip
The last control problem only involves the dynamics of $Y^{Y_{\text{\fontsize{4}{4}\selectfont $0$}},Z,\pi}$ under the optimal response of the agent (due to the principal's criterion which does not involve anymore the state variable $X$)
 \begin{equation}\label{SDE:principal}
 \mathrm{d}Y^{Y_{\text{\fontsize{4}{4}\selectfont $0$}},Z,\pi}_t
 =
 r\big(Y_t^{Y_{\text{\fontsize{4}{4}\selectfont $0$}},Z,\pi}+h(\hat a_t)-\eta_t\big)\mathrm{d}t+rZ_t \sigma \mathrm{d}W^{\hat a}_t,
 \;\P^{\hat a}\mbox{--a.s., for all}\;\hat a\in\hat A(Z).
 \end{equation}

\section{Degenerate second-best value for a (very) impatient principal}
\label{sec:SBproof}

We now provide the proof of \Cref{thm:existence}.$(i)$ by using the problem reduction from the previous section. Notice first that whenever $V^{\rm P}=\bar a$, then the same argument as in the proof of \Cref{th:firstbest}.$(i)$ shows that there cannot exist an optimal contract, and that the first-best and second-best value coincide. Our proof is based on an explicit construction of a sequence of contracts following the idea used in the proof of \Cref{th:firstbest}.$(i)$: we want to have a retirement time going to $\infty$, associated with a large lump-sum payment. However, because we are now in the second-best case, we need to offer the agent contracts which are incentive-compatible with the level of effort $\bar a$, meaning that these contracts cannot be deterministic. This can however be achieved by choosing a large enough constant control process $Z$ in \Cref{SDE:principal}. The price to pay now with such contracts is that the continuation utility of the agent may reach $0$ in finite time with positive probability, thus preventing the principal from offering a large lump-sum payment. This thus requires to carefully control the probability of early termination of the contract, and we show that by offering the agent a sufficiently large utility, this probability can be made arbitrarily small.

\medskip
\begin{proof}[{Proof of \Cref{thm:existence}.$(i)$}] Let us fix some $y_0>0$, $z>h^\prime(\bar a)$. It is immediate that in this case $\hat A(z)=\{\bar a\}$. For arbitrary $\eps\in (0,r\wedge 1)$, consider the continuous payment $\pi^\eps_t:=u^{-1}\big(\eps Y^\eps_t\big)$, $t\geq 0$, where $Y^\eps:=Y^{y_{\text{\fontsize{4}{4}\selectfont $0$}}/\sqrt{\eps},z,\pi^{\text{\fontsize{4}{4}\selectfont $\eps$}}}$ is the corresponding continuation utility of the agent, which is given by
\[
Y_t^{\eps}=\frac{y_0}{\sqrt{\eps}}+\int_0^t\big((r-\eps)Y_s^{\eps}+rh(\bar a)\big)\mathrm{d}s+rz\sigma W_t^{\bar a},\; t\geq 0.
\]
Notice that $Y^{\eps}$ is an Ornstein--Uhlenbeck process under $\P^{\bar a}$, whose defining SDE can be solved explicitly
\[
Y_t^{\eps}=\mathrm{e}^{(r-\eps)t}\frac{y_0}{\sqrt{\eps}}+\frac{r}{r-\eps}h(\bar a)\big(\mathrm{e}^{(r-\eps)t}-1\big)+rz\sigma\int_0^t\mathrm{e}^{(r-\eps)(t-s)}\mathrm{d}W_s^{\bar a}
,\; t\geq 0.
\]
Let now $T_0^{\eps}:=\inf\big\{t>0:Y_t^{\eps}=0\big\}$, and consider the contract $\mathbf{C}_\eps$ with retirement time $\tau^{\eps}:=\big(-\frac{\log(\eps)}\eps\big)\wedge T_0^{\eps}$, continuous payments $\pi^\eps$, and terminal payment $\xi^\eps:=u^{-1}\big(Y_{\tau^{\text{\fontsize{4}{4}\selectfont $\eps$}}}^{\eps}\big)$. We know from the general results in \Cref{sect:reduction} that such a contract provides the agent with utility $y_0/\sqrt{\eps}$, which he will accept for $\eps$ small enough, regardless of the level of his participation constraint. Indeed, all the integrability requirements are obviously satisfied here, since $z$ is deterministic, $\tau^{\eps}$ is bounded, and from the explicit formula for $Y^{\eps}$.

\medskip
We now compute the principal's utility induced by this contract
\[
J^{\rm P}(\mathbf{C}_\eps,\bar a)
=
\E^{\P^{\bar a}}\big[\mathrm{e}^{-\rho \tau^{\text{\fontsize{4}{4}\selectfont $\eps$}}}F\big(Y_{\tau^{\text{\fontsize{4}{4}\selectfont $\eps$}}}^{\eps}\big)\big]+\E^{\P^{\bar a}}\bigg[\int_0^{\tau^{\text{\fontsize{4}{4}\selectfont $\eps$}}}\rho\mathrm{e}^{-\rho t}F\big(\eps Y_{t}^{\eps}\big)\mathrm{d}t\bigg]+\bar a \Big(1-\E^{\P^{\bar a}}\big[\mathrm{e}^{-\rho \tau^{\text{\fontsize{4}{4}\selectfont $\eps$}}}\big]\Big).
\]
\emph{Step $1$.} For $\eps<r$, we have $T_0^\eps>\bar T_0^\eps:=\inf\big\{t>0:\bar Y^\eps_t=0\big\}$, where $\bar Y_t^\eps:=y_0/\sqrt{\eps}+rh(\bar a)t+rz\sigma W^{\bar a}_t$, $t\geq 0$. The law of $\bar T_0^\eps$ is well-known (see for instance \citeauthor*{karatzas1991brownian} \cite[Equation (5.13)]{karatzas1991brownian}), and we have
\begin{equation}\label{eq:probht}
\P^{\bar a}\big[T_0^{\eps}<\infty\big]\leq \P^{\bar a}\big[\bar T_0^\eps<\infty\big]=\exp\bigg(-\frac{2h(\bar a)y_0}{r\sigma^2 z^2\sqrt{\eps}}\bigg)\underset{\eps\to0}{\longrightarrow} 0.
\end{equation}
This implies that
\begin{align*}
\E^{\P^{\bar a}}\big[\mathrm{e}^{-\rho \tau^{\text{\fontsize{4}{4}\selectfont $\eps$}}}\big]
=
\mathrm{e}^{\rho \frac{\log(\eps)}{\eps}}\P^{\bar a}\big[T_0^{\eps}=\infty\big]
+\E^{\P^{\bar a}}\Big[\mathrm{e}^{-\rho \tau^{\text{\fontsize{4}{4}\selectfont $\eps$}}}\mathbf{1}_{\{T_{\text{\fontsize{4}{4}\selectfont $0$}}^{\text{\fontsize{4}{4}\selectfont $\eps$}}<\infty\}}\Big]
\leq 
\mathrm{e}^{\rho \frac{\log(\eps)}{\eps}}\P^{\bar a}\big[T_0^{\eps}=\infty\big]
+\P^{\bar a}\big[T_0^{\eps}<\infty\big]
\underset{\eps\to 0}{\longrightarrow} 0.
\end{align*}
\emph{Step $2$.} Next, we have that there exists some $C>0$, which may change value from line to line, but is independent of $\eps$, such that for any $t\in \big[0,T_0^{\eps}\big]$
\begin{align*}
0\leq -\mathrm{e}^{-\rho t}F\big(Y_{t}^{\eps}\big)&\leq C\mathrm{e}^{-\rho t}\big(1+\big|Y_{t}^{\eps}\big|^\gamma\big)
\leq C\mathrm{e}^{-\rho t}\bigg(1+\big(1+\eps^{-\gamma/2}\big)\mathrm{e}^{\gamma(r-\eps) t}+\mathrm{e}^{\gamma(r-\eps)t}\bigg|\int_0^t\mathrm{e}^{-(r-\eps)s}\mathrm{d}W^{\bar a}_s\bigg|^\gamma\bigg).
\end{align*}
\begingroup
\allowdisplaybreaks
Then, as the last stochastic integral is a Gaussian random variable, and $\delta\gamma\leq 1$, we see that
\begin{align}\label{eq:FB11}
\notag 0
&\leq 
-\E^{\P^{\bar a}}\big[\mathrm{e}^{-\rho \tau^{\text{\fontsize{4}{4}\selectfont $\eps$}}}F\big(Y_{\tau^{\text{\fontsize{4}{4}\selectfont $\eps$}}}^{\eps}\big)\big]\\
\notag
&\leq 
C\E^{\P^{\bar a}}\bigg[ \mathrm{e}^{-\rho \tau^{\text{\fontsize{4}{4}\selectfont $\eps$}}}\bigg(1+\big(1+\eps^{-\gamma/2}\big)\mathrm{e}^{\gamma(r-\eps)\tau^{\text{\fontsize{4}{4}\selectfont $\eps$}}}+\mathrm{e}^{\gamma(r-\eps)\tau^\text{\fontsize{4}{4}\selectfont $\eps$}}\bigg|\int_0^{\tau^{\text{\fontsize{4}{4}\selectfont $\eps$}}}\mathrm{e}^{-(r-\eps)s}\mathrm{d}W^{\bar a}_s\bigg|^\gamma\bigg)\bigg]
\\
\notag
&\leq 
C\mathrm{e}^{\rho\frac{\log(\eps)}\eps }\bigg(1+\big(1+\eps^{-\gamma/2}\big)\mathrm{e}^{-\gamma(r-\eps) \frac{\log(\eps)}\eps}+\mathrm{e}^{-\gamma(r-\eps)\frac{\log(\eps)}{\eps}}\E^{\P^{\bar a}}\bigg[ \bigg|\int_0^{-\frac{\log(\eps)}{\eps}}\mathrm{e}^{-(r-\eps)s}\mathrm{d}W^{\bar a}_s\bigg|^\gamma\bigg]\bigg)
\\
\notag
&\quad + C\E^{\P^{\bar a}}\bigg[\mathbf{1}_{\{T_{\text{\fontsize{4}{4}\selectfont $0$}}^{\text{\fontsize{4}{4}\selectfont $\eps$}}<\infty\}}\mathrm{e}^{-\rho \tau^{\text{\fontsize{4}{4}\selectfont $\eps$}}}\bigg(1+\big(1+\eps^{-\gamma/2}\big)\mathrm{e}^{\gamma(r-\eps)\tau^{\text{\fontsize{4}{4}\selectfont $\eps$}}}+\mathrm{e}^{\gamma(r-\eps)\tau^{\text{\fontsize{4}{4}\selectfont $\eps$}}}\bigg|\int_0^{\tau^{\text{\fontsize{4}{4}\selectfont $\eps$}}}\mathrm{e}^{-(r-\eps)s}\mathrm{d}W^{\bar a}_s\bigg|^\gamma\bigg)\bigg]
\\
\notag
&\leq 
C\mathrm{e}^{\rho\frac{\log(\eps)}\eps }\bigg(1+\big(1+\eps^{-\gamma/2}\big)\mathrm{e}^{-(r-\eps)\gamma \frac{\log(\eps)}\eps}+\mathrm{e}^{-\gamma(r-\eps)\frac{\log(\eps)}{\eps}}\Big(1-\mathrm{e}^{2(r-\eps)\frac{\log(\eps)}{\eps}}\Big)^{\frac\gamma2}\bigg]\bigg)
\\
&\quad 
+ C\big(1+\eps^{-\gamma/2}\big)\P^{\bar a}\big[T_0^\eps<\infty\big]
+ C\E^{\P^{\bar a}}\bigg[\mathbf{1}_{\{T_{\text{\fontsize{4}{4}\selectfont $0$}}^{\text{\fontsize{4}{4}\selectfont $\eps$}}<\infty\}}\bigg|\int_0^{\tau^{\text{\fontsize{4}{4}\selectfont $\eps$}}}\mathrm{e}^{-(r-\eps)s}\mathrm{d}W^{\bar a}_s\bigg|^\gamma\bigg].
\end{align}
It can be checked directly that since $\delta\gamma\leq 1$, the first term on the right-hand side of \Cref{eq:FB11} goes to $0$ as $\eps$ go to $0$. By \eqref{eq:probht}, the second term also goes to $0$ as $\eps$ goes to $0$. Finally, for the third one, it follows from the Cauchy--Schwarz's inequality and Burkholder--Davis--Gundy's inequality that
\begin{align*}
\E^{\P^{\bar a}}\bigg[\mathbf{1}_{\{T_{\text{\fontsize{4}{4}\selectfont $0$}}^{\text{\fontsize{4}{4}\selectfont $\eps$}}<\infty\}}\bigg|\int_0^{\tau^{\text{\fontsize{4}{4}\selectfont $\eps$}}}\mathrm{e}^{-(r-\eps)s}\mathrm{d}W^{\bar a}_s\bigg|^\gamma\bigg]
&\leq 
\Big(\P^{\bar a}\big[T_0^\eps<\infty\big]\Big)^{\frac12}\E^{\P^{\bar a}}\bigg[\bigg|\int_0^{\tau^{\text{\fontsize{4}{4}\selectfont $\eps$}}}\mathrm{e}^{-(r-\eps)s}\mathrm{d}W^{\bar a}_s\bigg|^{2\gamma}\bigg]^{\frac12}\\
&\leq 
C\Big(\P^{\bar a}\big[T_0^\eps<\infty\big]\Big)^{\frac12}\bigg(\int_0^{\infty}\mathrm{e}^{-2(r-\eps)s}\mathrm{d}s\bigg)^{\frac\gamma2}\leq C\Big(\P^{\bar a}\big[T_0^\eps<\infty\big]\Big)^{\frac12}
\underset{\eps\to 0}{\longrightarrow} 0.
\end{align*}
\endgroup
\emph{Step $3$.} It remains to control the continuous payment term
\begin{align*}
0
&\leq 
-\E^{\P^{\bar a}}\bigg[\int_0^{\tau^{\text{\fontsize{4}{4}\selectfont $\eps$}}}\rho\mathrm{e}^{-\rho t}F\big(\eps Y_{t}^{\eps}\big)\mathrm{d}t\bigg]
\\
&\leq 
-\int_0^{\infty}\rho\mathrm{e}^{-\rho t}
                    \E^{\P^{\bar a}} \Big[\mathbf{1}_{\{\eps|Y_{\text{\fontsize{4}{4}\selectfont $t$}}^{\text{\fontsize{4}{4}\selectfont $\eps$}}|\le 1\}}
                                                    F\big( \eps|Y_{t}^{\eps}|\big)
                                             \Big]\mathrm{d}t
 +C\int_0^{\infty}\rho\mathrm{e}^{-\rho t}
                          \E^{\P^{\bar a}}\Big[ \mathbf{1}_{\{\eps|Y_{\text{\fontsize{4}{4}\selectfont $t$}}^{\text{\fontsize{4}{4}\selectfont $\eps$}}|> 1\}}
                                                          \big(1+\eps^\gamma|Y_{t}^{\eps}|^\gamma\big)\Big]\mathrm{d}t
 \\
&\leq -\int_0^{\infty}\rho\mathrm{e}^{-\rho t}
                    \E^{\P^{\bar a}} \Big[F\big(1\wedge |\eps Y_{t}^{\eps}|\big)
                                             \Big]\mathrm{d}t
+C\eps^\gamma\int_0^{\infty}\rho\mathrm{e}^{-\rho t}\E^{\P^{\bar a}}\big[ |Y_{t}^{\eps}|^\gamma\big]\mathrm{d}t+C\int_0^{\infty}\rho\mathrm{e}^{-\rho t}\P^{\bar a}\big[\eps|Y_t^\eps|\geq 1\big]\mathrm{d}t.
\end{align*}
Notice next that we have that for any $t\geq 0$
\begin{align*}
\eps^\gamma|Y_t^\eps|^\gamma\leq C\bigg(\eps^{\gamma/2}+\eps^\gamma\mathrm{e}^{\gamma (r-\eps)t}+\eps^\gamma\bigg|\int_0^t\mathrm{e}^{(r-\eps)(t-s)}\mathrm{d}W^{\bar a}_s\bigg|^\gamma\bigg).
\end{align*}
Therefore, we have since $\gamma>1$
\begin{align*}
0\leq \eps^\gamma\int_0^{\infty}\rho\mathrm{e}^{-\rho t}\E^{\P^{\bar a}}\big[ |Y_{t}^{\eps}|^\gamma\big]\mathrm{d}t&\leq C \eps^{\gamma/2}+ C\rho\frac{\eps^{\gamma-1}}{\gamma}+C\rho\eps^\gamma\int_0^\infty\mathrm{e}^{-\gamma\eps t}\Big(1-\mathrm{e}^{-2(r-\eps)t}\Big)^\frac{\gamma}{2}\mathrm{d}t\leq  C\eps^{\gamma/2}+ 2C\rho\frac{\eps^{\gamma-1}}{\gamma}\underset{\eps\to0}{\longrightarrow}0.
\end{align*}
Finally, since for any $t\geq 0$, $\eps|Y_t^\eps|$ converges $\P^{\bar a}$--a.s. to $0$, it is immediate by dominated convergence that 
\[
\int_0^{\infty}\rho\mathrm{e}^{-\rho t}\P^{\bar a}\big[\eps|Y_t^\eps|\geq 1\big]\mathrm{d}t\underset{\eps\to0}{\longrightarrow}0, \; \text{and}\;  \int_0^{\infty}\rho\mathrm{e}^{-\rho t}\E^{\P^{\bar a}} \Big[F\big(1\wedge |\eps Y_{t}^{\eps}|\big)
                                             \Big]\mathrm{d}t\underset{\eps\to0}{\longrightarrow}0,
\]
which concludes the proof.
\end{proof}

\begin{remark}
When $\delta\gamma<1$, we may follow all the steps of the last proof, but take instead $\pi^\eps=0$. Then, all terms appearing still converge to $0$ when $\eps$ goes to $0$, and we may conclude already in \emph{Step $2$} that $\lim_{\eps \to 0}J^{\rm P}(\mathbf{C}_\eps,\bar a)=\bar a$.
\end{remark}
\section{Dynamic programming equation and verification}
\label{sec:DPE}

This section prepares for the proof of the remaining main results of \Cref{sect:mainresults} by applying the dynamic programming approach to solve the mixed control-and-stopping problem \eqref{Principal-Reduced}--\eqref{JP-Reduced}. \Cref{hyp:existence} is in force throughout. Notice that this problem is stationary in time due to the infinite horizon feature, and the time homogeneity of the dynamics of $Y$. By standard stochastic control theory, together with \Cref{rem:y=0}, the corresponding HJB equation is
 \begin{equation}\label{DPE:Sannikovrho} 
 v(0)=0,
 \; \mbox{and}\;
 \min\big\{ v-\overline{F}
                  ,
                   \mathbf{L} v
         \big\}
 =0, 
 \;\mbox{on}\;(0,\infty),
 \end{equation}
where for any $y>0$
 \[
 \mathbf{L} v(y)
 :=
 v-\delta y v^\prime(y) + F^\star\big(\delta v^\prime(y)\big) - \mathfrak{I}\big(v^\prime(y),v^{\prime\prime}(y)\big)^+
 =
  v(y)-F(y)-\mathbf{T}F\big(y,\delta v^\prime(y)\big)
  -\mathfrak{I}\big(v^\prime(y),v^{\prime\prime}(y)\big)^+,
  \]
and the second order differential operator $\mathfrak{I}$ is as introduced in \Cref{I0}, and can be rewritten thanks to \Cref{hatA} as
  \begin{align}
  \mathfrak{I}(p,q)
  =&
  \infty\mathbf{1}_{\{q>0\}}
  +\mathbf{1}_{\{q\le 0\}}
  \sup_{z\ge h^{\text{\fontsize{4}{4}\selectfont $\prime$}}(0),\; \hat a\in\hat A(z)} \big\{\hat a+\delta h(\hat a)p+\delta \eta z^2 q\big\},
  \;(p,q)\in\R^2,
 \label{I(p,q)}
 \end{align}
where $\mathbf{T}F(y,p) := yp-F(y)-F^\star(p),
 \;
 y\ge 0,\;p\in\R.$ Observe by definition that
 \be\label{TFge0}
 \mathbf{T}F(y,p) \ge 0,
 \;\mbox{and}\;
 \mathbf{T}F\big(y,F^\prime(y)\big)=0,
 \;\mbox{for all}\;
 y\ge 0.
 \ee
Moreover, inspecting the proof of \Cref{prop:Fbar} in \Cref{sec:appendix}, the face-lifted reward function $\overline{F}$ of \eqref{Fbar} satisfies 
 \be\label{ODE:Fbar}
 \overline{F}-F-\mathbf{T}F\big(\cdot,\delta \overline{F}^\prime\big)
 =0,
 \;\mbox{on}\;\R_+.
 \ee 
\begin{remark}\label{rem:DPE}
{$(i)$} Notice that $\mathbf{L}\overline{F}\leq 0$ on $\R_+$, as a direct consequence of \eqref{ODE:Fbar}.

\medskip
\noindent {$(ii)$} {\rm \Cref{DPE:Sannikovrho}} is equivalent to  
 \begin{equation}\label{DPE:Sannikov1}
 v(0)=0,\;
 \mbox{\rm and}\;
 \mathbf{L} v = 0,
 \;\mbox{\rm on}\;(0,\infty),
 \end{equation}
which agrees exactly with {\rm \Cref{DPE:main}}.
Indeed, if $v$ is a solution of \eqref{DPE:Sannikovrho}, then $\mathbf{L} v=0$ on $\Sc^c$, where $\Sc:=\{v=\overline{F}\}$ is the so-called stopping region, and  $\mathbf{L} v=\mathbf{L} \overline{F}\ge 0$ on $\Sc$, which implies that $\mathbf{L} v= 0$ on $\R_+$ by part $\rm (i)$ of the present remark. Conversely, assuming that $\mathbf{L} v=0$, we see that $v=F+\mathbf{T}F(\cdot,\delta v^\prime)+\mathfrak{I}(v^\prime,v^{\prime\prime})^+\ge F+\mathbf{T}F(\cdot,\delta v^\prime)$, and therefore $v$ is a super-solution of \eqref{ODE:Fbar}. By {\rm \Cref{lem:vsupersolODEFbar}}, this implies that $v\ge\overline{F}$, and we conclude that $v$ solves {\rm\Cref{DPE:Sannikovrho}}.
\end{remark}
\noindent We next provide a verification argument which is the standard justification of the importance of the dynamic programming equation \eqref{DPE:Sannikovrho}, and which guides the subsequent technical analysis to solve the contracting problem.
  \begin{proposition}\label{prop:verif}
Let $v\in C^1(\R_+)$ be $C^2$ on $\R_+$ except at a finite number of points.
 
\medskip
\noindent { $(i)$} Assume $v$ is a super-solution of \eqref{DPE:Sannikovrho}, \emph{i.e.} $v(0) \ge 0$, and $\mathbf{L} v \ge 0$. Then $v\ge V$ on $\R_+$. 

\medskip
\noindent {$(ii)$} If $v(0)=0$ and $\mathbf{L} v=0$ on the continuation region $\Sc^c:=\{v>\overline{F}\}$, then $v=V$ under the additional conditions
\begin{enumerate}
\item[$\bullet$] for any $y>0$, there exists a maximiser $\hat z(y)$ of $\mathfrak I(\delta v^\prime,\delta v^{\prime\prime})^+(y)$ such that the {\rm SDE} \eqref{SDE:principal}, with, for any $t\geq 0$, $u(\pi^\star_t):=(F^\star)^\prime\big(\delta v^\prime(Y_t)\big)$, $Z^\star_t:=\hat z(Y_t)$, and $\hat a^\star_t\in \hat A(Z^\star_t)$, has a weak solution$;$
\item[$\bullet$] defining $\tau^\star:=\inf\big\{t\geq 0:Y^{y,Z^\star,\pi^\star}_t\not\in\Sc^c\big\}$, the triplet $(\tau^\star,Z^\star,\pi^\star)$ belongs to $\Zc(Y_0)$.
\end{enumerate}
\noindent {$(iii)$} If in addition $v$ is ultimately decreasing, then the principal's value function is $V^{\rm P}=v(Y_0^\star)$, for some $Y_0^\star\ge u(R)$ with optimal contract $\xi^\star$ given by
 \[
 u(\xi^\star)
 :=
 Y_0^\star + r\int_0^{\tau^{\text{\fontsize{4}{4}\selectfont $\star$}}}Z^\star_t\mathrm{d}X_t+r\int_0^{\tau^{\text{\fontsize{4}{4}\selectfont $\star$}}}\big(Y_t-h^\star(Z^\star_t)-u(\pi^\star)\big)\mathrm{d}t.
 \]
\end{proposition}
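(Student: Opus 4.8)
The plan is to run the standard verification machinery for a mixed control-and-stopping problem, but carefully adapted to the reduced formulation \eqref{Principal-Reduced}--\eqref{JP-Reduced}, where the only state is the agent's continuation utility $Y^{Y_0,Z,\pi}$ with dynamics \eqref{SDE:principal} under the agent's optimal response. For part $(i)$, fix an admissible triple $(\tau,Z,\pi)\in\Zc(Y_0)$ and a selection $\hat a\in\hat A(Z)$. First I would apply It\=o's formula to $t\mapsto\mathrm{e}^{-\rho t}v(Y^{Y_0,Z,\pi}_t)$ on a localising sequence of stopping times $\tau_n\uparrow\tau$ chosen so that the stochastic integral term is a true martingale (using the integrability \eqref{integ:YZ} together with the polynomial growth of $v$, which follows from the bound $0\le v-\overline F\le C\log(1+\log(1+y))$ combined with the growth of $\overline F$ from \Cref{prop:Fbar}). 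This gives
\[
\E^{\P^{\hat a}}\big[\mathrm{e}^{-\rho\tau_n}v(Y_{\tau_n})\big]
= v(Y_0) + \E^{\P^{\hat a}}\bigg[\int_0^{\tau_n}\mathrm{e}^{-\rho t}\Big(-\rho v(Y_t) + r\big(Y_t+h(\hat a_t)-\eta_t\big)v^\prime(Y_t) + \tfrac12 r^2\sigma^2 Z_t^2 v^{\prime\prime}(Y_t)\Big)\mathrm{d}t\bigg].
\]
The key algebraic step is to recognise that $\rho^{-1}$ times the integrand above, after using $-F^\star(\delta v^\prime)\le \eta_t v^\prime - F(\eta_t)$ (definition of $F^\star$, valid for any $\eta_t\ge0$) and $\hat a_t+h(\hat a_t)\delta v^\prime + \eta Z_t^2 \delta v^{\prime\prime}\le \mathfrak I_0(v^\prime,v^{\prime\prime})^+$ (definition of $\mathfrak I_0$), is bounded above by $\rho^{-1}r\big(\mathbf L v(Y_t) - v(Y_t)\big)\cdot(\dots)$; more precisely one rearranges to get the integrand $\le -\rho\mathrm{e}^{-\rho t}\mathbf L v(Y_t) - \rho\mathrm{e}^{-\rho t}\big(\hat a_t+F(\eta_t)\big)$ wait—the correct bookkeeping is that the drift-generator of $v$ minus the running reward equals $\rho(\mathbf L v - \text{reward})$ up to the sign conventions, so $\mathbf L v\ge0$ yields
\[
v(Y_0)\ge \E^{\P^{\hat a}}\bigg[\mathrm{e}^{-\rho\tau_n}v(Y_{\tau_n}) + \int_0^{\tau_n}\rho\mathrm{e}^{-\rho t}\big(\hat a_t + F(\eta_t)\big)\mathrm{d}t\bigg].
\]
Letting $n\to\infty$: on $\{\tau=\infty\}$ the first term vanishes by $\lim_n\P^\alpha[\tau\ge n]=0$ from \eqref{integrability} plus uniform integrability from \eqref{integ:YZ}; on $\{\tau<\infty\}$ it converges to $\mathrm{e}^{-\rho\tau}v(Y_\tau)\ge\mathrm{e}^{-\rho\tau}\overline F(Y_\tau)$ since $v\ge\overline F$ (which holds because $v$ is a supersolution of \eqref{ODE:Fbar}, by \Cref{lem:vsupersolODEFbar} as invoked in \Cref{rem:DPE}$(ii)$). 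Taking the supremum over $(\tau,Z,\pi,\hat a)$ gives $v(Y_0)\ge V(Y_0)$, and then $v(0)\ge0=V(0)$ handles the boundary.

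For part $(ii)$, I would run the same computation along the candidate $(\tau^\star,Z^\star,\pi^\star,\hat a^\star)$, for which every inequality above becomes an equality: $u(\pi^\star_t)=(F^\star)^\prime(\delta v^\prime(Y_t))$ is exactly the minimiser in $F^\star$, $\hat z(Y_t)$ and $\hat a^\star_t\in\hat A(Z^\star_t)$ are exactly the maximisers in $\mathfrak I_0$, $\mathbf L v=0$ holds on $\Sc^c$ where the process lives before $\tau^\star$, and at $\tau^\star$ one has $v(Y_{\tau^\star})=\overline F(Y_{\tau^\star})$ by continuity and the definition $\Sc=\{v=\overline F\}$. The admissibility hypothesis $(\tau^\star,Z^\star,\pi^\star)\in\Zc(Y_0)$ is precisely what licenses the localisation and limiting arguments, giving $v(Y_0)=J(\tau^\star,\pi^\star,Z^\star,\hat a^\star)\le V(Y_0)$, hence equality with part $(i)$. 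For part $(iii)$, since $v$ is ultimately decreasing and continuous on $\R_+$, the map $Y_0\mapsto v(Y_0)$ attains its supremum over $[u(R),\infty)$ at some $Y_0^\star\ge u(R)$; then $V^{\rm P}=\sup_{Y_0\ge u(R)}V(Y_0)=\sup_{Y_0\ge u(R)}v(Y_0)=v(Y_0^\star)$ by \eqref{Principal-Reduced} and part $(ii)$, and the optimal lump-sum $\xi^\star$ is read off from the representation \eqref{u(xi)} of $\zeta=u(\xi)$ as $Y^{Y_0^\star,Z^\star,\pi^\star}_{\tau^\star}$, which is exactly the displayed formula.

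The main obstacle is the justification of the localisation/uniform-integrability step, i.e.\ passing to the limit in $\E^{\P^{\hat a}}[\mathrm{e}^{-\rho\tau_n}v(Y_{\tau_n})]$ and in the running-reward integral when $\tau$ may be infinite with positive probability. Here one must combine three ingredients: the growth bound on $v$ (polynomial, of order $\gamma$, inherited from $\overline F$), the moment bounds \eqref{integ:YZ} on $\sup_t(\mathrm{e}^{-r't}|Y_t|)^p$ with $r'<r\wedge(\rho/\gamma)$, and the tail decay $\sup_\alpha\P^\alpha[\tau\ge n]\to0$ from \eqref{integrability}; the delicate point is that $\mathrm{e}^{-\rho t}$ must beat the growth of $|Y_t|^\gamma\sim \mathrm{e}^{r'\gamma t}$, which is exactly why $r'<\rho/\gamma$ was imposed. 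A secondary technical nuisance is that $\hat A(z)$ need not be single-valued when $A$ is not an interval, so $\hat a^\star$ is only a measurable selection; this is harmless for the verification inequality but one should cite a measurable selection theorem to ensure $\hat z$ and $\hat a^\star$ can be chosen $\F$-predictable, and note that existence of a weak solution to the resulting SDE is taken as a hypothesis precisely to sidestep the degeneracy of the diffusion coefficient at $Y=0$.
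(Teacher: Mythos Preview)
Your proposal is essentially correct and follows the same verification-by-It\=o-plus-localisation approach as the paper. The paper's proof is virtually identical in structure: fix an admissible $(\tau,Z,\pi,\hat a)$, apply It\=o to $\mathrm{e}^{-\rho t}v(Y_t)$ up to the localising time $\tau_n:=\tau\wedge\inf\{t:Y_t\ge n\}$ (so that $v^\prime(Y_t)$ is bounded and the stochastic integral has zero expectation), bound the generator term by $\mathbf{L}v\ge 0$ together with the definitions of $F^\star$ and $\mathfrak I_0$, pass to the limit, and then re-run the computation along the candidate optimum to turn every inequality into an equality.

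One small point worth cleaning up. In your limit step you keep $v(Y_{\tau_n})$ and only invoke $v\ge\overline F$ \emph{after} passing $n\to\infty$, justifying the domination via the bound $0\le v-\overline F\le C\log(1+\log(1+y))$. But that bound is a conclusion of \Cref{thm:existence}, not a hypothesis of \Cref{prop:verif}; a general $C^2$ super--solution need not satisfy it. The paper avoids this by using $v\ge\overline F$ \emph{before} taking expectations, so that the limiting step only requires the growth estimate on $\overline F$ from \Cref{prop:Fbar}, which is always available. This is a trivial reordering and does not affect the correctness of your argument. (Also, since \eqref{integrability} forces $\sup_\alpha\P^\alpha[\tau\ge n]\to 0$, the event $\{\tau=\infty\}$ is null and your case split there is unnecessary.)
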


\begin{proof}
{$(i)$} We first prove that $v\ge V$. For an arbitrary $Y_0\ge 0$, and $(\tau,Z,\pi)\in\Zc(Y_0)$ with corresponding $\hat a\in\hat A(Z)$, we introduce for any integer $n$, $\tau_n:=\tau\wedge\inf\{t\ge 0:Y_t\ge n\}$, and we directly compute by It\^o's formula that
 \begin{align*}
 v(Y_0)
 &=
 \mathrm{e}^{-\rho\tau_{\text{\fontsize{4}{4}\selectfont $n$}}}v(Y_{\tau_{\text{\fontsize{4}{4}\selectfont $n$}}}) 
 - \int_0^{\tau_{\text{\fontsize{4}{4}\selectfont $n$}}} \mathrm{e}^{-\rho t}\bigg(-\rho v+\partial_tv
                                                  +(y+h(\hat a_t)-u(\pi_t))rv_y
                                                  +\frac12\sigma^2r^2Z_t^2v_{yy}
                                           \bigg)(Y_t)\mathrm{d}t
 - \int_0^{\tau_{\text{\fontsize{4}{4}\selectfont $n$}}}\mathrm{e}^{-\rho t}v_y(Y_t)rZ_t\sigma \mathrm{d}W^{\hat a}_t
 \\
 &\ge
 \mathrm{e}^{-\rho\tau_{\text{\fontsize{4}{4}\selectfont $n$}}}\overline{F}(Y_{\tau_{\text{\fontsize{4}{4}\selectfont $n$}}}) 
 + \int_0^{\tau_{\text{\fontsize{4}{4}\selectfont $n$}}} \mathrm{e}^{-\rho t}\big(\mathbf{L} v(Y_t)+\hat a_t-\pi_t\big)\mathrm{d}t
                           - \int_0^{\tau_{\text{\fontsize{4}{4}\selectfont $n$}}}\mathrm{e}^{-\rho t}v_y(Y_t)rZ_t\sigma \mathrm{d}W^{\hat a}_t                             
 \\
 &\ge
 \mathrm{e}^{-\rho\tau_{\text{\fontsize{4}{4}\selectfont $n$}}}\overline{F}(Y_{\tau_{\text{\fontsize{4}{4}\selectfont $n$}}}) 
 + \int_0^{\tau_{\text{\fontsize{4}{4}\selectfont $n$}}} \mathrm{e}^{-\rho t}\big(\hat a_t-\pi_t\big)\mathrm{d}t
                           -  \int_0^{\tau_{\text{\fontsize{4}{4}\selectfont $n$}}}\mathrm{e}^{-\rho t}v_y(Y_t)rZ_t\sigma \mathrm{d}W^{\hat a}_t.
 \end{align*}
Since $v_y$ is bounded on $[0,\tau_n]$ and $Z$ satisfies \eqref{integ:YZ}, this implies that 
 \[
 v(Y_0)
 \ge
 \E^{\P^{\hat a}}\bigg[
 \mathrm{e}^{-\rho\tau_n}\overline{F}(Y_{\tau_{\text{\fontsize{4}{4}\selectfont $n$}}}) 
 + \int_0^{\tau_{\text{\fontsize{4}{4}\selectfont $n$}}} \mathrm{e}^{-\rho t}\big(\hat a_t-\pi_t\big)\mathrm{d}t 
 \bigg]
 \longrightarrow
 \E^{\P^{\hat a}}\bigg[
 \mathrm{e}^{-\rho\tau}\overline{F}(Y_{\tau}) 
 + \int_0^{\tau} \mathrm{e}^{-\rho t}\big(\hat a_t-\pi_t\big)\mathrm{d}t 
 \bigg],
 \;\mbox{as}\;
 n\longrightarrow\infty,       
 \]
where the last convergence follows from the fact that 
\[
\big|\mathrm{e}^{-\rho\tau_{\text{\fontsize{4}{4}\selectfont $n$}}}\overline{F}(Y_{\tau_{\text{\fontsize{4}{4}\selectfont $n$}}})\big|\leq C\big(1+ \mathrm{e}^{-\rho\tau_{\text{\fontsize{4}{4}\selectfont $n$}}}Y_{\tau_{\text{\fontsize{4}{4}\selectfont $n$}}}^\gamma\big)\le C\bigg(1+\sup_{0\leq t\le\tau}\big\{\mathrm{e}^{-\frac{\rho}{\gamma} t}Y_t\big\}^\gamma\bigg),
\] 
by the estimate stated in \Cref{prop:Fbar}, together with the integrability conditions on $\pi$ in \eqref{integrability} and on $Y$ in \eqref{integ:YZ}. By the arbitrariness of  $(\tau,Z,\pi)\in\Zc(Y_0)$, this shows that $v(Y_0)\ge V(Y_0)$.

\medskip
To prove $(ii)$, we now repeat the previous argument starting from the control $(\tau^\star,Z^\star,\pi^\star)$ introduced in the statement, and denoting $Y^\star$ the induced controlled state process. As $Z^\star_t$ and $u(\pi^\star_t)$ are maximisers of $I(\delta v^\prime,\delta v^{\prime\prime})(Y^\star_t)$ and $F^\star(\delta v^\prime(Y^\star_t))$, respectively, we see that for any $\hat a^\star\in\hat A(Z^\star)$
 \begin{align*}
 v(Y_0)
 =
 \E^{\P^{\hat a^\star}}\bigg[
 \mathrm{e}^{-\rho\tau_{\text{\fontsize{4}{4}\selectfont $n$}}}v(Y^\star_{\tau^\star_{\text{\fontsize{4}{4}\selectfont $n$}}}) 
 + \int_0^{\tau^\star_{\text{\fontsize{4}{4}\selectfont $n$}}} \mathrm{e}^{-\rho t}\big(\hat\alpha^\star_t-\pi^\star_t\big)\mathrm{d}t 
 \bigg]
 &\underset{n\to\infty}{\longrightarrow}
 \E^{\P^{\hat a^\star}}\bigg[
 \mathrm{e}^{-\rho\tau^\star}v\big(Y^\star_{\tau^\star}\big) 
 + \int_0^{\tau^\star} \mathrm{e}^{-\rho t}\big(\hat a^\star_t-\pi^\star_t\big)\mathrm{d}t 
 \bigg]
 \\
 &
 = \E^{\P^{\hat a^\star}}\bigg[
 \mathrm{e}^{-\rho\tau^\star}\overline{F}\big(Y^\star_{\tau^\star}\big) 
 + \int_0^{\tau^\star} \mathrm{e}^{-\rho t}\big(\hat a^\star_t-\pi^\star_t\big)\mathrm{d}t 
 \bigg],
 \end{align*}
since $v=\overline{F}$ on the boundary of $\Sc$. 

\medskip
$(iii)$ Finally, $v$ is concave by \Cref{lem:existence} (since the argument for concavity applies to any continuous viscosity solution). As it is assumed to be ultimately decreasing, the existence of a maximiser $Y^\star_0$ of $v(y)$ on $[u(R),\infty)$ follows, and we obtain that $V^{\rm P}=\sup_{Y_{\text{\fontsize{4}{4}\selectfont $0$}}\ge u(R)}v(y)=v(Y_0^\star)$.
\end{proof}

\noindent We finally prove \Cref{thm:existence}.$(ii$-$5)$ by using the verification result provided in \Cref{prop:verif}, and admitting the regularity of $V$ stated in the remaining items of \Cref{thm:existence}.$(ii)$.

\medskip
\begin{proof}[Proof of \Cref{thm:existence}.$(ii$-$5)$, admitting $(ii$-$2)$]
The existence of $\hat y$ is immediate by the strict concavity of $V$, and the fact that it is ultimately decreasing. Then, the rest of the proof simply requires to check that the assumptions in \Cref{prop:verif} are satisfied here. First of all, notice that the map $\hat z$ is bounded from above since $A$ is compact, and from below by $\beta$, and it is continuous on $\Sc^c$ because $V$ is $C^2$ there. Similarly, the map $\hat \pi$ is bounded on $\Sc^c$, from below by $0$ and from above as well because $\Sc^c$ is a bounded set under our assumptions. The existence of a unique weak solution for $\widehat Y$ is then direct from \citeauthor*{stroock1997multidimensional} \cite[Corollary 6.4.4]{stroock1997multidimensional}.\footnote{The drift of $\widehat Y$ is not bounded as required in \cite[Corollary 6.4.4]{stroock1997multidimensional}, because of the term $r\widehat Y$. However, it suffices to apply the result to $(\mathrm{e}^{rt}\widehat Y_t)_{t\geq 0}$.} Notice in addition that $\widehat Y$ has moments of any order under $\P^0$ (and thus under any $\P^\alpha$, $\alpha\in\Ac$, recall that $A$ is compact). It remains to verify that $\hat \tau$ satisfies \eqref{integrability}. However, $\widehat Y$ is a one-dimensional Markov process for which the boundaries $0$ and $y_{\rm gp}$ are regular and accessible, it is therefore well-known that $\hat\tau$ is finite with probability $1$. Since $A$ is compact, the densities $\mathrm{d}\P^\alpha/\mathrm{d}\P^0$ all have moments of any order, uniformly in $\alpha\in\Ac$, from which it is immediate that \eqref{integrability} holds.
\end{proof}

\section{Analysis of the dynamic programming equation}
\label{sect:solution}

This section provides the proof of \Cref{thm:existence}.$(ii$-$1)$--$(ii$-$4)$. We shall assume throughout that \Cref{hyp:existence} is in force. Our first result below identifies the second-best value function as the unique viscosity solution of \Cref{DPE:Sannikovrho} in an appropriate class, and gives relatively direct properties.

\begin{lemma}\label{lem:existence}
The second-best value function $V$ is the unique continuous viscosity solution of {\rm \Cref{DPE:Sannikovrho}}, in the class of maps $v$ such that $0\le (v-\overline F)(y)\le C$, $y\ge 0$, for some $C>0$. Moreover, $V$ is strictly concave, ultimately decreasing, and satisfies the following properties

\medskip
$(i)$ $0\leq V(y)-\overline F(y)\leq \overline G^\star\big(\overline F^\prime(y)\big),\; y\geq 0;$

\medskip
$(ii)$ $V^\prime_+(0)\ge 0$ for $\delta\le 1$ and $V^\prime_+(0)=0$ for $\delta>1$. Moreover, $V^\prime(0)>0$ whenever $F^\prime(0)=0$ and $\mathfrak I\big(0, \overline F^{\prime\prime}(0)\big)>0;$

\medskip
$(iii)$ if $\beta>0$ and $\delta\leq 1$, then $V=\overline F$ on $[y_o,+\infty)$ for some finite  $y_o\in\big[0,(\overline F^\prime)^{-1}\big(-\frac{1}{\delta\beta}\big)\big];$ 

\medskip
$(iv)$ if $\overline F^\prime$ is concave, then $\big\{V=\overline F\}=\{0\}\cup[y_o,+\infty),\; \text{for some}\; y_o\in[0,+\infty].$ Moreover $y_o=\infty$ when $\beta=0$.
\end{lemma}
 
\begin{proof} We proceed in several steps.

\medskip
\noindent \emph{Step $1$.} By standard arguments in control theory, $V$ is a (discontinuous) viscosity solution of \Cref{DPE:Sannikovrho}, see for instance \citeauthor*{touzi2013optimal} \cite[Theorem 7.4]{touzi2013optimal}. As $\overline F\le V\le V^{\rm FB}$ by \eqref{SBleFB}, the bound $(i)$ for $V$ is inherited from the similar bound \Cref{th:firstbest}.$(ii$-$3)$ on the value of the first best contracting problem, and implies that $V$ is ultimately decreasing, and $0\leq V(y)-\overline F(y)\leq C,$ for some positive constant $C$. The last bounds allow apply the comparison result of \Cref{lem:comparison} to deduce that $V$ is the unique viscosity solution of \Cref{DPE:Sannikovrho}, and that it is continuous.

\medskip
\noindent \emph{Step $2$.} We next prove that $V$ is strictly concave. To prove concavity, suppose to the contrary that $V$ is strictly convex on some non-empty open interval $(y_0,y_1)\subset\R_+$, then we would have that $-V^{\prime\prime}<0$ in the viscosity sense on $(y_0,y_1)$, and thus that $-\mathfrak{I}(V^\prime,V^{\prime\prime})^+=-\infty$ on $(y_0,y_1)$ $($still in the viscosity sense$)$, contradicting the fact that $V$ is a continuous viscosity solution of {\rm \Cref{DPE:Sannikovrho}}. The strict concavity follows the same line of argument as in \cite{sannikov2008continuous}. Suppose to the contrary that $V(y)=b_0+by$ for $y$ in some interval $[y_0,y_1]\subset\R_+$, then
  \[
  b_0+(1-\delta)by+F^\star(\delta b)-\mathfrak{I}(b,0)^+=0,
  \;y\in[y_0,y_1].
  \]
 $\bullet$ If $\delta\neq 1$, this implies that $b=0$, and then $b_0=\mathfrak{I}(0,0)^+=\bar a>0$. In particular $\mathfrak{I}^+=\mathfrak{I}$. We next argue that this ODE is in addition uniformly elliptic. This is immediate when $\beta>0$. For $\beta=0$, we have $[0,\bar a_0]\subset A$ by \Cref{hyp:existence}, and
\begin{align*}
\mathfrak I\big( v^\prime, v^{\prime\prime}\big)^+ &\ge \sup_{z\ge0,\; \hat a\in\hat A(z)\cap[0,\bar a_{\text{\fontsize{4}{4}\selectfont $0$}}]} \big\{\hat a+\delta h(\hat a)v^\prime+\delta \eta z^2 v^{\prime\prime}(y)\big\}=\sup_{a\in [0,\bar a_{\text{\fontsize{4}{4}\selectfont $0$}}]}\Phi(a,y),\;
\Phi(a,y) := a+\delta h(a)v^\prime(y)+\delta \eta h^\prime(a)^2 v^{\prime\prime}(y).
\end{align*}
As $\Phi(0,y)=0$, $\partial_a\Phi(0,y)=1$, for any $y>0$, we see that for any compact subset of $(0,\infty)$, the supremum in $\mathfrak I( v^\prime, v^{\prime\prime})^+$ is attained on $[\eps,\bar a_0]$ for some $\eps>0$, independent of $y$ (but of course depending on the chosen compact set). Hence, the ODE can always be written in explicit form on any compact subset of $(0,\infty)$, and the standard Cauchy--Lipschitz existence and uniqueness theory applies. By uniqueness of the solution of {\rm \Cref{DPE:Sannikovrho}} with boundary condition $v(y_0)=b_0$ and $v^\prime(y_0)=0$, we deduce that $v=b_0$ on $[0,y_0]$, contradicting the boundary condition $v(0)=0$. 

\medskip
\noindent $\bullet$ If $\delta=1$, we also see by the same argument that $v(y)=b_0+by$ on $[0,y_0]$, so that $v(0)=0$ implies that $b_0=0$, and we get $F^\star(\delta b)-\mathfrak{I}( b,0)^+=0$ and therefore $F^\star(\delta b)=\mathfrak{I}( b,0)^+=0$, which again cannot happen.  

\medskip
\noindent \emph{Step $3$.} We now prove the remaining statements. $(i)$ and $(iii)$ are inherited from the similar bounds for $v^{\rm FB}$ stated in \Cref{th:firstbest}.$(ii)$. To obtain $(ii)$, notice that the non-negativity of $\mathfrak I^+$ and $-F^\star$ implies that $F^\star(V^\prime(0))=0$, meaning that $V^\prime(0)\ge 0$ (where $V'(0)$ exists as the right derivative of a concave function), and that $V$ is a viscosity super-solution of
\[
v(y)-\delta yv^\prime(y)\geq 0,\; y>0.
\]
Since $V$ is continuous and admits a right-derivative $V_+^\prime$, the last inequality holds Lebesgue--almost everywhere, i.e. $V(y)-\delta yV_+^\prime(y)\geq 0,\; y>0$. Dividing by $y>0$ and sending $y$ to $0$, we deduce the first part of the claim. Next as $\mathfrak I(p,q)$ is non-decreasing in $p$, we deduce that $0\ge \mathfrak I( V^\prime(0), V^{\prime\prime}(0))\ge\mathfrak{I}(0, V^{\prime\prime}(0))$. However, under our assumptions, $\mathfrak I(0, \overline F^{\prime\prime}(0))>0$. Then it follows from the non-decrease of $\mathfrak I(p,q)$ in $q$ that $V^{\prime\prime}(0)<\overline F^{\prime\prime}(0)$. Then, as $V\ge \overline F$, and $V(0)=\overline F(0)$, we have $V^\prime(0)>F^\prime(0)=0$, as required.

\medskip
It remains to prove $(iv)$. Fix $0<y_0<y_1$, such that $v=\overline F$ on $\{y_0,y_1\}$ and $v>\overline F$ on $(y_0,y_1)$. Since $y_0$ is then a global minimum point of $v-\overline F$, we can use the viscosity super-solution property of $v$ with $\overline F$ as a test function to deduce
\[
0\leq v(y_0)-\delta y_0\overline F^\prime(y_0)+F^\star\big(\delta \overline F^\prime(y_0)\big)-\mathfrak I\big( \overline F^\prime(y_0),\overline F^{\prime\prime}(y_0)\big)^+=-\mathfrak I\big( \overline F^\prime(y_0),\overline F^{\prime\prime}(y_0)\big)^+.
\]
This implies that $\mathfrak I\big( \overline F^\prime(y_0),\overline F^{\prime\prime}(y_0)\big)\leq 0$, or equivalently
\begin{equation}\label{eq:proofnizar1}
\hat a(z)+h\big(\hat a(z)\big)\delta\overline F^\prime(y_0)+\eta z^2\delta\overline F^{\prime\prime}(y_0)\leq 0,\; \forall z\geq \beta,\; \forall \hat a\in \hat A(z).
\end{equation}
Next, there is some $\overline y\in(y_0,y_1)$ such that $\big(v-\overline F\big)(\bar y)=\max_{y\in(y_0,y_1)}\big(v-\overline F\big)(y)$. Using this time the viscosity sub-solution property of $v$ at $\bar y$ with $\overline F$ as a test function, we deduce
\[
0\geq v(\bar y)-\delta \bar y\overline F^\prime(\bar y)+F^\star\big(\delta \overline F^\prime(\bar y)\big)-\mathfrak I\big( \overline F^\prime(\bar y),\overline F^{\prime\prime}(\bar y)\big)^+=\big(v-\overline F\big)(\bar y)-\mathfrak I\big( \overline F^\prime(\bar y),\overline F^{\prime\prime}(\bar y)\big)^+.
\]
Since $\big(v-\overline F\big)(\bar y)>0$, this implies that $\mathfrak I\big( \overline F^\prime(\bar y),\overline F^{\prime\prime}(\bar y)\big)>0$, and
\begin{equation}\label{eq:proofnizar2}
\hat a(\bar z)+h\big(\hat a(\bar z)\big)\delta\overline F^\prime(\bar y)+\eta \bar z^2\delta\overline F^{\prime\prime}(\bar y)\geq 0,
\;\mbox{for some}\;
\bar z\geq \beta,\;\mbox{and}\;\hat a\in\hat A(\bar z).
\end{equation}
Notice also that we cannot have $\hat a(\bar z)=0$, nor $\bar z=0$ (this can be seen by arguing exactly as in the proof of \Cref{lem:existence}, depending on whether $\beta>0$ or not). Using then \Cref{eq:proofnizar1} with $z=\bar z$, we deduce from \Cref{eq:proofnizar2} that
\[
0\leq \delta h\big(\hat a(\bar z)\big)\big(\overline F^\prime(y_0)-\overline F^\prime(\bar y)\big)+\delta\eta \bar z^2\big(\overline F^{\prime\prime}(y_0)-\overline F^{\prime\prime}(\bar y)\big).
\]
Since both $\overline F^\prime$ and $\overline F^{\prime\prime}$ are non-decreasing, and $\hat a(\bar z)$ and $\bar z$ are positive, the above inequality is impossible.

\medskip
Finally, let $\beta=0$. By \Cref{prop:NGP}.$(i)$, $\overline F$ never solves the ODE, and we claim that this implies that $ V> \overline F$ on $(0,\infty)$. Indeed, notice that any contact point $y_0$ of $V$ and $\overline F$ is a local minimiser of the difference $V-\overline F$, so that $ V^\prime=\overline{F}^\prime$ at such a point. Then, as $\mathfrak I^+\ge 0$, it follows from \Cref{DPE:Sannikovrho} that $\mathbf{T}F\big(y_0, \delta\overline F^\prime(y_0)\big)=0$ which cannot happen unless $y_0=0$.
\end{proof}

\begin{remark}\label{rem:DPEstar}
By the strict concavity of $V$, it is natural to introduce the concave dual function $V^\star(p)
 :=
 \inf_{y\ge 0} \; \{yp-V(y)\}$, $p\in\R$. Then, if in addition $V$ is a $C^2$ solution of the dynamic programming equation, $V^\star$ solves the dual equation
 \begin{equation}\label{DPEvstar}
  \mathbf{L}^\star v^\star(p)
  :=
  v^\star(p)-F^\star(\delta p)
  +(\delta-1)p (v^{\star})^\prime(p)
  +\mathfrak{I}\bigg(p,\frac{1}{(v^{\star})^{\prime\prime}(p)}\bigg)^+
  =
  0,
  \; p\in\R.
  \end{equation}
This follows by evaluating \eqref{DPE:Sannikov1} at the point $y=(V^\prime)^{-1}(p)$ and by computing that $V^{\prime\prime}(y)=1/(V^{\star})^{\prime\prime}(p)$.\footnote{Such a transformation can also be conducted if the solution is expressed in the sense of viscosity solutions (as it will be needed later), but one has to be careful as strict convexity is not sufficient, see \citeauthor*{alvarez1997convex} \cite[Proposition 5]{alvarez1997convex} and the remark after its proof.}
\end{remark}
\noindent Our next task here is to prove that the second-best value function has the required regularity so as to apply the verification argument of \Cref{prop:verif}. 
\begin{lemma}\label{lemma:regdelta1}
The second-best value function $V$ is $C^1$ on $[0,+\infty)$, and $C^2$ on $[0,+\infty)$, except at at most one point.
\end{lemma}
\noindent The proof of this regularity result reported separately in the gradually more involved  $\delta=1$, $\delta>1$, and $\delta<1$. 
\begin{proof}[Proof of \Cref{lemma:regdelta1} $(\delta=1)$]
\emph{Step $1$.} Suppose first that $V^\prime_+(0)>0$. Then $V$ is increasing on a right-neighbourhood on $[0,\eps)$, for some $\eps>0$. As $F^\star=0$ on $\R_+$, we see that $V$ is a viscosity solution of $v(y)-y v^\prime(y)-\mathfrak I\big( v^\prime(y), v^{\prime\prime}(y)\big)^+=0,$ $y\in[0,\eps)$. Notice that $V(y)-yV^\prime_{\pm}(y)>0$, by the strict concavity of $V$. Then $V$ is actually a viscosity solution of
\[
v(y)-y v^\prime(y)-\mathfrak I\big( v^\prime(y), v^{\prime\prime}(y)\big)=0.
\;y\in[0,\eps),
\]
Arguing as in the proof of \Cref{lem:existence} \emph{Step $2$}, we see that the latter can be reduced to a uniformly elliptic ODE, which admits a unique classical solution with initial conditions $v(0)=0$ and $v^\prime(0)=V^\prime_+(0)$. In particular, this solution must coincide with $V$, which is therefore $C^2$ on $[0,\eps)$. This extends easily to the interval $[0,\bar y)$ where
\begin{equation}\label{eq:defbary}
\bar y:=\inf\big\{y>\eps/2:\psi_1(y)=0\big\},
\;\psi_\delta(y):=V(y)-\delta yV^\prime_+(y)+F^\star\big(\delta V_+^\prime(y)\big),
\end{equation}
since we have $\mathfrak I\big( V^\prime(y), V^{\prime\prime}(y)\big)>0$ on that interval. Notice also that we must have that $V$ is decreasing on a left-neighbourhood of $\bar y$.

\medskip 
\noindent \emph{Step $2$.} We next consider the alternative case where $V^\prime_+(0)= 0$, see \Cref{lem:existence} (ii). We shall keep the dependence on $\delta$ as the following argument is valid in the case $\delta\ge 1$ and will then be needed for the next proof. As $V\ge \overline F$, we have $0= V^\prime_+(0)\geq \overline F^\prime(0)=F^\prime(0)/\delta$ by \Cref{lemma:facelift}, and we consider two cases:
\begin{itemize}
\item either $F^\prime(0)=0$, then we must also have $V^\prime_+(0)=\overline F^\prime(0)=0$, so that the ODE at the starting point $y=0$ reduces to the second order part. Since $V\ge\overline F$ and $\mathfrak I^+$ is non-decreasing with respect to both its variables, we deduce first that that $\mathfrak I^+\big(0, \overline F^{\prime\prime}(0)\big)\leq 0$, and as both $\overline F$ and $\overline F^\prime$ are concave 
$\mathfrak I^+\big(\overline F^{\prime}(y), \overline F^{\prime\prime}(y)\big)\le\mathfrak I^+\big(0, \overline F^{\prime\prime}(0)\big)\leq 0$. Given that $\overline F$ is solves the ODE $\overline F(y)-y \overline F^\prime(y)=0$, this implies that $\overline F$ solves \Cref{DPE:Sannikovrho} on $\R_+$. By uniqueness, $V=\overline F$ which is therefore $C^2$ as desired; 

\item alternatively, if $F^\prime(0)<0$, then the last argument applies when $\mathfrak I^+(\overline F^\prime(0),\overline F^{\prime\prime}(0))=0$. It remains to consider the case where $\mathfrak I^+(\overline F^\prime(0),\overline F^{\prime\prime}(0))>0$. As $V$ is decreasing, strictly concave, and $\delta\geq 1$ the map $\psi_\delta$ is increasing on a right-neighbourhood of $0$, say $[0,\eps)$, and $\psi_\delta>\psi_\delta(0)=0$ on $(0,\eps)$. 
We may now argue as in the proof of \Cref{lem:existence} \emph{Step $2$} that our equation, with initial conditions $v(0)=0$ and $v^\prime(0)=V^\prime_+(0)$, reduces to a uniformly elliptic ODE, which admits a unique classical solution $V$ on $[0,\bar y)$, where $\bar y$ is defined exactly as in \Cref{eq:defbary} above
\[
\bar y:=\inf\big\{y>\eps/2:\psi_\delta(y)=0\big\}.
\]
\end{itemize}

\noindent \emph{Step $3$.} By direct calculation, we see that
\[
\underset{y\downarrow y^\prime}{\rm liminf}\; \bigg\{\frac{\psi_1(y)-\psi_1(y^\prime)}{y-y^\prime}\bigg\}=-\big(y^\prime-(F^\star)^\prime\big(V^\prime_+(y^\prime)\big)\big)\underset{y\downarrow y^\prime}{\rm limsup}\; \bigg\{\frac{V_+^\prime(y)-V_+^\prime(y^\prime)}{y-y^\prime}\bigg\}.
\]
By concavity of $V$, this implies that a minimum of $\psi_1$ can only be attained at a point $y$ such that $F^\prime(y)=V^\prime_+(y)$, and that the corresponding value is $V(y)- F(y)$. Then, $\psi_1(y)=0$ if we have $V^\prime_+(y)-F^\prime(y)=V(y)- F(y)=0$. Hence, we have $V=F$ at $\bar y$, which by \Cref{lem:existence}$ (iv)$ implies that $V=F$ on $[\bar y,+\infty)$. As a by-product of the last last argument, we have that $V$ is $C^1$ at $\bar y$, and is $C^2$ on $[0,+\infty)\setminus\{\bar y\}$.
\end{proof}

\begin{proof}[Proof of \Cref{lemma:regdelta1} $(\delta>1)$] In this case, it follows from \Cref{lem:existence}.$(iii)$ that $V^\prime_+(0)\leq 0$. Then following \emph{Step $2$} of the previous proof, we either have $F^\prime(0)=0$ and $V=\bar F$, or $F^\prime(0)<0$ and we can then find $\eps>0$ such that $V$ is a classical solution of our nonlinear ODE on $[0,\overline y)$, with $\bar y:=\inf\big\{y>\eps/2:\psi_\delta(y)=0\big\}$.

\medskip
In contrast with the argument in \emph{Step $3$} of the case $\delta=1$, the variations of $\psi_\delta$ for $\delta>1$ seem more difficult to access. Consider instead the dual ODE defined on $(-\infty,\bar p]$ with $\bar p:=V^\prime_+(\bar y)\le 0$:
\[
-\bar v^\star(p)+(1-\delta )p(\bar v^\star)^\prime(p)+F^\star(\delta p)=0,\; p<\bar p,\; \bar v^\star(\bar p)= \bar p\bar y-v(\bar y).
\]
Arguing as in the proof of \Cref{lemma:facelift}, it is easily seen that this ODE has a unique solution given by
\[
\bar v^\star(p):=\overline F^\star(p)-C(\bar y)|p|^{\frac1{1-\delta}},\; p\leq \bar p,
\;\mbox{with}\;
C(\bar y):=\big(\overline F^\star\big(V^\prime_+(\bar y)\big)-\bar yV^\prime_+(\bar y)+V(\bar y)\big)\big|V^\prime_+(\bar y)\big|^{\frac{1}{\delta-1}} \ge 0,
\]
as $\bar v^\star(\bar p)\le\overline F^\star(\bar p)$. Moreover, arguing as in the proof of \Cref{lem:propbarFstar} we see that $\bar v^\star$ is strictly concave and increasing.  By standard convex duality, the corresponding Fenchel transform $\bar v(y):=\inf_{p\leq \bar p}\big\{py-\bar v^\star(p)\big\}$ is the unique strictly concave and decreasing solution of 
\[
\bar v(y)-\delta y\bar v^\prime(y)+F^\star\big(\delta\bar v^\prime(y)\big)=0,\; y>\bar y, \; \bar v(\bar y)=V(\bar y),
\]
and we have that $v^\prime(\bar y)=V^\prime_+(\bar y)$.

\medskip
Next, notice that for any $y\geq \bar y$, we have $\mathfrak I\big(\bar v^\prime(y),\bar v^{\prime\prime}(y)\big)\leq 0$. Indeed, this is true at $\bar y$ by definition, and we claim that $\bar v^\prime$ is concave, which ensures that this remains true for any value of $y$, since $\mathfrak I$ is non-increasing in both its variables. Indeed, notice that $
\bar v^{\prime\prime\prime}(y)=-\frac{\bar v^{\prime\prime}(y)}{(\bar v^\star)^{\prime\prime}(\bar v^\prime(y))}(\bar v^\star)^{\prime\prime\prime}\big(\bar v^\prime(y)\big),$
so that it is equivalent to show that $(\bar v^\star)^\prime$ is concave. Now simply notice that for any $p<\bar p$, we have
\[
(\bar v^\star)^{\prime\prime\prime}(p)=(\overline F^\star)^{\prime\prime\prime}(p)+\frac{\delta (2\delta-1)C(\bar y)}{(1-\delta)^3}|p|^{\frac\delta{1-\delta}-1}\leq 0,
\]
since $(\overline F^\star)^\prime$ is concave (as $\overline F^\prime$ is concave), and $\delta>1$. 

\medskip
Overall, we have proved that $V$ is $C^2$ on $[0,\bar y)$, and we have constructed a $C^2$ solution of \Cref{DPE:Sannikovrho} on $[\bar y,+\infty)$ whose value and derivative at $\bar y$ coincide with those of $V$. This defines a $C^1$ solution of \Cref{DPE:Sannikovrho} which is $C^2$ everywhere except maybe at $\bar y$. It is also obvious that since $\delta>1$, we have 
\[
0\leq \bar v(y)-\overline F(y)\leq 
C(\bar y)|\overline F^\prime(y)|^{-\frac 1{\delta-1}}\longrightarrow 0
~\mbox{as}~y\to\infty.
\]
n particular, we have found a continuous viscosity solution of \Cref{DPE:Sannikovrho} which satisfies the growth condition $0\leq \bar v(y)-\overline F(y)\leq C$, so that it must coincide with $V$ by the comparison result of \Cref{lem:comparison}. 
\end{proof}

\begin{proof}[Proof of \Cref{lemma:regdelta1} $(\frac1{\gamma}<\delta<1)$]

\emph{Step $1$.} By \Cref{lem:existence}.$(iii)$ we have $V^\prime_+(0)\geq 0$ since $\delta<1$.

$\bullet$ First, if $V^\prime_+(0)=0$, then we also have $\overline F^\prime(0)=0$ by \Cref{lemma:facelift}, since $\delta<1$. Since $V\ge \overline F$, and since $\mathfrak I^+$ is non-decreasing with respect to both its variables, we deduce that $\mathfrak I^+(0, \overline F^{\prime\prime}(0))\leq 0,$ so that $\overline F$ solves \Cref{DPE:Sannikovrho} on $\R_+$, as both $\overline F$ and $\overline F^\prime$ are concave. By uniqueness, $V=\overline F$ which is therefore $C^2$ as desired, thus completing the proof in this case. 

$\bullet$ If instead $V^\prime_+(0)>0$. In this case we argue exactly as in \emph{Step $1$} in the proof of \Cref{lemma:regdelta1} to deduce that $V$ is $C^2$ on the interval $[0,\underline y)$, where $\underline y$ is expressed in terms of the function $\psi_\delta$ of \eqref{eq:defbary} as $\underline y:=\inf\big\{y>\eps/2:\psi_\delta(y)=0\big\},$ and that $V$ is decreasing on a left-neighbourhood of $\underline y$.

\medskip
\noindent \emph{Step $2$.} We continue with the case $V'(0)>0$. The idea will now be to regularise $V$. Recall from \Cref{lem:existence} (ii) that $V$ must be equal to $V=\overline F$ on $[\bar y,\infty)$, and consider for all $n\in\N^\star$ the control problem
\begin{equation}\label{eq:vndef}
 v_n(y):=\sup_{(\tau,Z,\pi)\in\Zc_n,\; \hat a\in\hat A(Z)}\E^{\P^{\hat a}}\bigg[\mathrm{e}^{-\rho \tau^{\text{\fontsize{4}{4}\selectfont $y$,$Z$,$\pi$}}}\overline F\big(Y_{\tau^{\text{\fontsize{4}{4}\selectfont $y$,$Z$,$\pi$}}}^{y,Z,\pi}\big)+\int_0^{\tau^{\text{\fontsize{4}{4}\selectfont $y$,$Z$,$\pi$}}}\mathrm{e}^{-\rho t}\big(\hat a_t-\pi_t\big)\mathrm{d}t\bigg],\; y\in[0,\bar y],
\end{equation}
where $\tau^{y,Z,\pi}:=\inf\big\{t\geq 0:Y_t^{y,Z,\pi}\notin(0,\bar y)\big\}$ and $\Zc_n$ is defined as $\Zc(y)$ with the extra requirement that the controls take value in $[1/n,+\infty)$. By standard control theory, $v_n$ is a viscosity solution of the equation
\begin{equation*}
v_n(y)-\delta yv_n^\prime(y)+F^\star\big(\delta v_n^\prime(y)\big)-\mathfrak I_n\big( v_n^\prime(y), v_n^{\prime\prime}(y)\big)=0,\; y\in(0,\bar y),\; v_n(0)=0,\; v_n(\bar y)=\bar F(\bar y),
\end{equation*}
where 
\begin{align}
\mathfrak I_n(p,q)
&=+\infty\mathbf{1}_{\{q>0\}}+\mathbf{1}_{\{q\leq 0\}}\sup_{z\geq \beta\vee1/n,\; \hat a\in\hat A(z)}\big\{\hat a(z)+\delta ph(\hat a(z))+\eta \delta z^2 q\big\},\; (p,q)\in\R^2.
\label{mathfrakIn}
\end{align}
Notice that $\mathfrak I_n\longrightarrow\mathfrak I^+$ as $n=+\infty$. By \Cref{lem:comparisonb}, $v_n$ is in fact the unique viscosity solution of this equation.

\medskip
By the same argument as in the proof of \Cref{lem:existence}, we see that $v_n$ strictly concave on $(0,\bar y)$. Moreover, $v_n\in C^2$ (and uniqueness actually holds in $W^{2,\infty}\big((0,\bar y)\big)$). This follows from \cite[Theorem I.6]{lions1983optimal}, using the following remarks: first we can bound the controls $Z$ and $\pi$ by some arbitrarily large constant, since the optimal one will be bounded as a function of $\tilde v_n$ and its derivatives on the compact set $[0,\bar y]$; second, the ODE \Cref{eq:dppeqn} is uniformly elliptic for any $n\in\N^\star$. By stability of viscosity solutions (see for instance \cite[Section 6]{crandall1992user}), it follows that $v_n$ converges uniformly on $[0,\bar y]$, and thus there is some $\underline y_n$, converging to $\underline y$, above which $v_n$ is non-increasing. By standard convex duality, the Fenchel transform $v^\star_n$ of $v_n$ is $C^1$ on $\big(v^\prime_n(\bar y),v^\prime_n(\underline y_n)\big)$, and it follows from \citeauthor*{alvarez1997convex} \cite[Proposition 5]{alvarez1997convex} (or more precisely the generalisation in \citeauthor*{imbert2006convexity} \cite{imbert2006convexity} which does not require coercitivity conditions), that $v^\star_n$ is a viscosity solution of the ODE
\begin{equation}\label{eq:dualn}
v_n^\star-(1-\delta)p(v_n^\star)^\prime-F^\star(\delta p)+\mathfrak I_n\bigg( p,\frac{1}{(v^\star_n)^{\prime\prime}}\bigg)=0,\;\text{\rm on}\; \big(v^\prime_n(\bar y),v^\prime_n(\underline y\vee \underline y_n)\big),\; v_n^\star\big(v^\prime_n(\bar y)\big)=\bar y,\; v_n^\star\big(v^\prime_n(\underline y\vee \underline y_n)\big)=\underline y\vee\underline y_n.
\end{equation}
Since this is an explicit second-order ODE, the fact that $v_n^\star$ is $C^1$ implies that it is $C^2$ on the above interval, and even $C^3$.

\medskip
Notice also that $V$ must be above $v_n$ by definition. Since these two maps are also concave and coincide at $0$ and $\bar y$, we must have $V^\prime_+(\underline y\vee\underline y_n)\geq v^\prime_n(\underline y\vee\underline y_n)$ and $V^\prime_+(\bar y)\leq v^\prime_n(\bar y)$. Thus, using the Cauchy--Lipschitz theorem, we can uniquely extend $v_n^\star$ to a continuous function on $[V^\prime_-(\bar y),V^\prime_+(\underline y)]$ which is $C^2$ on $(V^\prime_-(\bar y),V^\prime_+(\underline y))$.

\medskip
\noindent \emph{Step $3$.} Clearly, the function $w_n:=-(v_n^\star)^\prime$ is a classical solution of \Cref{eq:DPEdiff} for all $n\ge 1$. Under the conditions of \Cref{hyp:existence}, we verify in \Cref{lemma:fundconc} and \Cref{lem:comparison2} that all the conditions of \cite[Theorem 1]{alvarez1997convex} or more precisely its extension in \cite[Theorem 1]{imbert2006convexity}, are satisfied. Then it follows that $w_n$ is convex, or equivalently, $(v_n^\star)^\prime$ is concave.

\medskip
\noindent \emph{Step $4$.} We now show that $V^\prime$ is concave. By the stability of viscosity solutions again and the comparison results in \Cref{lem:comparisondual,lem:comparison2}, we know that $(v_n^\star)_{n\in\N^\star}$ and $\big((v_n^\star)^\prime\big)_{n\in\N^\star}$ converge uniformly on compact sets respectively to maps $\tilde v^\star$ and $\hat v^\star$. The uniform convergence ensures that $\tilde v^\star$ coincides on $[\underline y,\bar y]$ with $V^\star$, and that $\hat v^\star=(V^\star)^\prime$. By the concavity of $(v_n^\star)^\prime$, this shows in particular that $V^\star$ is $C^1$ and has a concave derivative. Now, it is obvious that any continuous viscosity solution to the PDE satisfied by $V^\star$ has to be strictly concave, and therefore so is $V^\star$. As such, we deduce that $V$ is $C^1$ on $[\underline y,\bar y]$ and that its derivative is concave there, thanks to the concavity of $(V^\star)^\prime$. Thanks to this property, we know that the operator $\mathfrak I$ evaluated along the derivatives of $V$ can only decrease, and thus that $\mathfrak I^+$ remains forever equal to $0$ once it reaches it for the first time at $\underline y$. Arguing as in the proof of \Cref{lemma:regdelta1}, we deduce that $V$ must coincide with $\overline F$ starting from $\underline y$, and that it is $C^1$ at this point and $C^2$ everywhere else, which ends the proof.
\end{proof}

\appendix
\appendixpage

\section{Face-lifted principal's reward}
\label{sec:appendix}

\subsection{Very impatient principal may reduce her loss to zero}

We first consider the case $\rho\ge\gamma r$ of \Cref{prop:Fbar}.$(i)$. Notice that we always have $\overline F(0)=0$, and that since $F$ is non-positive, we have $\overline F\leq 0$. Besides, by our assumptions on $u$, there exists $M>0$ and $C>0$, such that for any $y\geq M$, $F(y)\geq -Cy^\gamma$. Fix some some $y_0>0$ and some $\eps>0$, and consider then the following control $p(t):=\mathbf{1}_{[t^\star,\infty)}(t)\rho\eps y(t),\; t\geq 0,$ where $t^\star$ is the first instant at which $y^{y_{\text{\fontsize{4}{4}\selectfont $0$}},0}$ reaches the value $M$. We immediately have that $y^{y_{\text{\fontsize{4}{4}\selectfont $0$}},p}(t)=y_0\mathrm{e}^{rt}\mathbf{1}_{[0,t^\star)}(t) + M\mathrm{e}^{r(1-\rho\eps)(t-t^\star)}\mathbf{1}_{[t^\star,\infty)}(t),$ $t\geq 0.$ In particular, $T_0^{y_{\text{\fontsize{4}{4}\selectfont $0$}},p}=\infty$, and for $T>t^\star$
\begin{align*}
\overline F(y_0)&\geq \mathrm{e}^{-\rho T}F\big(y^{y_{\text{\fontsize{4}{4}\selectfont $0$}},p}(T)\big)+\int_0^{T} \rho \mathrm{e}^{-\rho t} F\big(p(t)\big)\mathrm{d}t\\
&\geq -CM^\gamma\mathrm{e}^{\gamma r(\rho\eps-1)t^\star-\rho rT(\frac1r-\gamma\rho +\eps\gamma)}-\frac{C\eps^\gamma\rho^{\gamma}M^\gamma\mathrm{e}^{-\rho t^\star}}{\frac1r-\gamma\rho+\gamma\eps}\Big(1-\mathrm{e}^{-\rho r(T-t^\star)(\frac1r-\gamma\rho +\gamma\eps)}\Big)
\;\underset{T\to\infty}{\longrightarrow}\;
\frac{-C\eps^\gamma\rho^{\gamma}M^\gamma\mathrm{e}^{-\rho t^\star}}{\frac1r-\gamma\delta+\gamma\eps},
\end{align*}
by the condition $\rho\ge\gamma r$. As $\gamma>1$, the last limit converges to $0$ as $\eps\searrow 0$.

\subsection{Non-degenerate face-lifted utility}

By standard control theory, and similar to \Cref{rem:DPE}.$(ii)$, we can show that Hamilton--Jacobi equation corresponding to the mixed control-stopping problem $\overline F$ reduces to
\begin{equation}\label{eq:facelift}
w-\delta yw^\prime+F^\star\big(\delta w^\prime\big)=0,
\;\mbox{on}\;(0,\infty),\; 
w(0)=0,
\end{equation}
as the last equation implies that $w-F = \delta yw^\prime-F-F^\star\big(\delta w^\prime\big)\ge 0$. Now, this ODE has $0$ as a trivial solution, and when $\delta=1$, it has a unique strictly concave solution given by $F$. The following lemma addresses the general case.
\begin{lemma}\label{lemma:facelift}
Let $\delta\neq 1$ and $\gamma\delta>1$, Denote by $w^\star$ be the function introduced in \eqref{Fbarstar}, and let $w:=(w^\star)^\star$ be its concave conjugate. Then $w$ is a solution of \eqref{eq:facelift} satisfying $\bar c_0(-1+y^\gamma)\le w(y)\le \bar c_0(-1+y^\gamma)$. Moreover, $w^\prime(0)
=
F^\prime(0)\frac{1}{\delta}\1_{\{\delta\ge 1\}}$.
\end{lemma}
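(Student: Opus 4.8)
The plan is to reduce the fully nonlinear first--order ODE \eqref{eq:facelift} to a \emph{linear} first--order ODE by means of the concave Legendre--Fenchel transform, to solve that linear ODE explicitly — this is exactly how the formula \eqref{Fbarstar} arises — and then to transfer regularity and growth information back through the transform. Concretely, one looks for a strictly concave, decreasing, $C^2$ solution $w$ of \eqref{eq:facelift}; for such a $w$ the map $y\mapsto w'(y)$ is a $C^1$--diffeomorphism onto a sub--interval of $\R_-$, and writing $y=(w^\star)'(p)$ and $w(y)=y\,w'(y)-w^\star(w'(y))$ with $p=w'(y)$, equation \eqref{eq:facelift} becomes
\[
w^\star(p)-(1-\delta)\,p\,(w^\star)'(p)=F^\star(\delta p),
\]
to be solved on the relevant $p$--interval together with the boundary condition $w^\star(0^-)=0$ (which encodes $w(0)=0$ and $w\le 0$). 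One must be careful because $F^\star$ is only piecewise smooth — it vanishes identically on a half--line $[F'(0),\infty)$ — so the reduction is carried out first on the region where $F^\star(\delta\cdot)<0$ and then matched across the kink; one also needs $(w^\star)''\neq 0$ for the change of variables to be licit, which will be built into the explicit solution.

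\textbf{Solving the linear ODE.} The homogeneous equation $w^\star=(1-\delta)p(w^\star)'$ has the one--parameter family of solutions $c\,|p|^{1/(1-\delta)}$, and variation of constants yields the general solution as $|p|^{1/(1-\delta)}$ times an affine functional of $\int |s|^{-1-1/(1-\delta)}F^\star(\delta s)\,\mathrm ds$; after the substitution $x=\delta s$ and absorbing the resulting power of $\delta$ into the prefactor, this is precisely the right--hand side of \eqref{Fbarstar}, up to the constant of integration. That constant — equivalently the lower endpoint $b$ — is pinned down by the behaviour at the ends of the $p$--interval. When $\delta>1$ the exponent $1/(1-\delta)$ is negative, so the homogeneous mode $|p|^{1/(1-\delta)}$ blows up as $p\to 0^-$; the requirement $w^\star(0^-)=0$ then forces the bracketed factor to vanish at $0$, which amounts to integrating from $-\infty$, i.e.\ $b=-\infty$, the improper integral converging thanks to the growth estimate \eqref{cond:Fstar} together with the hypothesis $\gamma\delta>1$. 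When $\delta<1$ the homogeneous mode vanishes at $0$, and it is instead the requirement that $w^\star$ coincide with $0$ on the flat region of $F^\star(\delta\cdot)$ that fixes the constant, giving $b=F'(0)$; this is also where the standing assumption that $\lim_{y\to\infty}F'(y)/(yF''(y))$ exists (imposed when $\rho<r$) is used, as it governs the asymptotics of $F^\star$ at $-\infty$ needed for the matching and for the finiteness of $w^\star$.

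\textbf{Transferring regularity, growth, and the boundary value.} From the explicit formula \eqref{Fbarstar} one reads off that $w^\star$ is $C^2$ on $\R_-$ with $(w^\star)''<0$, concave, non--positive, and satisfies a two--sided bound of the same type as \eqref{cond:Fstar} with positive constants $\bar c_0^\star,\bar c_1^\star$. Hence its concave conjugate $w:=(w^\star)^\star$ is strictly concave and $C^2$, its conjugate is again $w^\star$, and reversing the first step shows that $w$ solves \eqref{eq:facelift} with $w(0)=0$. The bounds on $w^\star$ dualise into $\bar c_0(-1+y^\gamma)\le w(y)\le \bar c_1(-1+y^\gamma)$ for some $\bar c_0,\bar c_1>0$ (correcting the evident typo in the statement, where the right--hand constant should read $\bar c_1$). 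Finally, $w'(0)=\lim_{y\downarrow 0}w'(y)$ is the largest $p$ at which $(w^\star)'(p)=0$, i.e.\ the left endpoint of the flat part of $w^\star$; the analysis of the previous paragraph shows this endpoint equals $0$ when $\delta<1$ — so $w'(0)=0$ — and equals $F'(0)/\delta$ when $\delta\ge 1$, giving $w'(0)=F'(0)\,\delta^{-1}\mathbf{1}_{\{\delta\ge 1\}}$, the case $\delta=1$ (where $w=F$) being trivially consistent.

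\textbf{Main obstacle.} The crux is the second step: showing that the linear ODE together with the single boundary condition $w^\star(0^-)=0$ and the admissible growth at $-\infty$ singles out exactly \eqref{Fbarstar} with the stated value of $b$. This is delicate precisely because it requires excluding the homogeneous mode $|p|^{1/(1-\delta)}$ and controlling an improper integral whose convergence rests on the sharp interplay between $\gamma$, $\delta$ and the asymptotics of $F^\star$ recorded in \eqref{cond:Fstar} — and it is here that the hypotheses $\gamma\delta>1$ and (for $\rho<r$) the existence of the limit of $F'/(yF'')$ are genuinely needed. The remaining checks — strict negativity and continuity of $(w^\star)''$, hence $C^2$ regularity and strict concavity of $w$, which in turn legitimise the formal dualisation of the first step — are routine once the explicit formula is available.
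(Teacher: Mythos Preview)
Your overall strategy—reduce \eqref{eq:facelift} via the concave Legendre transform to a linear first--order ODE, solve it explicitly, and dualise back—is exactly the paper's. But you have the two cases $\delta>1$ and $\delta<1$ interchanged when fixing the integration endpoint $b$, and the justifications you give for each do not hold.

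For $\delta>1$ the exponent $1/(1-\delta)$ is indeed negative and the homogeneous mode $|p|^{1/(1-\delta)}$ blows up as $p\to 0^-$; however the improper integral $\int_{-\infty}^{\delta p}|x|^{-1-1/(1-\delta)}F^\star(x)\,\mathrm dx$ \emph{diverges} at $-\infty$, because by \eqref{cond:Fstar} the integrand behaves like $|x|^{-1+1/(\delta-1)+\gamma/(\gamma-1)}$ and the exponent exceeds $-1$. So $b=-\infty$ is not even admissible here. What the paper does instead is first establish, by a direct contradiction argument on \eqref{eq:facelift} near $y=0$, that $w'(0)=F'(0)/\delta=:f_\delta$; this identifies the domain of $w^\star$ as $(-\infty,f_\delta]$ with boundary condition $w^\star(f_\delta)=0$, and setting $C=0$ with the integral running to $f_\delta$ (equivalently $b=F'(0)$ after substitution) is what the boundary condition forces.

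For $\delta<1$ the homogeneous mode vanishes at $0$, so the boundary condition there is automatic and does \emph{not} determine $C$; nor does any ``matching on the flat region of $F^\star(\delta\,\cdot)$''. What pins $C$ down is the growth at $-\infty$: the required bound $|w^\star(p)|\le \bar c_1^\star(1+|p|^{\gamma/(\gamma-1)})$ forces $|p|^{-1/(1-\delta)}w^\star(p)\to 0$ as $p\to-\infty$—and it is precisely here that the hypothesis $\gamma\delta>1$ is used—whence $C$ must equal the tail integral, i.e.\ $b=-\infty$. The limit hypothesis on $F'/(yF'')$ plays no role in this lemma; it enters only in the subsequent verification that the candidate actually equals $\overline F$.

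Because your determination of $b$ is reversed, your derivation of $w'(0)$ as ``the endpoint of the flat part of $w^\star$'' rests on the wrong case analysis and becomes circular. The paper proceeds in the opposite order: it proves $w'(0)=F'(0)\delta^{-1}\mathbf 1_{\{\delta\ge 1\}}$ first, directly from the ODE, and only then sets up the dual problem on the correct domain.
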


\begin{proof}
Notice first that if $w$ solves \Cref{eq:facelift}, then, whenever $w^\prime(0)$ is finite, by letting $y$ go to $0$, we get $F^\star\big(\delta w^\prime(0)\big)=0$. This implies that $\delta w^\prime(0)\geq F^\prime(0)$, and as $w\ge F$ we deduce that $w^\prime(0)\geq \frac{F^\prime(0)}{1\vee\delta}$, an inequality obviously satisfied when $w^\prime(0)=\infty$, which is the only possible infinite value by concavity of $w$. We consider two cases to verify that
\begin{equation}\label{Fbarprime}
w^\prime(0)
=
f_\delta:=
F^\prime(0)\frac{1}{\delta}\1_{\{\delta\ge 1\}}.
\end{equation}
\begin{itemize}
\item $\delta\ge 1$. Assume to the contrary that $\delta w^\prime(0)>F^\prime(0)$, then $\delta w^\prime>F^\prime(0)$ on $[0,\eps)$ for some $\eps>0$. This in turn implies that $F^\star(\delta w^\prime)=0$ on $[0,\eps)$, and equation \eqref{eq:facelift} reduces to $w(y)-\delta y w^\prime(y)=0$, on $[0,\eps)$, and we get
 \begin{equation}\label{barF:near0}
 w(y)
 =
 {\rm Const.}  \frac{w(y_0)}{y_0^{\frac{1}{\delta}}} y^{\frac1\delta},
 0<y_0\le y<\eps.
 \end{equation}
As $w(0)=0$ and $w\le 0$, we see that $y_0^{-\frac{1}{\delta}}w(y_0)\underset{y_0\searrow 0}{\longrightarrow} 0$. Then $w=0$ on $[0,\eps)$, contradicting the strict concavity of $w$.

\item {$\delta< 1$}. Since $w^\prime(0)\ge F^\prime(0)$, we have again $F^\star(\delta w^\prime)=0$ on $[0,\eps)$, for some $\eps>0$, and by arguing as in the previous case, we arrive to \eqref{barF:near0}, and we see that $w^\prime(0)=0$ is necessary in order to avoid explosion of $w(y_0)/y_0^{\frac{1}{\delta}}$ near $0$.
\end{itemize}
\noindent By \eqref{Fbarprime}, any strictly concave solution of \eqref{eq:facelift} is decreasing, and the Fenchel dual function $w^\star(p):=\inf_{y\geq 0}\big\{py-w(y)\big\}$, whose domain is $(-\infty,f_\delta]$, satisfies the ODE
\begin{equation}\label{ODEwstar}
-w^\star(p)+(1-\delta )p\big(w^\star\big)^\prime(p)+F^\star(\delta p)=0,\; p<f_\delta,\; w^\star(f_\delta)=0.
\end{equation}
This linear ODE has the generic solution, for any $C\in\R$ and $\eps>0$
\begin{equation}\label{eq:sol}
w^\star(p)=(-p)^{-\frac1{\delta-1}}\bigg(C-\frac1{1 -\delta}\int_p^{f_{\text{\fontsize{4}{4}\selectfont $\delta$}}-\eps}\frac{F^\star(\delta x)}{(-x)^{1+\frac{1}{1-\delta}}}\mathrm{d}x\bigg),\; p<f_\delta.
\end{equation}
We next determine $C$ and $\eps$ so as to match boundary condition $w^\star(f_\delta)=0$.

\medskip
\noindent \emph{Case $1$: $f_\delta<0$.}
In this case, we may take $\eps=C=0$, thus reducing to the unique solution
\[
w^\star(p)=\frac{(-p)^{-\frac1{\delta-1}}}{\delta-1}\int_p^{f_{\text{\fontsize{4}{4}\selectfont $\delta$}}}\frac{F^\star(\delta x)}{(-x)^{1+\frac{1}{1-\delta}}}\mathrm{d}x,\; p\leq f_\delta.
\]
In this case, we necessarily have $\delta>1$, therefore, by \Cref{lem:propbarFstar}, $w^\star$ is strictly concave, increasing, and $w^\star\leq F^\star$. This immediately proves that $w$ is unique, strictly concave, decreasing, and above $F$. Besides, the explicit formula we obtained shows by direct integration and using \Cref{cond:Fstar} that $w^\star$ satisfies also \Cref{cond:Fstar} with appropriate constants, which directly implies the required inequalities for $w$.

\medskip
\noindent \emph{Case $2$: $f_\delta=0$ and $\delta>1$.} We can again take $\eps=C=0$ to match the boundary condition $w^\star(0)=0$, as 
\[
\frac{(-p)^{-\frac1{\delta-1}}}{1 -\delta}\int_p^{0}\frac{F^\star(\delta x)}{(-x)^{1+\frac{1}{1-\delta}}}\mathrm{d}x\underset{p\to0-}{=}o(1).
\]
Moreover, by \Cref{lem:propbarFstar}, $w^\star$ is strictly concave, increasing, and $w^\star\leq F^\star$. This immediately proves that $w$ is unique, strictly concave, decreasing, and above $F$. We deduce that $w$ satisfies the required inequalities as in the previous case.

\medskip
\noindent \emph{Case $3$: $f_\delta=0$ and $\delta<1$.} In this case, it can be checked that for any $C\in\R$ and $\eps>0$, we have
\[
\lim_{p\to 0-}(-p)^{-\frac1{\delta-1}}\bigg(C-\frac1{1 -\delta}\int_p^{-\eps}\frac{F^\star(\delta x)}{(-x)^{1+\frac{1}{1-\delta}}}\mathrm{d}x\bigg)=0.
\]
We therefore have, \emph{a priori}, infinitely many possible solutions to the ODE. However, notice that the growth imposed on $w$ translates into $\bar c_0^\star(-1+|p|^{\frac\gamma{\gamma-1}})\le w^\star(p)\le \bar c_1^\star(1+|p|^{\frac\gamma{\gamma-1}})$, and this implies that 
\begin{equation}\label{growth1-delta}
|p|^{-\frac1{1-\delta}}w^\star(p)
\le 
\bar c_1^\star \Big(|p|^{-\frac1{1-\delta}}
                        +|p|^{\frac{1-\gamma\delta}{(1-\delta)(\gamma-1)}}
                 \Big)
\underset{p\to-\infty}\longrightarrow 0,
\end{equation}
as $\gamma\delta>1$. Then, $C$ and $\eps$ must be such that $
C=\frac1{1-\delta}\int_{-\infty}^{-\eps}F^\star(\delta x)(-x)^{-1-\frac{1}{1-\delta}}\mathrm{d}x,
$ where the finiteness of the last integral is satisfied in our setting again by $\gamma\delta>1$. Consequently, $w^\star$ is uniquely determined and given by
\begin{equation}\label{wstardelta<1}
w^\star(p)=\frac{(-p)^{-\frac1{\delta-1}}}{(1-\delta)}
\displaystyle\int_{-\infty}^p\frac{F^\star(\delta x)}{(-x)^{1+\frac{1}{1-\delta}}}\mathrm{d}x,\; p\leq 0,
\end{equation}
and we can again use \Cref{lem:propbarFstar} to conclude. Finally, it follows from a direct change of variable that the above explicit solution coincides with the expression in \Cref{Fbarstar} in all above cases.
\end{proof}

\noindent The following is the easy inequality in a generic verification theorem for $\overline F$.
\begin{lemma}\label{lem:vsupersolODEFbar}
Let $w$ be a $C^2$ super-solution of $w-\delta y w^\prime+F^\star(\delta w^\prime)\ge 0$ on $\R_+$. Then $w\ge\overline{F}$.
\end{lemma}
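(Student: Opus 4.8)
The plan is to run the classical verification argument: \eqref{eq:facelift} is (equivalently) the Hamilton--Jacobi equation of the deterministic control--and--stopping problem \eqref{Fbar}, so a pathwise estimate along the controlled curves $y^{y_0,p}$ will force any $C^2$ supersolution to dominate the value function $\overline F$.

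First I would record the pointwise bound $w\ge F$ on $\R_+$, which is already contained in the supersolution property. Indeed, choosing $z=y$ in $F^\star\big(\delta w^\prime(y)\big)=\inf_{z\ge 0}\big\{\delta w^\prime(y)\,z-F(z)\big\}$ yields $F^\star\big(\delta w^\prime(y)\big)\le \delta y\,w^\prime(y)-F(y)$, and substituting this into $w(y)-\delta y\,w^\prime(y)+F^\star\big(\delta w^\prime(y)\big)\ge 0$ gives $w(y)\ge F(y)$ for every $y\ge 0$.

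Next I would fix $y_0\ge 0$, an admissible control $p\in\Bc_{\R_+}$ and a horizon $T\in[0,T_0^{y_0,p}]$, abbreviate $y(\cdot):=y^{y_0,p}(\cdot)$, and note that $y$ is locally Lipschitz and stays in $\R_+$ on $[0,T]$, so that (since $w\in C^2(\R_+)$) the map $t\mapsto \mathrm{e}^{-\rho t}w(y(t))$ is absolutely continuous there. Using the state equation \eqref{eq:defydet} and $r=\delta\rho$,
\[
\frac{\mathrm{d}}{\mathrm{d}t}\Big(\mathrm{e}^{-\rho t}w\big(y(t)\big)\Big)=\rho\,\mathrm{e}^{-\rho t}\Big(\big(\delta y(t)\,w^\prime(y(t))-w(y(t))\big)-\delta\,p(t)\,w^\prime(y(t))\Big).
\]
Bounding $\delta y\,w^\prime-w\le F^\star(\delta w^\prime)$ by the supersolution inequality, and then $F^\star\big(\delta w^\prime(y(t))\big)\le \delta w^\prime(y(t))\,p(t)-F\big(p(t)\big)$ by the Fenchel inequality (valid because $p(t)\ge 0$), the parenthesis is at most $-F(p(t))$, whence
\[
\frac{\mathrm{d}}{\mathrm{d}t}\Big(\mathrm{e}^{-\rho t}w\big(y(t)\big)\Big)\le -\rho\,\mathrm{e}^{-\rho t}F\big(p(t)\big),\qquad t\in[0,T].
\]

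Integrating over $[0,T]$ and using $w(y(T))\ge F(y(T))$ from the first step (this is finite, since $u(\infty)=\infty$ makes $F$ real-valued on all of $\R_+$, and $w\in C^2$ up to the boundary so the endpoint $y(T)=0$ is harmless) gives
\[
w(y_0)\ \ge\ \mathrm{e}^{-\rho T}w\big(y(T)\big)+\int_0^T\rho\,\mathrm{e}^{-\rho t}F\big(p(t)\big)\,\mathrm{d}t\ \ge\ \mathrm{e}^{-\rho T}F\big(y(T)\big)+\int_0^T\rho\,\mathrm{e}^{-\rho t}F\big(p(t)\big)\,\mathrm{d}t,
\]
and taking the supremum over $p\in\Bc_{\R_+}$ and $T\in[0,T_0^{y_0,p}]$ yields $w(y_0)\ge\overline F(y_0)$, which is the claim. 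No step is a genuine obstacle here: the only care points are the bookkeeping of the constants $r$, $\rho$, $\delta$ (with the normalisation of \eqref{eq:facelift} the two discount factors cancel exactly in the computation above) and the absolute continuity of $t\mapsto\mathrm{e}^{-\rho t}w(y(t))$, which is immediate from the regularity of $w$ and of $y^{y_0,p}$; whenever the admissible payoff equals $-\infty$ the inequality holds trivially.
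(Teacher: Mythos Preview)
Your proof is correct and follows essentially the same verification argument as the paper's own proof: both differentiate $t\mapsto\mathrm{e}^{-\rho t}w(y^{y_0,p}(t))$, use the supersolution inequality together with the Fenchel bound $F^\star(\delta w')\le \delta w'\,p-F(p)$ to control the running term, and then invoke $w\ge F$ for the terminal term. The only difference is presentational---the paper writes the integral identity first and then bounds, while you bound the derivative and then integrate---but the two are the same computation.
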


\begin{proof}
We first observe that $w-F\ge\mathbf{T}F(y,\delta w^\prime)\ge 0$. We next compute for all $p\in\Bc_{\R_{\text{\fontsize{4}{4}\selectfont $+$}}}$ and $T\le T_0^{y_{\text{\fontsize{4}{4}\selectfont $0$}},p}$ that
 \begin{align*}
 w(y_0)
 \ge
 \mathrm{e}^{-\rho T}F\big(y^{y_{\text{\fontsize{4}{4}\selectfont $0$}},p}(T)\big)
 +\int_0^T \rho \mathrm{e}^{-\rho t} F\big(p(t)\big)\mathrm{d}t,
 \end{align*}
by the super-solution property of $w$. The arbitrariness of $p\in\Bc_{\R_{\text{\fontsize{4}{4}\selectfont $+$}}}$ and $T\le T_0^{y_{\text{\fontsize{4}{4}\selectfont $0$}},p}$  implies that $w\ge\overline{F}$.
\end{proof}

\begin{lemma}\label{lem:verifFbar}
Let $\delta\neq 1$ and $\delta\gamma>1$. Assume further that {\rm\Cref{hyp:existence}} holds. Then $\overline{F}=(\overline{F}^\star)^\star$, where $\overline{F}^\star$ is given explicitly in \eqref{Fbarstar}, is the unique solution of {\rm \Cref{eq:facelift}} in the class of  functions satisfying $\bar c_0(-1+y^\gamma)\le\overline{F}(y)\le \bar c_1(-1+y^\gamma)$. Moreover, $\overline{F}$ is a strictly concave decreasing majorant of $F$, with $\overline F^\prime(0)=\delta^{-1} F^\prime(0)\1_{\{\delta>1\}}$.
\end{lemma}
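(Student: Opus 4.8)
This is a verification argument. By \Cref{lemma:facelift}, the function $w:=(w^\star)^\star$ attached to the explicit $w^\star$ of \eqref{Fbarstar} is a $C^2$, strictly concave, decreasing solution of \eqref{eq:facelift} lying in the growth class of \eqref{cond:u}, with $w\ge F$ and $w^\prime(0)=\delta^{-1}F^\prime(0)\mathbf{1}_{\{\delta>1\}}$. Hence all the assertions of the present lemma follow at once from the identification $\overline F=w$, and since I shall in fact prove that \emph{any} $C^2$ solution $v$ of \eqref{eq:facelift} in that growth class equals $\overline F$, the uniqueness statement comes along for free. Fix such a $v$. The inequality $v\ge\overline F$ is the easy one: a solution of \eqref{eq:facelift} is in particular a $C^2$ super--solution of $v-\delta y v^\prime+F^\star(\delta v^\prime)\ge 0$, so \Cref{lem:vsupersolODEFbar} gives $v\ge\overline F$ on $\R_+$ directly.

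For the reverse inequality I would run the feedback control naturally associated with $v$. Fix $y_0>0$ and set $\hat p(y):=(F^\star)^\prime\big(\delta v^\prime(y)\big)$, i.e.\ the minimiser in $F^\star(\delta v^\prime(y))=\inf_{y^\prime\ge 0}\{\delta y^\prime v^\prime(y)-F(y^\prime)\}$; by strict concavity and $C^2$--regularity of $v$ it is non--negative and continuous on $(0,\infty)$. Let $y(\cdot):=y^{y_0,\hat p}(\cdot)$ be the corresponding trajectory of \eqref{eq:defydet}, with exit time $T_0:=T_0^{y_0,\hat p}\in(0,\infty]$. Substituting $\hat p$ into \eqref{eq:facelift} and rearranging the Legendre duality, one checks that along this trajectory $\frac{\mathrm d}{\mathrm dt}\big[\mathrm e^{-\rho t}v(y(t))\big]=-\rho\,\mathrm e^{-\rho t}F\big(\hat p(y(t))\big)$ on $[0,T_0)$; integrating on $[0,T]$ for any $T<T_0$ gives
\[
v(y_0)=\mathrm e^{-\rho T}v\big(y(T)\big)+\int_0^{T}\rho\,\mathrm e^{-\rho t}F\big(\hat p(y(t))\big)\mathrm{d}t .
\]
Letting $T\uparrow T_0$, the integral tends to $\int_0^{T_0}\rho\mathrm e^{-\rho t}F(\hat p(y(t)))\mathrm{d}t$, which is $\le\overline F(y_0)$ by \eqref{Fbarbis}; so everything reduces to showing $\mathrm e^{-\rho T}v(y(T))\to 0$ as $T\uparrow T_0$. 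If $T_0<\infty$ this is immediate, since $y(T_0)=0$ and $v(0)=0$.

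The delicate case is $T_0=\infty$, where the closed--loop trajectory never reaches $0$. There one combines the growth bound $|v(y)|\le\bar c_0(1+y^\gamma)$ with a sufficiently sharp asymptotic estimate for $y(\cdot)$: since $\hat p\ge 0$ one gets the crude bound $y(t)\le y_0\mathrm e^{rt}$, but (because $\rho<\gamma r$ throughout the lemma) this is not enough, and one must pin down the precise exponential rate of $y(t)$ — equivalently, the leading asymptotic constant of $v$ relative to $F$. This is exactly where the hypothesis that $\lim_{y\to\infty}F^\prime(y)/\big(yF^{\prime\prime}(y)\big)$ exists, imposed in \Cref{hyp:existence} precisely in the regime $\rho<r$, enters: it controls the growth of the optimal trajectory and forces $\mathrm e^{-\rho t}v(y(t))\to 0$. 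Granting this, $v(y_0)\le\overline F(y_0)$, and combining with the first paragraph, $v=\overline F$.

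Taking $v=w$ then yields $\overline F=w=(\overline F^\star)^\star$ with $\overline F^\star=w^\star$ as in \eqref{Fbarstar}, and uniqueness in the growth class is the statement just proved; strict concavity, monotonicity, the majorisation $\overline F\ge F$, and $\overline F^\prime(0)=\delta^{-1}F^\prime(0)\mathbf{1}_{\{\delta>1\}}$ are then read off from \Cref{lemma:facelift}. Besides the (routine) verification that the feedback trajectory exists and that $\hat p$ and $T_0$ satisfy the requisite measurability, the only genuinely hard step is the asymptotic control of the optimal trajectory in the non--terminating case, which is where \Cref{hyp:existence} earns its keep.
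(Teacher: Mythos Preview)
Your approach is correct and matches the paper's verification argument: \Cref{lem:vsupersolODEFbar} for the inequality $\ge\overline F$, then the feedback control $\hat p(y)=(F^\star)^\prime(\delta v^\prime(y))$ and the resulting integration identity for $\le\overline F$, with the transversality condition as the crux. The paper carries this out for the specific $w$ of \Cref{lemma:facelift} (not for an arbitrary solution in the growth class---uniqueness comes from \Cref{lemma:facelift} itself), and makes the case split you leave implicit fully explicit: differentiating \eqref{eq:facelift} gives $\dot y^\star_t=r(\delta^{-1}-1)\,y^\star_t\,h(y^\star_t)$ with $h(y):=w^\prime(y)/(yw^{\prime\prime}(y))\ge 0$, so for $\delta>1$ the trajectory is decreasing and $T_0<\infty$, while for $\delta<1$ it is increasing and the transversality is obtained by proving $\sup_{y\ge\hat y}h(y)<\frac{1}{\gamma(1-\delta)}$ via the explicit dual formula \eqref{wstardelta<1} together with \Cref{hyp:existence}.
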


\begin{proof}
We show by a standard verification argument that $w=\overline{F}$ where $w$ is the solution of \eqref{eq:facelift} whose concave dual was derived explicitly in the \Cref{lemma:facelift}. By \Cref{lem:vsupersolODEFbar}, $w$ is an upper bound for $\overline{F}$, i.e. $\overline{F}\le w$. On the other hand, consider for any $y>0$ the maximiser in the definition of $F^\star(\delta w^\prime(y))$, that is to say $(F^\star)^\prime\big(\delta w^\prime(y)\big)$ as a feedback control $\dot y^\star_t = r\big( y^\star_t - p^\star_t\big),$ where $p^\star_t := (F^\star)^\prime\big(\delta w^\prime(y^\star_t)\big).$ Direct differentiation of \eqref{eq:facelift} provides that for any $y>0$, $(1-\delta)w^\prime(y)=\big(y-(F^\star)^\prime\big(\delta w^\prime(y))\big)\delta w^{\prime\prime}(y)$, so that
 \[
 \dot y^\star_t
 =
 r\bigg(\frac1\delta-1\bigg)y^\star_t h(y^\star_t),
 \;t\ge 0,\;\mbox{where}\;h(y):=\frac{w^\prime(y)}{yw^{\prime\prime}(y)},\; y>0.
 \]
Since $h\ge 0$, we see that $y^\star$ is decreasing when $\delta>1$, and is therefore well-defined at least until the hitting time of zero $T^\star:=T_0^{y_{\text{\fontsize{4}{4}\selectfont $0$}},p^{\text{\fontsize{4}{4}\selectfont $\star$}}}<\infty$. In contrast, when $\delta<1$, $y^\star$ is increasing until some explosion time $\bar T$, and $T^\star=\infty$. Following the same calculation as in the first step of the present proof, we see that under the control $p^\star$, all inequalities are turned into equalities, leading for any $T\in [0,\bar T]$ to
 \begin{equation}\label{transversality}
 w(y_0)
 =
 \mathrm{e}^{-\rho T\wedge T^\star}w(y^\star_{T\wedge T^\star})
 +\int_0^{T\wedge T^\star} \rho \mathrm{e}^{-\rho t}F(p^\star_t)\mathrm{d}t.
 \end{equation}
First, by the previous step, when $\delta>1$, we have $T^\star<\infty$, and we obtain by sending $T$ to $\infty$ and using the boundary condition $w(0)=0$ that $(T^\star,p^\star)$ attains the upper bound $w(y_0)$, and is therefore an optimal control for the problem $\overline{F}$. In the alternative case $\delta<1$, we have $T^\star=\infty$. In the rest of this proof, we show that 
 \begin{equation}\label{claim:verifFbar}
 \bar h 
 :=
 \sup_{y\ge \hat y}\{h(y)\} 
 < \frac1{\gamma(1-\delta)},
 \;\mbox{for some}\;
 \hat y>0.
 \end{equation}
Then, $y^\star$ is defined on $\R_+$, \emph{i.e.} $\bar T=\infty$, and since $\hat T:=\inf\{t\ge 0:y^\star_t\ge \hat y\}<\infty$, we deduce from the growth of $w$ that for some $C>0$, whose value may change from line to line, and any $t\geq \hat T$
 \begin{align*}
 \mathrm{e}^{-\rho (t-\hat T)}|w(y^\star_t)|
 \le
 C\mathrm{e}^{-\rho (t-\hat T)}\big(1+|y^\star_t|^\gamma\big)
 &\le 
 C\mathrm{e}^{-\rho (t-\hat T)}\Big(1+\mathrm{e}^{\rho\gamma(1-\delta)\int_{\hat T}^t h(y^\star_{\text{\fontsize{4}{4}\selectfont $s$}})\mathrm{d}s}\Big)
 \le
 C\mathrm{e}^{-\rho (t-\hat T)}\Big(1+\mathrm{e}^{\rho\gamma(1-\delta)\bar h(t-\hat T)}\Big)
 \underset{t\to\infty}{\longrightarrow} 0,
 \end{align*}
as $1-\gamma(1-\delta)\bar h>0$. Sending $T$ to $\infty$ in \eqref{transversality} this provides again that $(T^\star,p^\star)=(\infty,p^\star)$ attains the upper bound $w(y_0)$. In order to verify \eqref{claim:verifFbar}, we prove equivalently that the concave dual $w^\star$ satisfies
\begin{equation} \label{claim:verifFbarstar}
 \sup_{p\le\hat p}\bigg\{\frac{p(w^{\star})^{\prime\prime}(p)}{(w^{\star})^\prime(p)} \bigg\}
 <
 \frac1{\gamma(1-\delta)},\;\mbox{for some}\;
 \hat p<0.
 \end{equation}
Differentiating the ODE \eqref{ODEwstar} satisfied by $w^\star$, and using the expression of $w^\star$ from \Cref{wstardelta<1}, we see that 
\[
\frac{p(w^{\star})^{\prime\prime}}{(w^{\star})^\prime} 
=
\frac{\delta}{1-\delta}
- \Psi(\delta p),
\;\mbox{with}\;
\Psi(p)
:=
\frac{-p\psi^\prime(p)}{\psi(p)},
\;\mbox{and}\;
\psi(p):=(-p)^{-\frac1{1-\delta}}F^\star(p)-\int_{-\infty}^p \frac{F^{\star}(u)}{(1-\delta) (-u)^{1+\frac{1}{1-\delta}}}\mathrm{d}u.
\]
Notice that $\lim_ {p\to-\infty}\psi(p)=0$, and that $\psi$ is easily shown to be non-negative and non-decreasing. Therefore, if $-p\psi^\prime(p)$ does not go to $0$ as $p$ goes to $-\infty$, we have that $\lim_{p\to-\infty}\Psi(\delta p)=\infty$, and \Cref{claim:verifFbarstar} automatically holds. Now if $-p\psi^\prime(p)\underset{p\to-\infty}\longrightarrow 0$, it follows from l'H\^opital's rule and our assumptions that
\[
\lim_{y\to\infty}\bigg\{\frac{F^\prime(y)}{yF^{\prime\prime}(y)}\bigg\}= \lim_{p\to-\infty}\bigg\{\frac{p(F^{\star})^{\prime\prime}(p)}{(F^{\star})^\prime(p)}\bigg\}= \frac{\delta}{1-\delta}
 -\lim_{p\to-\infty}\bigg\{\frac{-\psi^\prime(p)-p\psi^{\prime\prime}}{\psi^\prime(p)}\bigg\}.
\]
Then, using again l'H\^opital's rule, we deduce
 \begin{align*}
 \limsup_{p\to-\infty}\bigg\{\frac{p(w^{\star})^{\prime\prime}(p)}{(w^{\star})^\prime(p)}\bigg\}
 &=
 \frac{\delta}{1-\delta}
 -\liminf_{p\to-\infty}\bigg\{\frac{-p\psi^\prime(p)}{\psi(p)}\bigg\}
 = 
 \frac{\delta}{1-\delta}
 -\liminf_{p\to-\infty}\bigg\{\frac{-\psi^\prime(p)-p\psi^{\prime\prime}}{\psi^\prime(p)}\bigg\}
 =
 \lim_{y\to\infty}\bigg\{\frac{F^\prime(y)}{yF^{\prime\prime}(y)}\bigg\}.
 \end{align*}
Then, assuming to the contrary that \eqref{claim:verifFbarstar} does not hold means that, for fixed $\gamma_0\in(1+\gamma(1-\delta),\gamma)$, we may find $y_0>0$ such that $\frac{F^{\text{\fontsize{4}{4}\selectfont $\prime\prime$}}(y)}{F^{\text{\fontsize{4}{4}\selectfont $\prime$}}(y)}\le (\gamma_0-1)\frac1y$, for $y\ge y_0$. Integrating twice and recalling that $F\le 0$, this implies that 
 \[
 F(y)
 \ge
 F(y_0)+\frac{y_0F^\prime(y_0)}{\gamma_0}\bigg(\bigg(\frac{y}{y_0}\bigg)^{\gamma_{\text{\fontsize{4}{4}\selectfont $0$}}}-1\bigg),
 \;\mbox{for all}\;
 y\ge y_0,
 \]
which in turn leads to the following contradiction $\frac{y_{\text{\fontsize{4}{4}\selectfont $0$}}F^\prime(y_{\text{\fontsize{4}{4}\selectfont $0$}})}{\gamma_{\text{\fontsize{4}{4}\selectfont $0$}}}\le\lim\frac{F(y)}{y^{\gamma_{\text{\fontsize{4}{4}\selectfont $0$}}}}=-\infty$ by our assumption on the growth of $F$ together with the fact that $\gamma_0<\gamma$.
\end{proof}

\noindent We end this section with the result used in the proof of \Cref{lemma:facelift}.
\begin{lemma}\label{lem:propbarFstar}
Let $\delta \neq 1$, and let $\overline{F}^\star$ be a solution of 
 \begin{equation}\label{barFstar}
 -\overline{F}^\star+(1-\delta)p\big(\overline{F}^\star\big)^\prime+F^\star(\delta p)
 =0,\;p< \frac{F^\prime(0)}{\delta}\mathbf{1}_{\{\delta>1\}},\;
 \overline{F}^\star\bigg(\frac{F^\prime(0)}{\delta}\mathbf{1}_{\{\delta>1\}}\bigg)=0.
 \end{equation}
Then $\overline{F}^\star\le F^\star$, $\overline{F}^\star$ is strictly concave and increasing.
\end{lemma}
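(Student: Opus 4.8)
The plan is to integrate the linear first--order ODE \eqref{barFstar} explicitly and to rewrite its relevant solution as a weighted average of dilates of $F^\star$, from which the three properties drop out at once. Write $\bar p:=\frac{F'(0)}{\delta}\mathbf{1}_{\{\delta>1\}}$ for the boundary point, so the domain is $\{p<\bar p\}$ (and $\bar p\le 0$, since $F'(0)\le 0$); and recall that, $F$ being strictly concave and $C^2$ on $(0,\infty)$ with $F'(0)\le 0$, its concave conjugate $F^\star$ is nondecreasing, equals $0$ on $[F'(0),\infty)$, and is strictly concave -- hence strictly increasing -- on $(-\infty,F'(0))$.

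First I would solve \eqref{barFstar} by the integrating factor $\mu(p):=|p|^{-1/(1-\delta)}$, which turns it into an exact equation for $\mu\,\overline{F}^\star$; integrating and imposing the boundary condition $\overline{F}^\star(\bar p)=0$ -- together, when $\delta<1$, with the requirement that $\overline{F}^\star$ have the growth allowed by \eqref{cond:Fstar}, which under the standing hypothesis $\gamma\delta>1$ forces the defining integral to be taken from $-\infty$ -- reproduces the closed form \eqref{eq:sol}--\eqref{Fbarstar}; in particular the solution is unique when $\delta>1$ and is the one of \eqref{Fbarstar} when $\delta<1$. The change of variables $x=p s$ in that integral then brings it to the shape
\[
\overline{F}^\star(p)=\int_{\mathcal I_\delta}F^\star(\delta\, p\, s)\,\nu_\delta(\mathrm{d}s),\qquad p<\bar p,
\]
where $\mathcal I_\delta:=(0,1)$ if $\delta>1$, $\mathcal I_\delta:=(1,\infty)$ if $\delta<1$, and $\nu_\delta$ is the measure on $\mathcal I_\delta$ with density proportional to $s^{-1-1/(1-\delta)}$. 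A short direct computation shows that $\nu_\delta$ is a probability measure whose barycentre is $\int_{\mathcal I_\delta}s\,\nu_\delta(\mathrm{d}s)=1/\delta$, hence $\int_{\mathcal I_\delta}\delta p s\,\nu_\delta(\mathrm{d}s)=p$. Equivalently, one can verify the displayed representation directly from \eqref{barFstar}, by plugging the right--hand side in, integrating by parts in the $s$--variable, and matching the endpoint term at $s=1$ with $-F^\star(\delta p)$; this also makes the selection of the solution for $\delta<1$ transparent, as it is exactly the requirement that the endpoint term at $s=\infty$ vanish, which needs $\gamma\delta>1$.

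Granting this representation, the three claims follow immediately. \emph{(i)} Since $F^\star$ is concave, Jensen's inequality yields $\overline{F}^\star(p)=\int F^\star(\delta p s)\,\nu_\delta(\mathrm{d}s)\le F^\star\big(\int\delta p s\,\nu_\delta(\mathrm{d}s)\big)=F^\star(p)$. \emph{(ii)} For each fixed $s>0$, $p\mapsto F^\star(\delta p s)$ is nondecreasing, so $\overline{F}^\star$ is nondecreasing; and for $p_1<p_2<\bar p$ the set of $s\in\mathcal I_\delta$ with $\delta p_1 s<\delta p_2 s<F'(0)$ has positive $\nu_\delta$--mass (one checks this using $p_1<\bar p$, distinguishing $\delta>1$ and $\delta<1$), and on it $F^\star$ is strictly increasing, so $\overline{F}^\star(p_1)<\overline{F}^\star(p_2)$. \emph{(iii)} Similarly, for $p_1\ne p_2$ in the domain there is a positive--$\nu_\delta$--mass set of $s$ on which the segment with endpoints $\delta p_1 s$ and $\delta p_2 s$ meets the strict--concavity region $(-\infty,F'(0))$ of $F^\star$, so that Jensen inside the integral is strict, and therefore $\overline{F}^\star$ is strictly concave. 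The one genuinely delicate point is the correct selection of the solution in the case $\delta<1$, together with the attendant convergence of the integral at $-\infty$ (equivalently at $s=\infty$): this is where $\gamma\delta>1$ and the growth bound \eqref{cond:Fstar} are essential. Everything else -- the normalisation $\nu_\delta(\mathcal I_\delta)=1$, the barycentre identity, and the elementary properties of $F^\star$ used for the strict statements -- is routine.
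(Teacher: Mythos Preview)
Your argument is correct and takes a genuinely different route from the paper. The paper never writes $\overline{F}^\star$ as a mixture; instead it treats the three claims separately. For $\overline{F}^\star\le F^\star$ it sets $\phi:=F^\star-\overline{F}^\star$, uses the ODE and concavity of $F^\star$ to obtain the differential inequality $\phi(p)\ge (1-\delta)p\,\phi'(p)$, and then applies the integrating factor $(-p)^{1/(\delta-1)}$ together with the boundary value (and, for $\delta<1$, the growth condition \eqref{growth1-delta}) to conclude $\phi\ge 0$. Strict concavity is obtained by differentiating \eqref{barFstar} once, eliminating $(\overline{F}^\star)'$ via the equation itself, and invoking strict concavity of $F^\star$ to get $(\delta-1)^2p^2(\overline{F}^\star)''<0$. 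Monotonicity is then deduced \emph{a posteriori} from strict concavity and $\overline{F}^\star\le F^\star\le 0$ with $\overline{F}^\star(f_\delta)=0$.

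Your representation $\overline{F}^\star(p)=\int_{\mathcal I_\delta}F^\star(\delta p s)\,\nu_\delta(\mathrm{d}s)$ with $\nu_\delta$ a probability and barycentre $1/\delta$ is a cleaner and more unified device: all three conclusions follow from a single formula and Jensen, and the strictness arguments are transparent once one notes that $\{s\in\mathcal I_\delta:\delta p s<F'(0)\}$ has positive $\nu_\delta$--mass for every $p<\bar p$. The trade--off is that your proof relies on having the explicit closed form \eqref{Fbarstar} (and on the correct selection when $\delta<1$, which indeed needs $\gamma\delta>1$), whereas the paper's differential--inequality argument for $\overline{F}^\star\le F^\star$ and its computation of $(\overline{F}^\star)''$ work directly from the ODE without solving it. Both proofs share the same implicit caveat for $\delta<1$: the lemma as stated does not single out a solution, and the conclusion fails for generic solutions through $0$; one must restrict to the solution with the growth of \eqref{cond:Fstar}, exactly as you do and as the paper does via \eqref{growth1-delta}.
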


\begin{proof}
Denote $\phi:=F^\star-\overline{F}^\star$, and notice that \Cref{barFstar} says that for any $p<F^\prime(0)/\delta\mathbf{1}_{\{\delta>1\}}=:f_\delta$
 \[
 \phi(p)
 =
 F^\star(p)-F^\star(\delta p)+(\delta-1)p\big(\overline{F}^\star\big)^\prime(p)
 \ge
 (1-\delta)p\phi^\prime(p),
 \]
by the concavity of $F^\star$. Now define for $p<f_\delta$, $\psi(p):=(-p)^{\frac1{\delta-1}}\phi(p)$, we have $\psi^\prime(p)=\frac{(-p)^{\frac1{\delta-1}-1}}{1-\delta}\big( \phi(p)- (1-\delta)p\phi^\prime(p)\big).$ We need to distinguish two cases, depending on $\delta>1$ or $\delta<1$. First, if $\delta>1$, we have that $\psi$ is non-increasing, and thus
\[
(-p)^{\frac{1}{\delta-1}}\phi(p)\geq \lim_{p\to f_\delta}\Big\{(-p)^{\frac{1}{\delta-1}}\phi(p)\Big\}=0,\; p<f_\delta,
\]
since $\overline{F}^\star(f_\delta)=F^\star(f_\delta)=0$ (recall that $F^\star$ is $0$ above $F^\prime(0)$, which is itself below $f_\delta$, since $\delta>1$). Similarly, when $\delta<1$, by arguing as in \eqref{growth1-delta}, we arrive at the conclusion $\phi\geq 0$, and thus that $\overline{F}^\star\le F^\star$, as desired. Next, by direct differentiation of \Cref{barFstar}, then substituting the expression of $(\overline{F}^\star)^\prime$ from \Cref{barFstar}, and finally using the strict concavity of $F^\star$, we deduce that for any $p<f_\delta$
 \begin{align*}
 (\delta-1)^2p^2(\overline{F}^\star)^{\prime\prime}(p)
  &=
 \delta(\delta-1)p(F^\star)^\prime(\delta p)
 -\delta\big(F^\star(\delta p)-\overline{F}^\star(p)\big)<  \delta\big(\overline{F}^\star(p)-F^\star(p)\big)\leq 0,
 \end{align*}
thus proving the strict concavity of $\overline{F}^\star$. Finally, since $\overline{F}^\star$ is strictly concave, remains below $F^\star$ which is increasing on $(-\infty,F^\prime(0)]$, and $0$ on $[F^\prime(0),f_\delta\wedge F^\prime(0)]$, then $\overline{F}^\star$ must also be increasing on its domain.
\end{proof}

\section{Comparison theorems}\label{sec:comp}

\subsection{The dynamic programming equation and its dual}
This section starts by providing the main technical result which was needed to justify the existence of a solution of the dynamic programming equation in \Cref{lem:existence}. Namely, we show that, despite the exploding feature of $F^\star$, the dynamic programming equation satisfies a comparison result.

\begin{lemma}\label{lem:comparison}{\rm (Comparison)}
Let $u$ and $v$ be respectively an upper-semicontinuous viscosity sub-solution and a lower-semicontinuous viscosity super-solution of \eqref{DPE:Sannikovrho}, such that for $\varphi\in\{u,v\}$ and for some $b>0$, $0\leq \varphi(y)-\overline F(y)\leq b\log\big(1+\log(1+y)\big),\;y\geq 0.$ Then $u\le v$ on $\R_+$.
\end{lemma}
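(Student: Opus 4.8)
The plan is to run the classical doubling‑of‑variables proof for the obstacle problem \eqref{DPE:Sannikovrho}, with three adaptations forced by the structure. Since $\mathfrak{I}_0(p,q)=+\infty$ whenever $q>0$, no $C^2$ function with strictly positive second derivative can touch a sub‑solution from above, so every sub‑solution $u$ is concave; being finite‑valued it is then locally Lipschitz, with a modulus controlled on compacts by the two barriers $\overline F$ and $\overline F+b\log(1+\log(1+\cdot))$, and this is what will tame the super‑linear term $F^\star$. Next, I would double the ``excesses'' $u-\overline F$ and $v-\overline F$ rather than $u$ and $v$ themselves; by hypothesis these are nonnegative and bounded above by $b\log(1+\log(1+y))$, which makes a very mild penalization enough to localise. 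Finally, at the contact set I would exploit that $\overline F$ is $C^2$ (\Cref{prop:Fbar}) and satisfies $\mathbf L\overline F\le 0$ on $\R_+$ (\Cref{rem:DPE}.(i)), using this smooth barrier in place of the super‑solution inequality there.

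\textbf{Doubling and localisation.} Assume, for contradiction, that $M:=\sup_{y\ge 0}(u-v)>0$, and fix $\bar y$ with $(u-v)(\bar y)>M/2$. For $\kappa\in(0,1)$ and $\eps>0$ set
\[
M_{\kappa,\eps}:=\sup_{x,y\ge 0}\Big\{(u-\overline F)(x)-(v-\overline F)(y)-\kappa\log(1+x)-\tfrac{1}{2\eps}|x-y|^2\Big\}.
\]
Because $0\le u-\overline F,\ v-\overline F\le b\log(1+\log(1+\cdot))$ and $\kappa\log(1+x)$ eventually dominates $\log(1+\log(1+x))$, the supremum is finite, attained at some $(x_\eps,y_\eps)$, bounded uniformly in $\eps$, and satisfies $M_{\kappa,\eps}\ge (u-v)(\bar y)-\kappa\log(1+\bar y)>M/4$ for $\kappa$ small. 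The $-\kappa\log(1+x)$ term confines $x_\eps$, and hence $y_\eps$, to a fixed compact set as $\eps\downarrow 0$; the usual estimates give $\eps^{-1}|x_\eps-y_\eps|^2\to 0$; and, since $u$ is concave and locally Lipschitz, the slope $p_\eps:=\eps^{-1}(x_\eps-y_\eps)$ stays bounded. By the theorem of sums \cite{crandall1992user} one produces second‑order jets whose Hessian parts $X_\eps,Y_\eps$ satisfy $X_\eps\le Y_\eps+o_\eps(1)$; adding back the smooth pieces $\overline F$ and $\kappa\log(1+\cdot)$ converts these into jets of $u$ at $x_\eps$ and of $v$ at $y_\eps$, with matching first‑order parts up to $O(\kappa)+o_\eps(1)$ and with the Hessian of the $u$‑jet no larger than that of the $v$‑jet up to $o_\eps(1)$.

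\textbf{Closing the estimate.} Since $M_{\kappa,\eps}\ge M/4>0$ forces $(u-\overline F)(x_\eps)>0$, the sub‑solution inequality for $u$ is active at $x_\eps$ and gives $\mathbf L\ge 0$ evaluated at $u(x_\eps)$ and its jet. For $v$ at $y_\eps$ there is a dichotomy: if $v(y_\eps)>\overline F(y_\eps)$, the super‑solution inequality gives $\mathbf L\le 0$ at $v(y_\eps)$ and its jet; if $v(y_\eps)=\overline F(y_\eps)$, one argues instead through the $C^2$ obstacle, comparing $u$ against $\overline F$ and using $\mathbf L\overline F\le 0$ (\Cref{rem:DPE}.(i)) together with $M_{\kappa,\eps}$ being controlled, up to $o_\eps(1)$, by the corresponding doubling of $u$ against $\overline F$. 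Subtracting the two inequalities, and using that $\mathbf L$ carries the zeroth‑order term $w$ with unit coefficient, one reaches
\[
\tfrac{M}{4}\ \le\ u(x_\eps)-v(y_\eps)+o_\eps(1)\ \le\ \delta\,\eps^{-1}|x_\eps-y_\eps|^2+O(\kappa)+o_\eps(1),
\]
the $F^\star$‑error being $O(\kappa)$ because $F^\star$ is concave, hence locally Lipschitz on the bounded range of the gradients at hand; the $\mathfrak{I}_0(\cdot,\cdot)^+$‑error being $o_\eps(1)+O(\kappa)$ by its monotonicity and local Lipschitzness on that range of gradients and on the relevant compact range of (strictly negative, by concavity of $\overline F$) Hessians, together with $X_\eps\le Y_\eps+o_\eps(1)$; and the penalty contributions being $O(\kappa)$ since $x/(1+x)\le 1$ and $(1+x)^{-2}\le 1$. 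Letting $\eps\downarrow 0$ and then $\kappa\downarrow 0$ gives $M/4\le 0$, a contradiction; hence $u\le v$ on $\R_+$.

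\textbf{Main difficulty.} The crux is the localisation: showing that $(x_\eps,y_\eps)$ cannot escape to infinity and that $p_\eps$ stays bounded. This is precisely where the quantitative hypothesis $\varphi\le\overline F+b\log(1+\log(1+y))$ is indispensable — it makes the mild penalty $\kappa\log(1+\cdot)$ simultaneously coercive enough to confine the doubled maxima, yet, being only logarithmically growing, negligible ($O(\kappa)$) once differentiated inside the super‑linear operator $F^\star$; without such control the chord slopes $p_\eps$ could blow up and the $F^\star$‑term could not be absorbed. One must also handle the degenerate sub‑cases — where $\mathfrak{I}_0^+$ vanishes, or where the obstacle is active — which is done through the $C^2$ regularity and strict concavity of $\overline F$ and, where needed, of the solutions themselves (\Cref{lem:strictconcave}).
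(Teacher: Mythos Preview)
Your concavity claim has the direction reversed. You argue that ``no $C^2$ function with strictly positive second derivative can touch a sub‑solution from above, so every sub‑solution $u$ is concave.'' But when a $C^2$ test function $\phi$ with $\phi''(y_0)>0$ touches $u$ from above, one gets $\mathfrak I_0(\phi',\phi'')^+=+\infty$, hence $\mathbf L\phi(y_0)=-\infty$, and the sub‑solution inequality $\min\{u-\overline F,\mathbf L\phi\}(y_0)\le 0$ holds \emph{trivially}. Convex test functions from above impose no constraint at all; sub‑solutions need not be concave. It is the \emph{super}‑solution $v$ that must be concave: a convex $\phi$ touching $v$ from \emph{below} yields $\mathbf L\phi=-\infty$, which contradicts the super‑solution requirement $\mathbf L\phi\ge 0$. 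Since your entire control of the super‑linear term $F^\star$ rests on a local‑Lipschitz bound that you justify through boundedness of $p_\eps$ --- and that boundedness is derived from the alleged concavity of $u$ --- this is a genuine gap: without it, $(F^\star)'(p)\sim -|p|^{\gamma^\star-1}$ is unbounded and the $O(\kappa)$ estimate for the $F^\star$‑error does not follow.

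The paper's proof sidesteps this entirely. It doubles $u$ and $v$ directly (not the excesses $u-\overline F$, $v-\overline F$) and, crucially, places the localising penalty $\eps\log(1+y)$ on the \emph{super‑solution} variable. The gradient of the test function for $v$ is then $p_\eps-\eps/(1+y_\eps)$, strictly smaller than the gradient $p_\eps$ for $u$. Because $F^\star$ is non‑decreasing, the difference $F^\star\big(\delta p_\eps-\delta\eps/(1+y_\eps)\big)-F^\star(\delta p_\eps)$ is automatically $\le 0$ --- with no need for any bound on $p_\eps$, no local‑Lipschitz argument, and no concavity of $u$. Your scheme could perhaps be repaired by exploiting the (correct) concavity of $v$ to bound $p_\eps$ via the $y$‑side first‑order condition, or by moving the penalty to the $y$‑side as the paper does, but as written the argument does not close.
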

\begin{remark}
The specific upper bound in the statement of {\rm \Cref{lem:comparison}} with an iterated logarithm is not important \emph{per se}. Indeed, the proof goes through as long as the upper bound is of the form $g(y)$ for some positive, increasing, strictly concave map $g$, null at $0$, growing strictly slower at $\infty$ than $\log(y)$.
\end{remark}
\begin{proof}[Proof of {\rm\Cref{lem:comparison}}]
We adapt the arguments of \citeauthor*{crandall1992user} \cite[Section 3]{crandall1992user} to our context. 

\medskip
\noindent \emph{Step $1$.} Let $\mu:=2\vee (\gamma+\nu)$, for some $\nu>0$. Define for any $\alpha>0$ and $\eps>0$ the map
\[
\psi_{\alpha,\eps}(x,y):=u(x) - v(y) -\frac\alpha\mu|x-y|^\mu-\eps \log(1+y),\; (x,y)\in\R_+^2,
\;\mbox{and}\;
M_{\alpha,\eps}:=\sup_{(x,y)\in\R_+^2}\psi_{\alpha,\eps}(x,y)
=\psi_{\alpha,\eps}(x_{\alpha,\eps},y_{\alpha,\eps}),
\]
for some $(x_{\alpha,\eps},y_{\alpha,\eps})\in\R_+^2$, by the growth assumptions on $u$ and $v$. Notice in addition that the supremum is attained on a compact set, then we can find for any $\eps>0$ a further subsequence, denoted by $(x^\eps_n,y^\eps_n)_{n\in\N}:=(x_{\alpha_n,\eps},y_{\alpha_n,\eps})_{n\in\N}$, converging to some $(\hat x^\eps,\hat y^\eps)$. By standard arguments from viscosity solution theory (see for instance \citeauthor*{crandall1992user} \cite[Proposition 3.7]{crandall1992user}), we have
\[
\hat x^\eps=\hat y^\eps,\; \lim_{n\to \infty}\alpha_n\big|x^\eps_n-y^\eps_n\big|^\mu=0,\; M_\eps:=\lim_{n\to\infty}M_{\alpha_n,\eps}= \sup_{ y\geq 0}\; (u-v)(y) -\eps \log\big(1+\hat y^\eps\big).
\]
Let us now assume that there is some $y_o>0$ such that $\eta:=(u-v)(y_o)>0$. Then, we have for any $n\in\N$ and $\eps>0$
\[
\eta-\eps\log(1+y_o)\leq M_{\alpha_n,\eps}=\psi_{\alpha,\eps}(x^\eps_{\alpha_n},y^\eps_{\alpha_n}).
\]
In particular, for $n$ sufficiently large, we have that $x^\eps_n$ and $y^\eps_n$ are both positive, and we assume for notational simplicity that we took the appropriate subsequence. Using Crandall--Ishii's lemma (see \citeauthor*{crandall1992user} \cite[Theorem 3.2]{crandall1992user}), we can find for each integer $n$, an $\R^2$-valued sequence $(X^\eps_n,Y^\eps_n)_{n\in\N}$ such that
\[
\big(\alpha_n(x^\eps_n-y^\eps_n)^{\mu-1},X^\eps_n)\in \overline J^{2,+}u(x^\eps_n),\; \bigg(\alpha_n(x^\eps_n-y^\eps_n)^{\mu-1}-\frac{\eps}{1+y_n^\eps},Y^\eps_n\bigg)\in \overline J^{2,-}v(y^\eps_n),
\]
with the notation $x^\phi:=\mathrm{sign}(x)|x|^\phi$ for all $\phi>0$ and $x\in\R$, and 
\begin{align*}
-\bigg(\frac1\lambda +\|C_n^\eps\|\bigg)\leq\begin{pmatrix}X^\eps_n & 0 \\
0 & -Y^\eps_n\end{pmatrix}\leq C^\eps_n\big(I_2+\lambda C^\eps_n\big),
\;\mbox{for all}\;\lambda>0,
\end{align*}
where $I_2$ is the two-dimensional identity matrix, and 
\[
 C_n^\eps:=a_n^\eps A+b_n^\eps B,\; a_n^\eps:=\alpha_n(\mu-1)|x^\eps_n-y^\eps_n|^{\mu-2},\; b_n^\eps:=\frac{\eps}{(1+y^\eps_n)^2},
 \; A:=\begin{pmatrix} 1 & -1\\-1&1\end{pmatrix},
 \; B:=\begin{pmatrix} 0 & 0\\0&1\end{pmatrix},
\]
and where we use the spectral norm for symmetric matrices. Take $\lambda=\|C_n^\eps\|^{-1}$, and multiply the above inequality by $(1,1)$ to the left and $(1,1)^\top$ to the right, this implies in particular that $X^\eps_n- Y^\eps_n\leq \frac{(b_n^\eps)^2}{\|C_n^\eps\|}+b_n^\eps,\; \mbox{for all}\;n\in\N.$

\medskip
\noindent \emph{Step $2$.} Denoting $G(y,p,q):=-\delta yp+ F^\star(\delta p)-\mathfrak I ( p, q)^+$, it follows from the sub-solution and super-solution properties of $u$ and $v$, that
\[
u(x^\eps_n)+G\big(x_n^\eps,\alpha_n(x^\eps_n-y^\eps_n)^{\mu-1},X^\eps_n\big)
\leq 0
\leq v(y^\eps_n)+G\bigg(y^\eps_n,\alpha_n(x^\eps_n-y^\eps_n)^{\mu-1}-\frac{\eps}{1+y_n^\eps},Y^\eps_n\bigg),
\;\mbox{for all}\;n\in\N^\star.
\]
We deduce that $\eta-\eps\log(1+y_o)\leq u(x^\eps_n)-v(y^\eps_n)\leq F_{n,\eps}$, where
\begin{align*}
F_{n,\eps}:= G\bigg(y^\eps_n,\alpha_n(x^\eps_n-y^\eps_n)^{\mu-1}-\frac\eps{1+y_n^\eps},Y^\eps_n\bigg) - G\big(x_n^\eps,\alpha_n(x^\eps_n-y^\eps_n)^{\mu-1},X^\eps_n\big).
\end{align*}
Now notice that since $\mathfrak I^+$ is Lipschitz continuous and non-decreasing with respect to its second variable, and since $F^\star$ is non-decreasing, we have for some $c_o>0$
\begin{align*}
F_{n,\eps}&= \delta \alpha_n|x^\eps_n-y^\eps_n|^{\mu}+\delta\eps\frac{y_n^\eps}{1+y^\eps_n}+F^\star\bigg(\delta \alpha_n(x^\eps_n-y^\eps_n)^{\mu-1}-\frac{\delta\eps}{1+y_n^\eps}\bigg)-F^\star\big(\delta \alpha_n(x^\eps_n-y^\eps_n)^{\mu-1}\big)\\
&\quad+\mathfrak{I}\big( \alpha_n(x^\eps_n-y^\eps_n)^{\mu-1},X^\eps_n\big)^+ -\mathfrak{I}\bigg( \alpha_n(x^\eps_n-y^\eps_n)^{\mu-1}-\frac{\delta\eps}{1+y_n^\eps},Y^\eps_n\bigg)^+ \\
&\leq  \delta \alpha_n|x_n^\eps-y_n^\eps|^{\mu}+\delta\eps+\mathfrak{I}\bigg( \alpha_n(x^\eps_n-y^\eps_n)^{\mu-1},Y^\eps_n+\frac{(b_n^\eps)^2}{\|C_n^\eps\|}+b_n^\eps\bigg)^+-\mathfrak{I}\bigg( \alpha_n(x^\eps_n-y^\eps_n)^{\mu-1}-\frac{\delta\eps}{1+y_n^\eps},Y^\eps_n\bigg)^+\\
&\leq  \delta \alpha_n|x_n^\eps-y_n^\eps|^{\mu}+(1+c_o)\delta\eps +c_ob_n^\eps +c_o\frac{(b_n^\eps)^2}{\|C_n^\eps\|}.
\end{align*}
We now want to let $n$ go to $\infty$, and will distinguish two cases. 

\medskip
$\bullet$ Either $\big(\|C_n^\eps\|\big)_{n\in\N}$ is unbounded, then by sending $n\to\infty$ along some subsequence, we get $\eta-\eps\log(1+y_o)\leq (1+c_o)\delta\eps+c_o\frac{\eps}{(1+\hat y^\eps)^2},$ which leads to a contradiction when $\eps$ goes to $0$.

\medskip
$\bullet$ Or $\big(\|C_n^\eps\|\big)_{n\in\N}$ is bounded, we take a converging subsequence, and notice that we then have for some $a^\eps\in\R$, $\|C_n^\eps\|\underset{n\to\infty}{\longrightarrow}\|C^\eps\|:=\|a^\eps A+b^\eps B\|,$ with $b^\eps:=\frac{\eps}{(1+\hat y^\eps)^2}.$
\begin{enumerate}[leftmargin=4.5em]
\item If $(a^\eps)_{\eps>0}$ is unbounded, we take again a subsequence and get a contradiction by letting $\eps$ go to $0$ in 
\begin{equation}\label{eq:comp}
\eta-\eps\log(1+y_o)\leq (1+c_o)\delta\eps +c_ob^\eps +c_o\frac{(b^\eps)^2}{\|C^\eps\|}.
\end{equation}
\item If otherwise $(a^\eps)_{\eps>0}$ is bounded and has a subsequence converging to some $a\neq 0$, then taking another subsequence, we obtain a contradiction by letting $\eps$ go to $0$ in \Cref{eq:comp}. Finally, if $\lim_{\eps \to 0}a^\eps=0$, then three cases can occur 
\begin{itemize}
\item[$(i)$] first, if $a^\eps\underset{\eps \to 0}{=}o\big(b^\eps\big)$, then
$\frac{(b^\eps)^2}{\|C^\eps\|}\underset{\eps\to 0}{\sim}b^\eps\underset{\eps\to 0}{\longrightarrow}0,$
and we conclude again by letting $\eps$ go to $0$ in \Cref{eq:comp};
\item[$(ii)$] if instead $b^\eps\underset{\eps \to 0}{=}o\big(a^\eps\big)$, then $
\frac{(b^\eps)^2}{\|C^\eps\|}\underset{\eps\to 0}{\sim}\frac{b^\eps}{a^\eps}b^\eps\underset{\eps\to 0}{\longrightarrow}0,$ and we conclude similarly;
\item[$(iii)$] finally, if $a^\eps\underset{\eps \to 0}{\sim}c b^\eps$ for some $c\neq 0$, then $\frac{(b^\eps)^2}{\|C^\eps\|}\underset{\eps\to 0}{\sim}\frac{b^\eps}{\|cA+B\|}\underset{\eps\to 0}{\longrightarrow}0,$
and we get once more a contradiction.
\end{itemize}
\end{enumerate}
\end{proof}
\noindent We next provide the comparison result for the approximating PDE. Recall the operator $\mathfrak I_n$ from \eqref{mathfrakIn}.

\begin{lemma}\label{lem:comparisonb}{\rm (Comparison in bounded domain)}
Fix some $n\in\N\cup\{+\infty\}$ and some $\bar y>0$. Let $u$ and $v$ be respectively an upper-semicontinuous viscosity sub-solution and a lower-semicontinuous viscosity super-solution of 
\begin{equation}\label{eq:dppeqn}
w(y)-\delta yw^\prime(y)+F^\star\big(\delta w^\prime(y)\big)-\mathfrak I_n\big( w^\prime(y), w^{\prime\prime}(y)\big)=0,\; y\in(0,\bar y),\; w(0)=0,\; w(\bar y)=\bar F(\bar y),
\end{equation}
such that $u(0)\leq v(0)$ and $u(\bar y)\leq v(\bar y)$. Then $u\le v$ on $[0,\bar y]$.

\end{lemma}

\begin{proof}
We argue exactly as in \emph{Step $1$} of the proof of \Cref{lem:comparison}, with $\mu=2$, $\eps=0$, and with maximisation in the definition of $M_\alpha=M_{\alpha,0}$ confined to the compact subset $[0,\bar y]^2$. With $G_n(y,p,q):=-\delta yp+ F^\star(\delta p)-\mathfrak I_n ( p, q)$, the sub-solution and super-solution properties of $u$ and $v$ induce
\[
u(x_m)+G_n\big(x_m,\alpha_m(x_m-y_m),X_m\big)
\leq 0
\leq v(y_m)+G_n\big(y_m,\alpha_m(x_m-y_m),Y_m\big).
\]
Then, whenever there is some $\eta:=(u-v)(y_o)>0$, for some $y_o>0$, we deduce that \begin{align*}
\eta
\leq 
u(x_m)-v(y_m)
\leq 
F_{m}
&:= G_n\big(y_m,\alpha_m(x_m-y_m),Y_m\big) - G_n\big(x_m,\alpha_m(x_m-y_m),X_m\big).
\\
&= \delta \alpha_m|x_m-y_m|^{2}
+\mathfrak{I}_n\big( \alpha_m(x_m-y_m),X_m\big) -\mathfrak{I}_n\big( \alpha_m(x_m-y_m),Y_m\big) 
\end{align*}
Now notice that since $\mathfrak I_n$ is Lipschitz continuous and non-decreasing with respect to its second variable, and since $F^\star$ is non-decreasing, it follows that $F_{m}\leq \delta \alpha_m|x_m-y_m|^{2}
\underset{m\to\infty}{\longrightarrow} 0$, contradiction!
\end{proof}

\noindent We finish this section by proving a comparison result for the convex dual equation of \Cref{eq:dppeqn}.
\begin{lemma}\label{lem:comparisondual}{\rm (Comparison dual equation)}
Fix some $n\in\N\cup\{+\infty\}$ and some $(\underline p,\bar p,\underline w,\bar w)\in\R^4$. Let $u$ and $v$ be respectively an upper semi-continuous viscosity sub-solution and a lower semi-continuous viscosity super-solution of 
\begin{equation}\label{eq:dppeqndual}
w^\star(p)-(1-\delta) p(w^\star)^\prime(y)-F^\star(\delta p)+\mathfrak I_n\bigg(  p,\frac{1}{(w^\star)^{\prime\prime}}\bigg)=0=0,\; p\in(\underline p,\bar p),\; w^\star(\underline p)=\underline w,\; w^\star(\bar p)=\bar w,
\end{equation}
such that $u(\underline p)\leq v(\underline p)$ and $u(\bar p)\leq v(\bar p)$. Then $u\le v$ on $[\underline p,\bar p]$.
\end{lemma}
\begin{proof}
We again argue exactly as in \emph{Step $1$} of the proof of \Cref{lem:comparison}, with $\mu=2$, $\eps=0$, and with maximisation in the definition of $M_\alpha=M_{\alpha,0}$ confined to the compact subset $[\underline p,\bar p]^2$, and $G^\star_n(p,r,q):=-(1-\delta) pr- F^\star(\delta p)+\mathfrak I_n \big( p,\frac{1} q\big)$. By the sub-solution and super-solution properties of $u$ and $v$, we see that
\[
u(x_m)+G_n\big(x_m,\alpha_m(x_m-y_m),X_m\big)
\leq 0
\leq v(y_m)+G_n\big(y_m,\alpha_m(x_m-y_m),Y_m\big),
\;\mbox{for all}\;m\in\N^\star.
\]
Then, whenever there is some $\eta:=(u-v)(p_o)>0$, for some $p_o>0\in[\underline p,\bar p]$, we deduce that 
\begin{align*}
\eta\leq u(x_m)-v(y_m)\leq
F_{m} &:= G_n\big(y_m,\alpha_m(x_m-y_m),Y_m\big) - G_n\big(x_m,\alpha_m(x_m-y_m),X_m\big)
\\
&=
(1-\delta) \alpha_m|x_m-y_m|^{2}+F^\star(\delta x_m)-F^\star(\delta y_m)+\mathfrak{I}_n\bigg( y_m,\frac1{Y_m}\bigg) -\mathfrak{I}_n\bigg( x_m,\frac1{Y_m}\bigg).
\end{align*}
Now $\mathfrak I_n$ is Lipschitz-continuous and non-decreasing with respect to its second variable, and $F^\star$ is Lipschitz continuous on compacts. Then $F_{m}\leq (1-\delta) \alpha_m|x_m-y_m|^{2}+c_o|x_m-y_m|$, for $c_o>0$. The contradiction follows by letting $m\to+\infty$.
\end{proof}

\subsection{The differentiated dual dynamic programming equation}\label{sec:diffDPPeq}
This section provides all the technical results needed to prove that the first-order derivative of the solution to the dual equation \eqref{DPEvstar} is concave under appropriate assumptions, whenever $\delta\in(1/\gamma,1)$. Throughout this section, we assume that $A=[0,\bar a]$, so that we can finally write the operator $\mathfrak I_n$ of \eqref{mathfrakIn} as
\[
\mathfrak I_n(p,q)=+\infty\mathbf{1}_{\{q>0\}}+\mathbf{1}_{\{q\leq 0\}}\sup_{a\in[a_n,\bar a]}\big\{a+\delta ph(a)+\eta\delta (h^\prime(a))^2 q\big\},\; (p,q)\in\R^2,
\;\mbox{with}\;
a_n:=(h^\prime)^{-1}(\beta\vee1/n).
\]
Using the envelop theorem, we also see that $\mathfrak I_n$ is {differentiable} and that for any $(p,q)\in\R\times\R_-$
\[
b_n(p,q):=\partial_p\mathfrak I_n(p,q)= \delta h\big(a^\star_n(p,q)\big),
\;\mbox{and}\;
\gamma_n(p,q):=\partial_q\mathfrak I_n(p,q)=\eta\delta \big(h^\prime(a^\star_n(p,q))\big)^2,
\]
where $a^\star_n(p,q):=(a_n\vee \hat a(p,q))\wedge \bar a$, and $\hat a(p,q)$ is the unique solution of the equation
\begin{equation}\label{eq:critiI}
1+\big(p\delta h^\prime+\eta q\delta [(h^\prime)^2]^\prime\big)\big(\hat a(p,q)\big)=0.
\end{equation}
Our first result shows that under appropriate assumptions on $h$, the functions $b_n$ and $\gamma_n$ are monotone with respect to both their arguments, and that $b_n$ is convex with respect to its first argument.
\begin{lemma}\label{lemma:convexityG}
Assume that $h$ is $C^4$ and that
\[
\min\big\{(h^{\prime\prime})^2+h^\prime h^{(3)},2(h^{\prime\prime})^2-h^\prime h^{(3)},2(h^{\prime\prime})^3-h^\prime h^{\prime\prime}h^{(3)}-(h^\prime)^2h^{(4)}\big\}\geq 0.
\]
Then for any $q\leq 0$, the maps $\R_-\ni p\longmapsto b_n(p,q)$ and $\R_-\ni p\longmapsto \gamma_n(p,q)$ are non-decreasing and convex {\color{black}on the domain where they are not constant}, while for any $p\leq 0$, the maps $\R_-\ni q\longmapsto b_n(p,q)$ and $\R_-\ni q\longmapsto \gamma_n(p,q)$ are non-decreasing.
\end{lemma}

\begin{proof}
First notice that under our assumptions, the maps $A\ni a\longmapsto h(a)$ and $A\ni a\longmapsto (h^\prime(a))^2$ are both increasing and convex. It is therefore enough to prove that for any $q\leq 0$, $\R_-\ni p\longmapsto \hat a(p,q)$ is non-decreasing and convex, and that for any $p\leq 0$, $\R_-\ni q\longmapsto \hat a(p,q)$ is non-decreasing. Differentiating \Cref{eq:critiI} with respect to $p$ and $q$, we get
\[
\partial_p\hat a(p,q)
=
\frac{h^\prime}
       {-ph^{\prime\prime}-\eta q[(h^{\prime})^2]^{\prime\prime}}
       (\hat a(p,q)),\; 
\partial_q\hat a(p,q)
=\frac{\eta (h^{\prime})^2}
         {-ph^{\prime\prime}-\eta q[(h^{\prime})^2]^{\prime\prime}}
         (\hat a(p,q)),
\]
from which we deduce immediately that for any $q\leq 0$, $\R_-\ni p\longmapsto \hat a(p,q)$ is non-decreasing, and for any $p\leq 0$, $\R_-\ni q\longmapsto \hat a(p,q)$ is non-decreasing as well. Next, we compute
\[
\partial^2_{pp}\hat a(p,q)
=
\frac{-ph^\prime\big(2(h^{\prime\prime})^2-h^\prime h^{(3)}\big)
        -2\eta q h^\prime\big(2(h^{\prime\prime})^3
        -h^\prime h^{\prime\prime}h^{(3)}-(h^\prime)^2h^{(4)}\big)}
      {-ph^{\prime\prime}-\eta q[(h^{\prime})^2]^{\prime\prime}}
(\hat a(p,q)),
\]
which is non-negative for any $(p,q)\in\R_-^2$ by assumption.
\end{proof}
\noindent We now recall, for any $n\in\N^\star\cup\{+\infty\}$, the dual ODEs from \Cref{eq:dppeqndual}
\begin{equation*}
w^\star(p)-(1-\delta) p(w^\star)^\prime(y)-F^\star(\delta p)+\mathfrak I_n\bigg(  p,\frac{1}{(w^\star)^{\prime\prime}}\bigg)=0=0,\; p\in(\underline p,\bar p),\; w^\star(\underline p)=\underline w,\; w^\star(\bar p)=\bar w,
\end{equation*}
We next write the ODE that is formally satisfied by $\phi:=-(w^\star)^\prime$. Since we will use these results on compact sets over which $w$ will be decreasing, we know by $\bar p<0$ and that $w^\star$ is non-increasing. By formal differentiation of the equation, we are reduced to searching for a negative and non-decreasing solution $\phi$ of the following equation
\begin{equation}\label{eq:DPEdiff}
\delta\phi(p)
-H_n\big(p,\phi^\prime(p),\phi^{\prime\prime}(p)\big)
= 0,\; p\in(\underline p,\bar p),\; \phi(\bar p)=-(w^\star)^\prime(\bar p),\; \phi(\underline p)=-(w^\star)^\prime(\underline p),
\end{equation}
where we introduced the map
\[
H_n(y,p,q):=-(1-\delta)yp +\delta(F^\star)^\prime(\delta y)-b_n\bigg(y,-\frac1p\bigg)-\gamma_n\bigg(y,-\frac1p\bigg)\frac{q}{p^2}.
\]
Our next result is a concavity property for $G_n$.
\begin{lemma}\label{lemma:fundconc}
Under the assumptions of {\rm \Cref{lemma:convexityG}} and assuming that $F^\prime$ is concave, we have that for any $p> 0$, the map $\R_-^2\times(0,+\infty)\ni (y,r,q)\longmapsto G_n\big(y,r,p,q^{-1}\big)$ is concave. 
\end{lemma}

\begin{proof}
$G_n$ is constituted of several terms. It is obvious that $(y,r,q)\longmapsto \delta r-(1-\delta)yp +\delta(F^\star)^\prime(\delta y)$ is concave since $F^\prime$ is concave. Moreover, by \Cref{lemma:convexityG}, $(y,r,q)\longmapsto-b_n(y,-1/p)$ is also concave. It remains to verify the concavity of the map $f(y,q):=-\gamma_n(y,-1/p)/ q$ for fixed $p< 0$. We have using \Cref{lemma:convexityG}
\begin{align*}
\partial_{yy}f(y,q)&=-2\eta \delta q^{-1}\bigg(\frac{\big((h^\prime)^2((h^{\prime\prime})^2+h^\prime h^{(3)})\big)(\hat a(y,-1/p))}{\big(rh^{\prime\prime}(\hat a(y,-1/p))+2\eta p((h^{\prime\prime})^2+h^\prime h^{(3)})(\hat a(y,-1/p))\big)^2}\\
&\quad+\big((h^\prime)^2 h^{\prime\prime}\big)(\hat a(y,-1/p))\frac{r\big(2(h^{\prime\prime})^2-h^\prime h^{(3)}\big)(\hat a(y,-1/p))+2\eta p\big(2(h^{\prime\prime})^3-h^\prime h^{\prime\prime}h^{(3)}-(h^\prime)^2h^{(4)}\big)(\hat a(y,-1/p))}{\big(rh^{\prime\prime}(\hat a(y,-1/p))+2\eta p((h^{\prime\prime})^2+h^\prime h^{(3)})(\hat a(y,-1/p))\big)^3}\bigg)\\
&
=\frac{-2\eta\delta(h^\prime)^2}
{q(rh^{\prime\prime}+\eta p[(h^{\prime})^2]^{\prime\prime})^3}
\big[ 3r(h^{\prime\prime})^3+2\eta p\big(3(h^{\prime\prime})^4+h^\prime (h^{\prime\prime})^2h^{(3)}+(h^\prime h^{(3)})^2-(h^\prime)^2 h^{\prime\prime}h^{(4)}\big)\big]
(\hat a(y,-1/p)).
\end{align*}
Direct computation of the remaining derivatives provides the following Hessian matrix for $f$
\[
\nabla^2f(y,q)
=
\frac{2\eta\delta {h^\prime}^2}
       {q}
\left( \begin{array}{cc}
-\frac{3r{(h^{\text{\fontsize{4}{4}\selectfont $\prime\prime$}})}^{\text{\fontsize{4}{4}\selectfont $3$}}
        +2\eta p(3({h^{\text{\fontsize{4}{4}\selectfont $\prime\prime$}}})^{\text{\fontsize{4}{4}\selectfont $4$}}+h^{\text{\fontsize{4}{4}\selectfont $\prime$}} (h^{\text{\fontsize{4}{4}\selectfont $\prime\prime$}})^{\text{\fontsize{4}{4}\selectfont $2$}}h^{\text{\fontsize{4}{4}\selectfont $(3)$}}
                            +(h^{\text{\fontsize{4}{4}\selectfont $\prime$}} h^{\text{\fontsize{4}{4}\selectfont $(3)$}})^{\text{\fontsize{4}{4}\selectfont $2$}}-(h^{\text{\fontsize{4}{4}\selectfont $\prime$}})^{\text{\fontsize{4}{4}\selectfont $2$}} h^{\text{\fontsize{4}{4}\selectfont $\prime\prime$}}h^{\text{\fontsize{4}{4}\selectfont $(4)$}})}
       {(rh^{\text{\fontsize{4}{4}\selectfont $\prime\prime$}}+\eta p({h^{\text{\fontsize{4}{4}\selectfont $\prime$}}}^{\text{\fontsize{4}{4}\selectfont $2$}})^{\text{\fontsize{4}{4}\selectfont $\prime\prime$}})^{\text{\fontsize{4}{4}\selectfont $3$}}}
& \times
\\[0.3em]
\frac{-h^{\text{\fontsize{4}{4}\selectfont $\prime\prime$}}}
       {q(rh^{\text{\fontsize{4}{4}\selectfont $\prime\prime$}}+\eta p({h^{\text{\fontsize{4}{4}\selectfont $\prime$}}}^{\text{\fontsize{4}{4}\selectfont $2$}})^{\text{\fontsize{4}{4}\selectfont $\prime\prime$}})}
&
-\frac{1}{q^{\text{\fontsize{4}{4}\selectfont $2$}}}
\end{array}
\right)
(\hat a(y,-1/p)).
\]
By \Cref{lemma:convexityG}, $f$ is partially concave in $y$ and $q$. Then, $\nabla^2 f\le 0$ if and only if det$(\nabla^2 f)\ge 0$, i.e. $\big(2r(h^{\prime\prime})^3
+2\eta p(2(h^{\prime\prime})^4+(h^\prime h^{(3)})^2-(h^\prime)^2 h^{\prime\prime}h^{(4)})\big)
(\hat a(y,-1/p))
\ge 0,$ which holds true under our assumptions.
\end{proof}

\noindent We finally prove the comparison theorem for \Cref{eq:DPEdiff}.
\begin{lemma}\label{lem:comparison2}{\rm (Comparison)}
Let $u$ and $v$ be respectively a viscosity sub-solution and a viscosity super-solution of \eqref{eq:DPEdiff} with $\delta<1$. Assume further that $u$, $v$ are non-negative, non-decreasing, and $u\leq v$ on $\{\underline p,\bar p\}$. Then $u\le v$ on $[\underline p,\bar p]$.
\end{lemma}

\begin{proof}
We again argue exactly as in \emph{Step $1$} of the proof of \Cref{lem:comparison}, with $\mu=2$, $\eps=0$, and with maximisation in the definition of $M_\alpha=M_{\alpha,0}$ confined to the compact subset $[\underline p,\bar p]^2$. Then, whenever $\eta:=(u-v)(p_o)>0$, for some $p_o\in\R$, it follows from the sub-solution and super-solution properties of $u$ and $v$, we have for any $m\in\N$ that \begin{align*}
\eta\leq \delta\big(u(x_m)-v(y_m)\big)\leq
F_{m}:= H_n\big(y_m,\alpha_m(x_m-y_m),Y_m\big) - H_n\big(x_m,\alpha_m(x_m-y_m),X_m\big).
\end{align*}
Now we can use the fact that $(F^\star)^\prime$ and $b_n$ are Lipschitz continuous on compact sets to deduce that there is some $c_o>0$ (which can change values from line to line but is independent of $m$) such that
\begin{align*}
F_{m}&\leq (1-\delta)\alpha_m(x_m-y_m)^2 +c_o|x_m-y_m| \\
&\quad+\gamma_n\bigg(x_m,-\frac1{\alpha_m(x_m-y_m)}\bigg)\frac{X_m}{\alpha_m^2(x_m-y_m)^2}-\gamma_n\bigg(y_m,-\frac1{\alpha_m(x_m-y_m)}\bigg)\frac{Y_m}{\alpha_m^2(x_m-y_m)^2}.
\end{align*}
Now using standard arguments as in \cite[Example 3.6]{crandall1992user}, we know that for any $(A,B)\in\R^2$, $A^2X_m-B^2Y_m\leq 2\alpha_m(A-B)^2.$ We thus deduce using in addition the Lipschitz continuity of $\gamma_n^{1/2}$ on compact sets
\begin{align*}
F_{m}&\leq (1-\delta)\alpha_m(x_m-y_m)^2 +c_o|x_m-y_m| \\
&\quad +2\alpha_m\bigg(\gamma_n^{1/2}\bigg(x_m,-\frac1{\alpha_m(x_m-y_m)}\bigg)\frac{1}{\alpha_m|x_m-y_m|}-\gamma_n^{1/2}\bigg(y_m,-\frac1{\alpha_m(x_m-y_m)}\bigg)\frac{1}{\alpha_m|x_m-y_m|}\bigg)^2\\
&\leq (1-\delta)\alpha_m(x_m-y_m)^2 +c_o|x_m-y_m| +\frac{c_o}{\alpha_m},
\end{align*}
and it suffices to let $m$ go to $+\infty$ to end the proof.
\end{proof}

{
\footnotesize
\bibliography{bibliographyDylan}

}

 \end{document}